\documentclass[12pt,a4paper,reqno]{amsproc}
\usepackage[margin=0.8in]{geometry}
\pdfoutput=1
\setcounter{secnumdepth}{2}
\setcounter{tocdepth}{2}
\makeatletter
\def\@tocline#1#2#3#4#5#6#7{\relax
  \ifnum #1>\c@tocdepth 
  \else
    \par \addpenalty\@secpenalty\addvspace{#2}%
    \begingroup \hyphenpenalty\@M
    \@ifempty{#4}{%
      \@tempdima\csname r@tocindent\number#1\endcsname\relax
    }{%
      \@tempdima#4\relax
    }%
    \parindent\z@ \leftskip#3\relax \advance\leftskip\@tempdima\relax
    \rightskip\@pnumwidth plus4em \parfillskip-\@pnumwidth
    #5\leavevmode\hskip-\@tempdima
      \ifcase #1
       \or\or \hskip 1em \or \hskip 2em \else \hskip 3em \fi%
      #6\nobreak\relax
    \dotfill\hbox to\@pnumwidth{\@tocpagenum{#7}}\par
    \nobreak
    \endgroup
  \fi}
\makeatother
\usepackage{xcolor}
\usepackage{soul, mathtools}
\usepackage{graphicx} 
\usepackage{cite}
\usepackage{rotating}
\usepackage{amsmath}
\usepackage{amssymb}
\usepackage{amsfonts}
\usepackage{amsthm}
\usepackage{mathrsfs, mathtools}
\usepackage{bm}
\usepackage[utf8]{inputenc}
\usepackage{tikz-cd}
\usepackage{soul}
\usepackage{bbm}
\usepackage[bookmarks, colorlinks, breaklinks]{hyperref} 
\hypersetup{
	linkcolor=blue,
	citecolor=blue,
	filecolor=black,
	urlcolor=blue}
\makeatletter
\renewcommand*{\eqref}[1]{%
  \hyperref[{#1}]{\textup{\tagform@{\ref*{#1}}}}%
}
\makeatother
\makeatletter
\usepackage[capitalise,nameinlink]{cleveref}
\renewcommand*{\cref}[1]{%
  \hyperref[{#1}]{\textup{\tagform@{\ref*{#1}}}}%
}
\makeatother


\numberwithin{equation}{section}

\renewcommand{\paragraph}[1]{\textbf{#1}:}
\theoremstyle{definition}
\newtheorem{thm}{Theorem}[section]

\newtheorem{props}[thm]{Proposition}
\newtheorem{lemma}[thm]{Lemma}
\newtheorem{rmk}[thm]{Remark}
\newtheorem{cor}[thm]{Corollary}
\newtheorem{theorem}[thm]{Theorem}


\newcommand{\taub}{\bar{\tau}}
\newcommand{\etabar}{\overline{\eta}}
\newcommand{\taubar}{\overline{\tau}}
\newcommand{\qbar}{\overline{q}}
\newcommand{\cbar}{\overline{c}}

\newcommand{\fstop}{\, .}

\newcommand{\zbar}{\overline{z}}
\newcommand{\mbar}{\overline{m}}
\newcommand{\nbar}{\overline{n}}
\newcommand{\jbar}{\overline{J}}
\newcommand{\dtil}{d^*}

\newcommand{\ket}[1]{|#1 \rangle}

\newcommand{\disq}{\mathscr{D}_\Lambda} 
\newcommand{\gentheta}[2]{\vartheta\!\left[ \genfrac{}{}{0pt}{}{#1}{#2}\right]\!}
\newcommand{\mlatt}{\mathcal{P}_\Lambda} 
\newcommand{\nviel}{\mathcal{E}} 
\newcommand{\repsp}{\mathfrak{h}_Q} %
\newcommand{\modsp}{\mathcal{M}_\Lambda} 
\newcommand{\genTD}{ \mathrm{O}_{Q}(p,q;\mathbb{Z})} 
\newcommand{\genTDa}{ \mathrm{O}_{Q,\alpha}(p,q;\mathbb{Z})} 
\newcommand{\mlattn}{\mathcal{P}_{\rm Narain}} 
\newcommand{\modspn}{\mathcal{M}_{\rm Narain}} 
\newcommand{\narTD}{ \mathrm{O}(D,D;\mathbb{Z})} 


\def\R{\mathbb{R}}
\def\C{\mathbb{\C}}

\newcommand{\SL}{\mathrm{SL}}
\newcommand{\PSL}{\mathrm{PSL}}
\newcommand{\GL}{\mathrm{GL}}
\newcommand{\rmO}{O}

\newcommand{\SU}{\mathrm{SU}}

\newcommand{\tl}{\tilde{l}}
\newcommand{\tz}{\tilde{z}}
\newcommand{\myS}{\zeta}
\newcommand{\jac}[2]{\left(\dfrac{#1}{#2} \right)}

\definecolor{DESYO}{RGB}{241,143,31}
\definecolor{DESYB}{RGB}{0,159,223}
\definecolor{DESYC}{RGB}{153,0,18}


\def\ie{\begin{equation}\begin{aligned}}
\def\fe{\end{aligned}\end{equation}}

\begin{document}

\title[Generalized Narain Theories Decoded]{{\Huge Generalized Narain Theories $\mathfrak{Decoded}$:}\\ \medskip {$\mathfrak{D}$iscussions on $\mathfrak{E}$isenstein series, $\mathfrak{C}$haracteristics, $\mathfrak{O}$rbifolds, $\mathfrak{D}$iscriminants \& $\mathfrak{E}$nsembles} in any $\mathfrak{D}$imension}
\author[Ashwinkumar, Kidambi, Leedom, Yamazaki]{\large Meer Ashwinkumar$^{N,a}$, Abhiram Kidambi$^{r,a}$, \\ Jacob M. Leedom$^{i}$, Masahito Yamazaki$^{n,a}$}
\begin{titlepage}
\vspace*{-1cm}
\hfill DESY 23-170\\
\vspace{1cm}
\maketitle

\begin{center}
\noindent \textit{$^N$ Albert Einstein Center for Fundamental Physics, Institute for Theoretical Physics \\ University of Bern, Sidlerstrasse 5, CH-3012 Bern, Switzerland
 }\\[0.2cm]
 \textit{$^a$ Kavli IPMU, Uni. Tokyo, Kashiwanoha 5-1-5, 277-8583 Kashiwa, Chiba, Japan
 }\\[0.2cm]
 \textit{$^r$
Erwin Schr\"odinger Institute, Boltzmanngasse 1A, 1090 Vienna, Austria
 }\\[0.2cm]
 \textit{$^i$ Deutsches Elektronen-Synchrotron DESY, Notkestr. 85, 22607 Hamburg, Germany
 }
\\[0.2cm]
 \textit{$^n$ Trans-Scale Quantum Science Institute, The University of Tokyo, Tokyo 113-0033, Japan 
 }
 \\[0.3cm]
 { MA: \href{mailto:meer.ashwinkumar@unibe.ch}{meer.ashwinkumar@unibe.ch} 
   AK: \href{mailto:kidambi@duck.com}{kidambi@duck.com} \\  
   JML: \href{mailto:jacob.michael.leedom@desy.de}{jacob.michael.leedom@desy.de} 
   MY: \href{mailto:masahito.yamazaki@ipmu.jp}{masahito.yamazaki@ipmu.jp}\\
[1cm]}
 \end{center}
 \begin{center}
     \textbf{Abstract}\\[1em]
 \end{center}
We study a class of newly-introduced CFTs associated with even quadratic forms of general signature,
which we call \textit{generalized Narain theories}. We first summarize the properties of these theories. We then consider orbifolds of these theories, thereby obtaining a large class of non-supersymmetric CFTs with exactly marginal deformations. We then discuss ensemble averages of such theories over their moduli space,
and obtain a modular form associated with the quadratic form and an element of the discriminant group. 
The modular form can be written as a Poincar\'e series, which contains novel invariants of lens spaces and suggests the interpretation of the holographic bulk as a theory of anyons. 
\thispagestyle{empty}
\end{titlepage}


\tableofcontents

\section{Introduction}
\label{sec:intro}

In physics one often makes progress by studying simple models
which capture the essence of complicated physical phenomena.
This comment may well be true in the theories of holography:
instead of complicated full-fledged string theory setups for holography, 
one can hope to make progress by first studying simpler solvable setups in detail and then try to embed the resulting setups back into string theory.

Ensemble averages of Narain Conformal Field Theories (CFTs)
as recently discussed in \cite{Afkhami-Jeddi:2020ezh,Maloney:2020nni} (see also \cite{Perez:2020klz,Dymarsky:2020bps,Dymarsky:2020pzc,Meruliya:2021utr,Datta:2021ftn,Benjamin:2021wzr,Meruliya:2021lul,Eberhardt:2021jvj,Kames-King:2023fpa})
can be regarded as perfect examples of such simplified models,
especially when we wish to explore ensemble averages in holography \cite{Saad:2019lba,Stanford:2019vob}. 
The ensemble averages of 2d CFTs can be computed exactly 
over the CFT moduli space, and the result in the bulk generates an exotic theory of quantum gravity, where
we have a sum over geometries and the contribution from each geometry is captured by an Abelian Chern-Simons theory.
It was moreover suggested that these holographic dualities can be embedded into string theory \cite{Ashwinkumar:2023jtz}\footnote{See e.g.\ \cite{Eberhardt:2021jvj,Heckman:2021vzx,Collier:2022emf,Baume:2023kkf} for a sample of papers which discuss the embedding of ensemble averages into string theory, albeit in different setups.}, based on an earlier claim of \cite{Gukov:2004id}.

In our previous paper \cite{Ashwinkumar:2021kav} we analyzed ensemble averages of generalized Narain CFTs associated with a general even integer lattice. Since the lattice is in general not unimodular, the torus partition function is not modular invariant, and this is probably one of the reasons why such theories have rarely been discussed in string theory literature.\footnote{In the literature, modular invariance is sometimes included in the definition of the two-dimensional CFT, and most of our theories are not CFTs under this definition. We find it useful, however, to be more flexible in the definition of the CFTs,
at least for the purposes of this paper. Note that even under the strict definition requiring modular invariance, our theories can be used as building blocks for modular-invariant theories, as discussed in~\Cref{sec:narain}.}
The generalized CFTs, however, are otherwise well-defined theories, and we can average over their
CFT moduli spaces, to obtain an exotic theory of gravity in the bulk described by a three-dimensional Abelian Chern-Simons theory. Here the non-modular-invariance of the theory 
is accounted for by the existence of the Chern-Simons term in the bulk. Moreover, the non-modular-invariance implies that we have a non-trivial set of anyons in the holographic bulk, and this leads to global symmetries in the bulk, some of which are emergent only 
after the ensemble averaging \cite{Ashwinkumar:2023jtz}.

In this paper we discuss ensemble averages of {\it orbifolds} of the Narain CFT \cite{Narain:1985jj,Narain:1986am} associated with a general quadratic form, generalizing the previous analysis \cite{Ashwinkumar:2021kav} of the un-orbifolded case
and \cite{Benjamin:2021wzr,Dong:2021wot} for even self-dual quadratic forms.\footnote{See also \cite{Kames-King:2023fpa} for another paper on ensemble averages for orbifolded Narain theories for special choices of unimodular quadratic forms.}
We will also discuss generalizations of the aforementioned ensemble averages where chemical potentials for flavor symmetries are turned on in the Narain CFT partition function.  

There are several motivations for studying orbifolds of generalized Narain CFTs.
Firstly, by studying orbifolds we can construct a new class of irrational (and rational) CFTs.
Secondly, it serves to understand the less-studied CFTs associated to general (indefinite) even quadratic forms.
Thirdly, it is rare to find non-supersymmetric CFTs with exactly marginal deformations,
and orbifolded Narain CFTs will provide further concrete examples to test the ideas of ensemble averages in holography.
Fourthly, orbifolding gauges part of the T-duality symmetry, and this can be regarded as 
gauging of the ``ensemble symmetries'' in holography, as articulated in \cite{Ashwinkumar:2023jtz} (see also \cite{Antinucci:2023uzq}).
Note that global symmetries are present only in exotic holographies involving non-Einsteinian gravity, such as the ones discussed for generalized Narain theories. Lastly, our analysis generates automorphic forms which are generalizations of the Siegel-Eisenstein series, and could be of independent mathematical interest.

The rest of this paper is organized as follows.
In \Cref{sec:narain} we first summarize generalized Narain theories of \cite{Ashwinkumar:2021kav}.
In \Cref{sec:orbifold} we discuss orbifolds of the Narain theories.
In \Cref{sec:ensemble} we then discuss ensemble averages of the torus partition functions
over the CFT moduli space.  We will also extend the discussion of ensemble averages by including chemical potentials for the flavor symmetries, and will encounter new Siegel-Jacobi forms associated with quadratic forms.
In \Cref{sec:bulk} we briefly comment on the holographic-dual theories.
We conclude with summaries and discussions in \Cref{sec:discussion}.
The appendices contain technical materials
needed for the understanding of the main text.

\section{Generalized Narain Theories}\label{sec:narain}

In this section, we describe the generalized Narain theories,
which contains the original Narain CFTs (toroidal CFTs) \cite{Narain:1985jj,Narain:1986am} as special examples.
While we mostly follow our previous paper \cite{Ashwinkumar:2021kav}, we expand the discussion to clarify the construction. 
Our conventions and notations are summarized in \Cref{appendix:conventions}. 

\subsection{Review of Standard Narain CFTs}
\label{sec:stdnarain}

We begin with a recasting of a familiar story---Narain CFTs. Narain CFTs can be constructed from toroidal compactifications of superstring theories, in particular the heterotic string\footnote{For a nice discussion whose conventions we follow, see~\cite{Nilles:2021glx}.}~\cite{Narain:1985jj,Narain:1986am}. These theories are defined via Narain lattices $\mlattn$ that are even and self-dual with respect to the inner product. 
For now, we focus on Narain CFTs with equal left- and right-moving central charges $c_L=c_R=D$ so that the signature of $\mlattn$ is $(D+,D-)$. The momenta $(p_L,p_R)$ depend on the moduli of the compactification: the background metric $G$, and the two-form $B$.
These moduli parameterize the marginal deformations of the CFT and take values in the typical Narain moduli space $\modspn$ defined below in~\cref{eq:narainmod}. At a point $m\in\modspn$, the components of a lattice vector $(p_L,p_R)\in\mlattn$ can be written\footnote{For $c_L\neq c_R$, there are also Wilson line moduli that will further modify these expressions.} 
\begin{align}
\begin{split}
    p_{L,i} &=  n_i+ \frac{1}{2}(G_{ij} - B_{ij})w^j  \;, \\ 
    p_{R,i} &=  n_i - \frac{1}{2}(G_{ij} + B_{ij})w^j  \;,
\end{split}
\end{align}
with $n_i, w^j\in \mathbb{Z}$ ($i,j = 1,\cdots D$).
The operator content of the Narain CFT consists of currents $J^M = \partial X^M$ and $\bar{J}^M = \bar{\partial}X^M$ and vertex operators $\mathcal{V} =\; :\exp\bigg(i p_L\cdot X_L + i p_R\cdot X_R\bigg):$, where the $X^M = X^M_L + X^M_R$ ($M=1,\dots ,D$) are the compact bosons of the CFT. 

The above can be reformulated in a more suggestive manner. The Narain lattice consists of discrete data in the form of lattice points and continuous data from the moduli. The discrete and continuous data can be separated using the \textit{Narain vielbein} $\nviel$. Re-expressing an element of $\mlattn$ as $(p_L,p_R)=\nviel \ell$, where $\ell\in\mathbb{Z}^{2D}$, we see that $p_L^2 - p_R^2 = \ell^T Q_{\rm Narain}\ell$, where $Q_{\rm Narain} = \nviel^T \mathbbm{1}_{D,D}\nviel$ is a quadratic form defining an inner product on the lattice
\begin{equation}
    \Lambda_{\rm Narain} := \left\{\ell\in \mathbb{Z}^{2D} \,\Big|\,  \ell^T Q_{\rm Narain}\ell \in 2\mathbb{Z}  \right\} \;.
\label{eq:narddata}
\end{equation}
This lattice is the discrete data defining the Narain CFT. The moduli, which constitute the continuous data, define $\nviel$ and can packaged in a secondary quadratic form $H=\nviel^T\nviel$ on the lattice, which acts as  $\ell^T H\ell = p_L^2 + p_R^2$. To make this more concrete, we can consider the torus partition function of Narain CFTs,
\begin{equation}
    \lvert\eta(\tau)\rvert^{2D}Z_{\rm Narain}(\tau;m): = \sum_{(p_L,p_R)\in\; \mlattn} q^{p_L^2/2} \bar{q}^{p_R^2/2} =  \sum_{\ell\in \mathbb{Z}^{2D}} \exp\bigg[i\pi\tau_1 Q(\ell) - \pi \tau_2 H(\ell)\bigg]~,
\end{equation}
where $\eta(\tau)$ is the Dedekind eta function, $m$ collectively denotes the CFT moduli and $q= e^{2\pi i \tau}$, with $\tau = \tau_1 + i\tau_2, \ \tau_2 > 0$ the modular parameter of the torus. Thus we see that the discrete and continuous data can be directly used to write the partition function of the Narain CFT. They can also be used to construct the vertex operators in the canonical manner. 

Finally, we come to the Narain moduli space itself, $\modspn$. An arbitrary Narain lattice can be obtained
by the left action of an element of $O(D,D;\mathbb{R})$ on a reference Narain lattice. These lattices define unique Narain CFTs only up to independent $O(D)$ rotations of the left- and right-moving momenta and outer automorphisms of the lattice:
\begin{equation}\label{eq:TDnarain}
    O(D,D;\mathbb{Z}) := \left\{ \Sigma \in \GL(2D,\mathbb{Z})   \,\Big|\, \Sigma^T Q_{\rm Narain}\Sigma = Q_{\rm Narain}   \right\}~,
\end{equation}
which is precisely the T-duality group of the Narain CFT. Then the Narain moduli space is given by the usual double quotient space
\begin{equation}
    \modspn := O(D,D;\mathbb{Z})\backslash O(D,D;\mathbb{R})/ O(D;\R)\times O(D;\R) ~.
\label{eq:narainmod}
\end{equation}
Finally, we note that T-duality manifests as an invariance of the partition function under elements of $O(D,D;\mathbb{Z})$, i.e.
\begin{equation}\label{eq:nparttrans}
    \begin{aligned}
        Z_{\rm Narain}(\tau;g\cdot m) 
        &= Z_{\rm Narain}(\tau;m) \;,
    \end{aligned}
\end{equation}
for all $\Sigma\in O(D,D;\mathbb{Z})$.

\subsection{Defining Data for Generalized Narain Theories}
\label{sec:gennarain}
We now formalize the concepts of the previous subsection and utilize them to construct generalizations of the Narain CFTs. More specifically, we first define the discrete and continuous data (moduli) for a class of theories and use this data to define partition functions and operators of the CFTs. 
In analogy with~\cref{eq:narddata}, we first consider a lattice associated with 
an even, integral, quadratic form $Q$ of signature $(p+, q-)$
\begin{equation}
Q(\ell)=\sum_{i,j=1}^{p+q} Q_{ij}\ell^i \ell^j \in 2\mathbb{Z},  ~~ Q_{ij} \in \text{Mat}_{(p+q) \times (p+q)}(\mathbb Z), ~ \ell \in \mathbb{Z}^{p+q}.
\end{equation}
We shall henceforth refer to an even, integral quadratic form simply as a quadratic form, unless specified otherwise.
A quadratic form corresponds to an even integer lattice, 
which we denote by $\Lambda$:\footnote{An integer lattice is simply a freely generated $\mathbb Z-$module.}
\begin{equation}
\Lambda := \left\{ \ell \in \mathbb{R}^{p+q} \,\Big|\, Q(\ell) \in 2\mathbb{Z} \right\} \;.
\label{eq:discdat}
\end{equation}
Owing to the relation between quadratic forms and integer lattices, one may think of an even quadratic form as a norm function/inner product for vectors in an integer lattice.
In other words, an even quadratic form is bilinear $\displaystyle Q: \Lambda \times \Lambda \rightarrow 2\mathbb Z$ defined as
\begin{equation}
Q(\ell,\ell'):=\frac{Q(\ell+\ell')-Q(\ell)-Q(\ell')}{2}=\sum_{i,j=1}^{p+q} Q_{ij}\ell^i \ell'^j \in 2\mathbb{Z} \;.
\end{equation}
Unlike the previously mentioned self-dual lattices $\Lambda_{\rm Narain}$ of typical Narain CFTs, we will not restrict $\Lambda$ to be self-dual. The dual lattice
\begin{equation}\label{Lambda_dual}
\Lambda^*:= \left\{ x \in \mathbb{R}^{p+q} \,\Big|\, Q(x, \ell) \in  \mathbb{Z} \quad  (\forall \ \ell \in \Lambda) \right\} \;, \ \text{with} \ \Lambda\subset \Lambda^* ~.
\end{equation}
This implies that $\Lambda$ has a non-trivial discriminant group 
\begin{equation}\label{eq:discr}
\disq := \Lambda^* / \Lambda  \;.
\end{equation}
The elements of $\disq$ are equivalence classes $[x]$ such that if $[x]=[y]$ for $x,y\in\Lambda^*$, then $x-y\in\Lambda$. We resort to a mild abuse of notation and use $\alpha$ to refer to both elements of $\disq$ as well as representatives of the equivalence classes. We will specify the distinction in situations where it is important. We will also let $[0]$ denote the equivalence class of the zero element of $\Lambda^*$ so that all representatives of $[0]$ are elements of $\Lambda$. 

For the continuous data, we must define a second quadratic form acting on the elements of $\Lambda$. To define sensible theta functions that we will use to construct partition functions, we must look at elements of the representation space $\mathfrak{h}_Q$ of the quadratic form $Q$ defining $\Lambda$~\cite{Siegel_Lecture}:
\begin{equation}
    \repsp := \left\{H\in \GL(p+q,\mathbb{R}) \,\Big|\, H Q^{-1}H = Q \right\} \;.
\label{eq:repsp}
\end{equation}
One could solve the condition on $H$ in~\cref{eq:repsp} and parameterize the solutions --- the resulting expression for $H$ will depend on $pq$ number of parameters, giving dim$_{\mathbb{R}}\repsp = pq$. This is equivalent to diagonalizing the quadratic form $Q$ via the Narain vielbein\footnote{where ``viel" $=p+q$.} $\nviel$ as $Q = \nviel^T \mathbbm{1}_{p,q}\nviel$\footnote{The signature of the diagonalized quadratic form remains the same due to Sylvester's law of inertia~\cite{sylvester1852xix}.}  
and defining $H=\nviel^T\nviel$, as described in \cref{sec:stdnarain}. Clearly, two distinct vielbeins $\nviel$ and $\nviel^\prime$ are related by an element $\mathfrak{R}\in \rmO(p,q;\mathbb{R})$ as $\nviel^\prime = \mathfrak{R}\nviel$ and so there exists $H,H^\prime\in \repsp$ such that $H^\prime = \mathfrak{R}^T H \mathfrak{R}$.

To properly define the continuous data of the generalized Narain CFTs, we must describe the moduli space $\mathcal{M}_\Lambda$ that defines distinct quadratic forms $H$ and thereby distinct CFTs. Due to the existence of a non-trivial discriminant group, this is a subtle point we return to below. For now, we simply assume the existence of this moduli space and state that distinct Narain CFTs correspond to points $m\in\mathcal{M}_{\Lambda}$. 

With the discrete and continuous data $\Lambda$ and $\repsp$ in hand, we can now define the \textit{generalized Narain lattice} $\mlatt$ associated to $\Lambda$ (and hence $Q$) as
\begin{equation}\label{eq:genmom}
    \mlatt := \left\{ \mathfrak{p}=(p_L, p_R) = \mathcal{E}\ell \,\Big|\, \ell\in\Lambda\;\; \&\;\; H = \nviel^T\nviel \in\repsp \right\} \;.
\end{equation}
The inner product on $\mathcal{P}_{\Lambda}$ is induced from that on $Q(\ell)$ as
\begin{equation}
    \begin{aligned}
        \mathbb{I}(\mathfrak{p}) &= \mathfrak{p}^T\mathbbm{1}_{p,q}\mathfrak{p} = p_L^2 - p_R^2
                        = \ell^T Q \ell \in 2\mathbb{Z}
    \end{aligned}
\end{equation}
for all $\mathfrak{p} = \nviel\ell\in\mlatt$. Thus we see clearly that the introduction of the Narain vielbien allows a decomposition into left- and right-movers in analogy with standard Narain CFTs. Indeed we also see that $H(\ell) = p_L^2 + p_R^2$. A convenient way to encapsulate the decomposition of the momenta is via the quantities 
\begin{equation}
    Q_L(\ell):=  \frac{Q}{2}+\frac{H}{2} \;, \quad Q_R(\ell):=  \frac{H}{2} - \frac{Q}{2} \;,
\label{eq:Qdecomp}
\end{equation}
such that $Q_L(\ell) = p_L^2$ and $Q_R(\ell) = p_R^2$.

We also define the dual Narain lattice 
\begin{equation}\label{eq:genmomdual}
    \mathcal{P}_\Lambda^* := \left\{ \rho = (\rho_L, \rho_R) = \mathcal{E} \alpha \,\Big|\,  \alpha\in\Lambda^*\;\; \&\;\; H := \nviel^T\nviel \in\repsp \right\} \;,
\end{equation}
whose elements satisfy the expected property $\mathbb{I}(\rho,\mathfrak{p}) = \rho^T \mathbbm{1}_{p,q}\mathfrak{p} = x^T Q\ell \in\mathbb{Z}$ for all $\rho=\nviel \alpha\in\mlatt^*$ and $\mathfrak{p}=\nviel\ell\in\mlatt$. 
Let us also define a subset of $\mathcal{P}_\Lambda^*$ associated with a particular element $[\alpha]$ of $\mathscr{D}_{\Lambda}$:
\begin{equation}
    \mathcal{P}^*_{\Lambda, [\alpha]} := \left\{ \rho = (\rho_L, \rho_R) = \mathcal{E} \beta \,\Big|\,  \beta\in\Lambda^*\;\; \&\;\;  [\alpha]=[\beta] \in \mathscr{D}_{\Lambda} \;\; \&\;\;  H := \nviel^T\nviel \in\repsp \right\} \;,
\end{equation}
Note that this is not a lattice in general, and satisfies 
$\mathcal{P}_{\Lambda^{*}, [\alpha]}  \cdot \mathcal{P}_{\Lambda^{*}, [\beta]}
\subset \mathcal{P}_{\Lambda^{*}, [\alpha+\beta]}$.

\begin{figure}
\centering
\begin{tikzpicture}
\node at (0,3.8) {\scalebox{1}{\textcolor{black}{Discrete Data: }$\textcolor{DESYO}{\Lambda}\Rightarrow \textcolor{DESYC}{\disq}$}};
\node at (6.5,3.8) {Continuous Data: \textcolor{DESYB}{$\modsp$}};
\node at (5.9,1.6) {$m$};
\node at (-1,-2.3) {Vertex Operators};
\node at (-1,-2.9) {$\mathcal{V}^{[\alpha]}_{(k_L,k_R)}$};
\node at (2.9,-2.3) {Hilbert Space};
\node at (3.1,-2.9) {$\mathcal{H}_\Lambda$};
\node at (6.9,-2.3) {Partition Functions};
\node at (6.9,-2.9) {$Z^\Lambda_\alpha$};

{
\begin{scope}
\node[] (L1) at (1.4,2) {}; 
\node[] (L2) at (0,0.5) {}; 

\node[] (M1) at (4.5,2) {}; 
\node[] (M2) at (5.4,0.5) {}; 

\node[] (N1) at  (0.7,-0.7)  {};
\node[] (N2) at  (5,-0.7)  {};

\node[] (V1) at (0.1,-0.9) {};
\node[] (V2) at (-1.1,-2.1) {};
\node[] (V3) at (0.5,-2.3) {};

\node[] (H1) at (3,-1.1) {};
\node[] (H2) at (3,-2.1) {}; 
\node[] (H3) at (1.8,-2.3) {};
\node[] (H4) at (4.1,-2.3) {};

\node[] (P1) at (5.7,-0.9) {};
\node[] (P2) at (7,-2.1) {};
\node[] (P3) at (5.2,-2.3) {};

\draw (L1) edge [->,thick]  (M1);
\draw (M2) edge [->,thick]  (N2);
\draw (L2) edge [->,thick]  (N1);
\draw (V1) edge [->,thick]  (V2);
\draw (H1) edge [->,thick]  (H2);
\draw (P1) edge [->,thick]  (P2);
\draw (V3) edge [->,thick]  (H3);
\draw (H4) edge [->,thick]  (P3);
\end{scope}
}

{
\begin{scope} [scale=1,thick,rotate = 45]
\draw[step=0.7cm] (0.1,0.1) grid (2.75,2.75);
\foreach \p in {(0.7,0.7),(1.4,0.7),(2.1,0.7)}
\fill[DESYO] \p circle(.1);
\foreach \p in {(0.7,1.4),(1.4,1.4),(2.1,1.4)}
\fill[DESYO] \p circle(.1);
\foreach \p in {(0.7,2.1),(1.4,2.1),(2.1,2.1)}
\fill[DESYO] \p circle(.1);

\foreach \p in {(1.4,1.4), (1.4-0.35,1.4),(1.4-0.35,1.4+0.35),(1.4-0.35,1.4-0.35),(1.4+0.35,1.4),(1.4+0.35,1.4),(1.4+0.35,1.4-0.35),(1.4+0.35,1.4+0.35),(1.4,1.4+0.35),(1.4,1.4-0.35)}
\fill[DESYC] \p circle(.07);
\draw[line width=0.7pt,color=DESYC] (1.4-0.35,1.4+0.35) -- (1.4+0.35,1.4+0.35) -- (1.4+0.35,1.4-0.35) -- (1.4-0.35,1.4-0.35) -- (1.4-0.35,1.4+0.35);
\end{scope}
}

{
\begin{scope} [shift={(4.7,0.5)},scale=1,rotate around={58:(1.4,1.4)}]
\foreach \p in {(0.8,1.6)}
\fill[DESYB] \p circle(.1);

\draw[thick] (0.7,2.8) to [out=-60,in=-120]  (2.1,2.8)
 to [out=-90, in=180] (2.8,2.1) to [out=210, in=150] (2.8,0.7) to [out=180, in=90 ] (2.1,0) to [out=120, in=60]  (0.7,0) to [out=90, in=0] (0,0.7) to [out=30, in=-30] (0,2.1) to [out=0, in=-90] (0.7,2.8);
\end{scope}
}

{
\begin{scope} [shift= {(2.8,-1)}, scale=0.7]
{\node[] (N) at (0,0) {\scalebox{1}{ Generalized Lattices $\mlatt$ and $\mlatt^*$}};}
\end{scope}
}
\end{tikzpicture}
\caption{Logical flow of the construction of the generalized CFTs described in this section. A choice of integral lattice $\textcolor{DESYO}{\Lambda}$ (\cref{eq:discdat}) defines a discriminant group $\textcolor{DESYC}{\disq}$ (\cref{eq:discr}) as well as the moduli space $\textcolor{DESYB}{\modsp}$ (\cref{eq:genmod}) of the CFT. This data, with a choice of $m\in\textcolor{DESYB}{\modsp}$, can then be used to define the generalized Narain lattice $\mlatt$ (\cref{eq:genmom}) and its dual $\mlatt^*$ (\cref{eq:genmomdual}), vertex operators (~\cref{eq:vertexop}), the Hilbert space $\mathcal{H}_\Lambda$, and partition functions $Z^{\Lambda}_{\alpha}(\tau, \taubar;m)$ (~\cref{ZQh}).}
\end{figure}

Similar to standard Narain CFTs, the primary operators of our generalized theories are built from compact chiral bosons with periodicity conditions
\begin{equation}
    \begin{pmatrix}
        X^a_L\\ X^{\mbar}_R
    \end{pmatrix} \sim \begin{pmatrix}
        X^a_L\\ X^{\mbar}_R
    \end{pmatrix} + \nviel L 
\end{equation}
with $L\in\mathbb{Z}^{p+q}$, $a=1,2,\dots, p$ and $\mbar= 1,\dots ,q$. The operators consist of holomorphic and anti-holomorphic currents  
\begin{equation}
    \begin{aligned}
        J^a(z) &:= i\partial X^a_L(z) \;,\\
        \jbar^{\mbar}(\zbar) &:= -i\bar{\partial} X^{\mbar}_R(\bar{z}) \;,
    \end{aligned}
\end{equation}
and vertex operators 
\begin{equation}\label{eq:vertexop}
   \mathcal{V}^{[\alpha]}_{(k_L,k_R)}(z,\zbar) :=\;\; :\exp\bigg(    i k_L\cdot X_L(z) + i k_R\cdot X_R(\zbar)\bigg): \;,
\end{equation}
where $: \, :$ denotes the normal ordering, $(k_L^a,k_R^{\mbar})\in \mathcal{P}^*_{\Lambda, [\alpha]}$, and the dot products are simply $k_L\cdot X_L = \sum_{a=1}^pk_L^a X^a_L$ and $k_R\cdot X_R = \sum_{m=1}^q k^m_R X_R^{\mbar}$. We also can define holomorphic and anti-holomorphic stress tensors
\begin{equation}
    \begin{aligned}
        T(z) &:=\;\; \sum_a:J^a(z)J^a(z): \;,\\
        \bar{T}(\zbar) &:=\;\; \sum_{\mbar}:\jbar^{\mbar}(\zbar)\jbar^{\mbar}(\zbar): \;,
    \end{aligned}
\end{equation}

The currents have the standard operator product expansions (OPEs)
\begin{equation}
    J^a(z)J^b(0) = \frac{\delta^{ab}}{2z^2}+\cdots \quad  \jbar^{\mbar}(\zbar)\jbar^{\bar{n}}(0)
    = \frac{\delta^{\mbar\bar{n}}}{2\zbar^2}+ \cdots  \;,
\end{equation}
which give the vertex operator OPE
\begin{equation}
    \mathcal{V}^{[\alpha_1]}_{(k_L,k_R)}(z,\zbar) \mathcal{V}^{[\alpha_2]}_{(k^\prime_L,k^\prime_R)}(0,0) = z^{k_L\cdot k_L^\prime}\zbar^{k_R\cdot k^\prime_R}\mathcal{V}^{[\alpha_1+\alpha_2]}_{(k_L+k^\prime_L,k_R+k^\prime_R)}(0,0)+\cdots \;.
\end{equation}
The existence of $\mathcal{P}_{\Lambda}$ guarantees the closure of this OPE.
Note while the vertex operators $\mathcal{V}^{[\alpha]}_{(k_L,k_R)}$
are mutually local for $\alpha=0$, general vertex operators $\mathcal{V}^{[\alpha]}_{(k_L,k_R)}$ with $\alpha\ne 0$ are not necessarily mutually local.

The Virasoro algebra can be realized from the mode expansion of the stress-energy tensor in the usual way, and we denote the operators by $L_h$ and $\widetilde{L}_h$.

We can define states in the theory using the typical state-operator correspondence. Apart from oscillator modes, we have momentum modes
\begin{equation}\label{eq:momstates}
    \ket{\alpha; k_L,k_R} = \lim_{z,\zbar\rightarrow 0} \mathcal{V}^{[\alpha]}_{(k_L,k_R)}(z,\zbar)\ket{0} \;,
\end{equation}
which satisfy the usual relations
\begin{equation}
    L_0\ket{\alpha; k_L,k_R} = \frac{k_L^2}{4}\ket{\alpha; k_L,k_R} \;, 
    \qquad  
    \widetilde{L}_0\ket{\alpha; k_L,k_R} = \frac{k_R^2}{4}\ket{\alpha; k_L,k_R} \;.
\end{equation}
One can also define higher oscillator states in the usual way, and the combination of momenta and oscillator states defines the Hilbert space $\mathcal{H}_\Lambda$ of the theory.  

From the above, we have obtained definitions for primary operators, the stress-energy tensor, sensible and closed OPEs, and states in the theory. 
It is also clear that the above is a unitary CFT with the typical inner product between the momenta states defined in~\cref{eq:momstates}. One may also wonder about the modular invariance of the partition function. We define the partition functions in the next section and return to this question there.

Finally, one might be interested in the question of formulating the existence of generalized CFTs in a mathematically rigorous fashion.
For this purpose, one needs to choose a mathematical definition of CFT\footnote{Discussions on toroidal CFTs can also be found in \cite{Wendland:2000ye, Kidambi:2022wvh, Kapustin:2000aa, Kontsevich:2000yf}.} \footnote{While the Vertex Operator Algebras (VOAs)
have often been used in the mathematical formulations of CFT, this applies only to the chiral (anti-chiral) part of the CFTs, and is insufficient for our purposes here.}.
One attempt in this direction is \cite{Moriwaki:2020cxf}, which defined ``full VOAs'' for CFTs on the plane
and showed that generalized Narain CFTs associated with a lattice 
satisfy the axioms therein. The mathematical existence of generalized Narain CFTs is proven in this sense.

\subsection{Torus Partition Functions}

The torus partition function is given by the trace over the Hilbert space $\mathcal{H}_{\Lambda}$  of the theory on the spatial $S^1$:
\begin{align}
Z^\Lambda(\tau, \taubar;m) =\textrm{Tr}_{\mathcal{H}_{\Lambda}} \left(q^{L_0-\frac{c}{24}} \qbar^{\overline{L}_0-\frac{\cbar}{24}} \right) =
\frac{1}{\eta(\tau)^p \etabar(\taubar)^q} \sum_{\ell \in \Lambda} e^{i \pi \tau Q_L(\ell)- i \pi \taubar Q_R(\ell)} \;,
\label{ZQ}
\end{align}
where $q := \exp(2\pi i \tau)$ and $\qbar := \exp(-2\pi i \taubar)$, $Q_L$ and $Q_R$ are defined in terms of the Narain data as in~\cref{eq:Qdecomp}, and we have explicitly shown the dependence on the moduli $m \in \mathcal{M}_{\Lambda}$.

The modular transformations of the partition function $Z^{\Lambda}$ is better
described in terms of building blocks
\begin{align}
Z^{\Lambda}_{\alpha}(\tau, \taubar;m) =
\frac{\vartheta^{\Lambda}_{\alpha}(\tau, \taubar;m)}{\eta(\tau)^p \etabar(\taubar)^q} \;,
\label{ZQh}
\end{align}
which serve as the basis for more general partition functions.

Here the theta function $\vartheta^{\Lambda}_{\alpha}$ is given by 
\begin{equation}\label{eq:gentheta}
\begin{aligned}
    \vartheta^{\Lambda}_{\alpha}(\tau, \taubar;m) 
    &:= \sum_{\ell \in {\Lambda+\alpha}} e^{i \pi \tau Q_L(\ell)- i \pi \taubar Q_R(\ell)} 
    = \sum_{\ell \in {\Lambda+\alpha}} e^{i \pi \tau_1 Q(\ell)-  \pi \tau_2 H(\ell)}\;.
\end{aligned}
\end{equation}
Here $\alpha$ is a representative of $[\alpha]\in\disq$. There are $\lvert\disq\rvert$ theta functions, which are defined only by the equivalence class and not by the particular representatives chosen, as can be seen by re-defining the sum over $\Lambda$. The partition function \eqref{ZQh} reduces to \eqref{ZQ} when $\alpha =0 \in \disq$.
By relabeling the sum in \eqref{ZQh} as $\ell \to -\ell$, we obtain
\begin{align}\label{theta_flip}
\vartheta^{\Lambda}_{\alpha}(\tau, \taubar;m) = \vartheta^{\Lambda}_{-\alpha}(\tau, \taubar;m)\;.
\end{align}
The generators of $\SL(2, \mathbb{Z})$ are given by the matrices $T$ and $S$:
\begin{align}
T= \left(
\begin{array}{cc}
1 & 1\\
0 & 1 
\end{array}
\right)
\;, 
\quad 
S= \left(
       \begin{array}{cc}
          0 & -1\\
          1 & 0 
        \end{array}
    \right)
\;.
\end{align}
We have relations $(ST)^3=1, S^2=- 1$. 
The corresponding modular transformations are\footnote{Strictly speaking we should rather consider the double cover of $\SL(2, \mathbb{Z})$, the metaplectic group $\textrm{Mp}(2, \mathbb{Z})$, and the representation is in \eqref{eq:Weil} is the Weil representation of $\textrm{Mp}(2, \mathbb{Z})$ \cite{MR0165033}.}
\begin{equation} \label{eq:Weil}
\begin{split}
&T: \quad \vartheta^{\Lambda}_{\alpha }(\tau+1; m) 
        =  e^{i\pi  Q(\alpha  )}  \vartheta^{\Lambda}_{\alpha }(\tau; m) \;,\\
&S: \quad \vartheta^{\Lambda}_{\alpha }\left(-\frac{1}{\tau};  m\right) 
        =\frac{1 }{\sqrt{|\textrm{det}\, Q|} } \, e^{-\frac{\pi i}{4}(p-q)} \tau^{\frac{p}{2}} \taubar^{\frac{q}{2}} 
            \sum_{\gamma \in  \mathscr{D}_{\Lambda}}  e^{-2\pi i Q(\alpha , \gamma)}  \vartheta^{\Lambda}_{\gamma}(\tau; m) \;,
\end{split}
\end{equation}
where here and in the following we often drop the $\taubar$ dependence from the notation. 

Let us consider a modular transformation
\begin{align}\label{M_def}
\tau \to \tau_{[M]} := \frac{a\tau+b}{c\tau +d} \;, \quad
[M]=\pm \left(\begin{array}{cc} a & b \\ c& d \end{array} \right) \in \PSL(2, \mathbb{Z}) 
\quad
(c\ne 0) \;,
\end{align}
where in the following we will denote an element of $\PSL(2, \mathbb{Z})$
as $[M] = \pm M$, with $M$ being an $\SL(2, \mathbb{Z})$ representative.
Under the modular transformation \eqref{M_def}
the theta functions mix among themselves:
\begin{align}\label{theta_Lambda_SL2}
     \vartheta^{\Lambda}_{\alpha}(\tau_{[M]};m)
     & = \sum_{\beta \in \mathscr{D}_\Lambda}\mathcal{U}^{\Lambda}_{\alpha, \beta}(M, \tau) \,\vartheta^{\Lambda}_{\beta}(\tau;m) \;.
 \end{align}
Here the transformation matrix $\mathcal{U}$ is given by
\begin{align} \label{Ucal_def}
     \mathcal{U}^{\Lambda}_{\alpha, \beta}(M, \tau) := 
     \begin{cases}
         \displaystyle   (c\tau+d)^{\frac{p}{2}} (c\taubar+d)^{\frac{q}{2}}  \lambda^{\Lambda}_{\alpha, \beta}(M)  &  (c\ne 0)\;, \\
         \displaystyle   \mathcal{U}^{\Lambda}_{\alpha, \beta}(T, \tau) := e^{i\pi  Q(\alpha  )} \delta_{\alpha, \beta} & (c=0) \;,
     \end{cases}
\end{align}
and a version of the quadratic Gauss sum by
 \begin{align} \label{lambda_def}
     \lambda^{\Lambda}_{\alpha, \beta}(M):= \frac{1}{\sqrt{|\textrm{det}\, Q|} } \, e^{-\frac{\pi i}{4}(p-q)} c^{-\frac{p+q}{2}} \sum_{\ell_c \in \Lambda/( c  \Lambda)} e^{\frac{\pi i}{ c  }\left(a Q[\ell_c +\alpha] -2Q[\ell_c +\alpha, \beta]+ d   Q[\beta] \right)} \;.
\end{align}
It is easy to verify that $\lambda_{\alpha, \beta}$ is preserved when we shift $\alpha, \beta$ by elements of $\Lambda$, so that
we can regard $\alpha, \beta$ to be elements of the discriminant $\mathscr{D}_{\Lambda}$, as implied by the notation.

Note that $\mathcal{U}^{\Lambda}_{\alpha, \beta}(M, \tau)$ and $\lambda^{\Lambda}_{\alpha, \beta}(M)$ both depend on the 
choice of the $\SL(2, \mathbb{Z})$ representative $M$. 
We can also verify, however, that the relation \eqref{theta_Lambda_SL2} is not affected by this choice; 
this is verified by the relations \eqref{theta_flip}, $\mathcal{U}_{\alpha, \beta}(M) = \mathcal{U}_{\alpha, -\beta}(-M)$ and then a relabeling of the sum as $\beta\to -\beta$. In the following we will denote both $[M]$ and $M$ simply by $M$,
to avoid clutter in notations.

The modular transformation formula \eqref{theta_Lambda_SL2} was known long ago to C.~Siegel \cite[\S 3]{Siegel_Lecture}.
The matrix $\mathcal{U}$ satisfies
\begin{align}\label{M_consistent}
     \mathcal{U}^{\Lambda}_{\alpha, \beta}(M \cdot M', \tau)  =    \sum_{\gamma \in \mathscr{D}_{\Lambda}}  \mathcal{U}^{\Lambda}_{\alpha, \gamma}(M, \tau_{M'})  \,\mathcal{U}^{\Lambda}_{\gamma, \beta}(M', \tau) \;,
\end{align}
which ensures the consistency of the modular transformation \eqref{theta_Lambda_SL2} under the composition of $\SL(2, \mathbb{Z})$ elements.
Note that $\lambda$ themselves satisfy $\tau$-independent relations
\begin{align}\label{lambda_consistent}
     \lambda^{\Lambda}_{\alpha, \beta}(M \cdot M')  =    \sum_{\gamma \in \mathscr{D}_{\Lambda}}  \lambda^{\Lambda}_{\alpha, \gamma}(M)  \,\lambda^{\Lambda}_{\gamma, \beta}(M') \;.
\end{align}

The modular transformations of the partition functions \eqref{ZQh} are given by
\begin{equation}\label{CFT_Z_TS}
\begin{split}
&T: \quad Z^\Lambda_{\alpha }(\tau+1; m) = e^{-\frac{2\pi i (p-q)}{24}} e^{i\pi  Q(\alpha  )}  Z^\Lambda_{\alpha }(\tau; m) \;,\\
&S: \quad Z^\Lambda_{\alpha }\left(-\frac{1}{\tau};  m\right) =
\frac{1}{\sqrt{\vert \textrm{det}\, Q}\vert }  \sum_{\gamma \in  \mathscr{D}_{\Lambda}}
e^{-2\pi i Q(\alpha , \gamma)}  Z^\Lambda_{\gamma}(\tau; m) \;,
\end{split}
\end{equation}

We comment here on the role of modular invariance and the generalized CFTs defined above. The consistency conditions for a $2d$ CFT on arbitrary $2d$ surfaces include crossing symmetry of the sphere four-point function and modular invariance of the torus partition function and torus one-point function~\cite{Moore:1988qv}. From the transformation properties in~\cref{CFT_Z_TS}, it is clear that a given torus partition function $Z^\Lambda_\alpha$ is not invariant under $\PSL(2,\mathbb{Z})$ since the elements of the discriminant group transform into one another under the $S$ transformation. (The exception to this statement is if the quadratic form $Q$ corresponds to a self-dual lattice and thus the discriminant group is trivial.)
One possible viewpoint is that our CFTs are limited and not defined on arbitrary $2d$ surfaces.
Alternatively, we can consider the CFTs defined above as 
being a subsector of a CFT with expanded content such that the total partition function is modular invariant.\footnote{For example, a modular invariant theory could be constructed by combining a generalized Narain CFT defined by quadratic form $Q$ with a ``conjugate" CFT that is defined by $-Q$.} Such a sector would furnish a projective representation of  $\PSL(2,\mathbb{Z})$ similar to that defined in~\cref{CFT_Z_TS}, although the theory in itself need not be identical to the generalized Narain CFT sector. There is some indication that such an expansion is necessary. For example, the bulk Chern-Simons dual of the generalized Narain CFTs contains a global one-form symmetry that should not exist within the context of quantum gravity. Thus at a bare minimum the content of the expanded CFT is such that this global symmetry is eliminated. While~\cite{Gukov:2004id} discusses a stringy origin of the bulk Chern-Simons theories, we expect that the discussion there should be supplemented with additional ingredients based on the arguments here.   

\subsection{Operators and Moduli Space}
Similar to standard Narain CFTs, the primary operators of our generalized theories are holomorphic and anti-holomorphic currents formed from compact bosons 
\begin{equation}
    \begin{aligned}
        J^a(z) &:= \partial X^a \;, \\
        \bar{J}^{\mbar}(\bar{z}) &:= \bar{\partial} X^{\mbar} \;,
    \end{aligned}
\end{equation}
and vertex operators 
\begin{equation}
   \mathcal{V} :=\;\; :\exp\bigg(    i p_L\cdot X_L + i p_R\cdot X_R\bigg): \;,
\end{equation}
where $a=1,\dots,p$ and $\mbar=1, \dots, q$, and $: \, :$ denotes the normal ordering. The existence of $\mathcal{P}_{\Lambda}$ 
guarantees the closure of the associated Operator Product Expansion (OPE). 

We now return to the issue of the moduli space of the generalized CFTs. Naively, one may draw an analogy with the T-duality group of Narain CFTs in~\cref{eq:TDnarain} and define the T-duality group for the generalized theories by simply collecting integral transformations that preserve the quadratic form $Q$:
\begin{equation}\label{eq:genTD}
    \mathrm{O}_Q(p,q;\mathbb{Z}) := \left\{  \Sigma \in \GL(p,q;\mathbb{Z})    \,\Big|\,  \Sigma^T Q \Sigma = Q  \right\} \;.
\end{equation}      
This is essentially correct, but there are several subtleties that must be addressed. First, the generalized partition functions in~\cref{ZQh} are not necessarily invariant under the elements of~\cref{eq:genTD}, and we instead have \cite{Ashwinkumar:2023jtz}
\begin{equation}\label{eq:genthetaTD}
    \vartheta^\Lambda_{g\cdot \alpha}(\tau,\taub;g\cdot m) = \vartheta^\Lambda_\alpha (\tau,\taub;m) \;,
\end{equation}
where $g\cdot\alpha := \Sigma \alpha$ and $g\cdot m$ implies the replacement $H\rightarrow \Sigma^TH\Sigma$ for $\Sigma\in O_Q(p,q;\mathbb{Z})$. 

One is thus led to define T-duality groups that not only preserve $Q$ but also an element $\alpha$ the discriminant group:
\begin{equation}\label{eq:genTDa}
   \mathrm{O}_{Q,\alpha}(p,q;\mathbb{Z})  := \left\{ \Sigma \in \GL(p+q;\mathbb{Z})         \,\Big|\,  \Sigma^T Q \Sigma = Q \; \& \; [\Sigma\alpha] = [\alpha] \in\disq         \right\} \;.
\end{equation}
Note that $\displaystyle \rmO_{Q,\alpha}(p,q;\mathbb{Z}) \subseteq \rmO_Q(p, q;\mathbb{Z})$. Then we can define a moduli space for each element of the discriminant group as  
\begin{equation} \label{eq:genmoddis}
\mathcal{M}_{Q,\alpha} :=
     \mathrm{O}_{Q,\alpha}(p,q;\mathbb{Z}) \big\backslash  \widetilde{\mathcal{M}}_{Q}\;, \quad
\widetilde{\mathcal{M}}_{Q} 
      := (\rmO(p,q; \mathbb{R})\big/(\rmO(p; \mathbb{R})\times \rmO(q; \mathbb{R})) \fstop
\end{equation}
which we sometimes denote as $\mathcal{M}_{\Lambda,\alpha}$.
However, these spaces are not the moduli space of the complete theory, which includes all of the elements of $\disq$. To determine this, we utilize several more facts about the T-duality group and $\disq$ (see \Cref{appendix:TDproofs} for proofs).

The elements of $\genTD$ define bijective maps from $\disq$ to itself that leave $[0]$ invariant. Thus the T-duality group associated to $[0]$ is $\mathrm{O}_{Q,[0]}(p,q;\mathbb{Z}) = \genTD$. For the other equivalence classes in $\disq$, the elements of $\genTD$ can either i) leave the class invariant, as for the case of $[0]$ or ii) map different equivalence classes into each other. Classes that satisfy the former property again have T-duality groups given by~\cref{eq:genTD}. For classes that satisfy the latter property, we associate with them the more limited groups of~\cref{eq:genTDa}. However, so long as we include all the different equivalence classes in the definition of the theory, the T-duality group of the total theory is the full one presented in~\cref{eq:genTD}. These points are illustrated in~\Cref{fig:tdualpic}.

Thus in the full theory that contains all of the elements of $\disq$, the true moduli space is  
\begin{equation} \label{eq:genmod}
\mathcal{M}_{Q} :=
     \mathrm{O}_{Q}(p,q;\mathbb{Z}) \big\backslash  \widetilde{\mathcal{M}}_{Q}\fstop
\end{equation}
which we sometimes denote as $\mathcal{M}_{\Lambda}$.

The moduli spaces \eqref{eq:genmoddis} are endowed with a metric, known as the \textit{Zamolodchikov metric} whose measure we denote by $[dm]$. This measure is the Haar measure of an orthogonal symmetric space with a canonical normalization.
The dimension of the moduli space \eqref{eq:genmoddis} is $\textrm{dim}_{\mathbb{R}}\, \mathcal{M}_{Q,\alpha}=  p q$. 

\begin{figure}[ht!]
\centering
\begin{tikzpicture}
{
\begin{scope}
\draw[thick,fill=DESYB!30] (1,0) to [out=25,in=-120+55] (2,2) to [out=-45,in=-120+45]  (6,4) to [out=-90+25,in=180] (8,2.8) to (8,0) ;
\end{scope}
}

{
\begin{scope}
\draw[thick,fill=DESYO!30]  (2,2) to [out=-15,in=-110] (6,4) to [out=-120+45,in=-45] (2,2);
\end{scope}
}
{
\begin{scope}
\draw[thick,fill=DESYO!30]  (6,4) to [out=-90+25,in=200] (8,3.4) to (8,2.8) to  [out=180,in=-90+25](6,4);
\end{scope}
}

{
\begin{scope}
\draw[thick,fill=DESYO!30] (1,0) to [out=50,in=-120+25] (2,2) to [out=-120+55,in=25] (1,0);
\end{scope}
}

{
\begin{scope}
\node at (3.5,2.2) {$[\alpha_2]$};
\node at (4.8,0.9) {$[\alpha_3]$};
\node at (7,0.5) {$\modsp$};
\node at (1,1.3) {$[\alpha_1]$};
\node at (2.8,0.6) {$[\alpha_1]$};
\node at (7.4,3.7) {$\mathcal{M}_{\Lambda,\alpha_2}$};
\draw[thick,->] (7.2,3.5) -- (7.7,3) ;
\end{scope}
}

{
\begin{scope}[shift={(2,-5.1)},rotate around={35:(2,6.5)}]
\draw[thick,rounded corners] (1.93,7) to [out=-90,in=120] (2,6.5);
\draw[thick,rounded corners] (2,6.5) to [out=60+180,in=90] (1.93,6.1);

\draw[thick,rounded corners] (2.07,7) to [out=-90,in=60] (2,6.5);
\draw[thick,rounded corners] (2,6.5) to [out=120-180,in=90] (2.05,6.1);

\draw[thick] (1.85,6.2) to (2,6.03) to (2.15,6.2);
\foreach \p in {(2,7), (2,5.9)}
\fill[DESYC] \p circle(.1);
\end{scope}
}

{
\begin{scope}[shift={(-0.1,-5.5)},rotate around={50:(2,6.5)}]
\draw[thick,rounded corners] (1.93,7) to (1.93,6.1);

\draw[thick,rounded corners] (2.07,7) to (2.07,6.1);

\draw[thick] (1.85,6.2) to (2,6.03) to (2.15,6.2);
\foreach \p in {(2,7), (2,5.9)}
\fill[DESYC] \p circle(.1);
\end{scope}
}

\draw[thick] (8,0) to (0,0) to (0,5) to (8,5) to (8,3.4);

\end{tikzpicture}
\caption{Illustration of T-duality on elements of $\disq$ in the space of moduli values of the generalized CFT. The \textcolor{DESYB}{blue} region is the moduli space of the generalized CFT, $\modsp$, the \textcolor{DESYO}{orange} region is the moduli space of $[\alpha_2]\in\disq$, $\mathcal{M}_{\Lambda, \alpha_2}$, and the white space corresponds to moduli values outside the fundamental domain. The two possibilities outlined in the text are i) an element of $[\alpha_1]\in\disq$ is preserved by all elements of $\genTD$ and so is unchanged when mapping into $\modsp$ or ii) an element $[\alpha_1]\in\disq$ is mapped to $[\alpha_2]\neq [\alpha_1]$ under $\genTD$. }
\label{fig:tdualpic}
\end{figure}
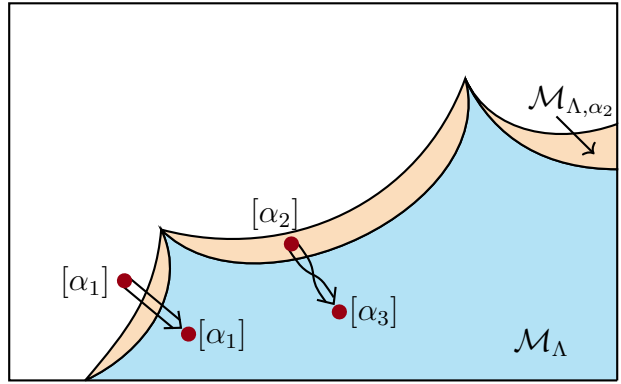

We can illustrate the points above by a simple example
\begin{equation}
    Q = \begin{pmatrix}
        0 & 2\\
        2 & 0
    \end{pmatrix}\fstop
\end{equation}
Then $|\text{det}(Q)| = \lvert\disq\rvert = 4$ and we can choose representatives of the elements of $\disq$ as
\begin{equation}\label{eq:exanyons}
    \begin{aligned}
    \alpha_0 &= \left(0,0\right) \;,\quad
    \alpha_1 &= \left(\frac{1}{2},\frac{1}{2}\right) \;,\quad
    \alpha_2 &= \left(\frac{1}{2},0\right)\;,\quad
    \alpha_3 &= \left(0,\frac{1}{2}\right)\;.
    \end{aligned}
\end{equation}
The vielbein is parameterized by elements of $\mathrm{O}(1,1;\mathbb{R})$ as 
\begin{equation}
    \mathcal{E} = \begin{pmatrix}
                    \cosh\phi & \sinh\phi\\
                    \sinh\phi & \cosh\phi
                    \end{pmatrix}
                    \begin{pmatrix}
                        1 & 1\\
                        -1 & 1
                    \end{pmatrix} 
                    \;,
\end{equation}
which gives a parameterization for the elements of $\repsp$:
\begin{equation}
    H = \begin{pmatrix}
        2(\cosh(2\phi) - \sinh(2\phi)) & 0\\
        0 & 2(\cosh\phi + \sinh\phi)^2
        \end{pmatrix}=: \begin{pmatrix}
            A &0 \\
            0 & 4A^{-1}
        \end{pmatrix}\fstop
\end{equation}
Finally, the T-duality group consists of a single element other than the identity, up to an overall sign:
\begin{equation}
    \Sigma = \begin{pmatrix}
                0 &  1\\
                 1 & 0
    \end{pmatrix}
    \fstop
\end{equation}
The action $H\rightarrow \Sigma^T H \Sigma$ of $\Sigma$ is equivalent to the replacement $A\rightarrow 4A^{-1}$. Then the naive moduli space is the interval $[4,\infty)$ (or equivalently, the interval $(0,4]$). However, we first note the action of $\Sigma$ on elements of $\disq$. Using the representatives in~\cref{eq:exanyons}, we see that $[\alpha_0]$ and $[\alpha_1]$ are invariant but
\begin{align} 
    \begin{split}
        [\Sigma\alpha_2] &= [\alpha_3]\;, \quad
        [\Sigma\alpha_3] = [\alpha_2]
    \end{split}
\end{align}
Thus the moduli spaces of the sectors of the theory described by $\alpha_{0,1}$ are identical to the naive one: $\mathcal{M}_{\Lambda, \alpha_0} = \mathcal{M}_{\Lambda,\alpha_1} = [4,\infty)$. For the remaining sectors, we have
$\mathcal{M}_{\Lambda,\alpha_2} = \mathcal{M}_{\Lambda,\alpha_3} = (0,\infty)$. However, if we consider any point $m\in (0,4)$ for either of these latter two classes, we see that it can be mapped to the interval $[4,\infty)$, provided that we swap the sector simultaneously
as $\alpha_2\leftrightarrow \alpha_3$. Thus we see that the moduli space of the full theory is indeed the naive one, in agreement with the discussion above. 

\section{Orbifolded Narain CFTs}\label{sec:orbifold}

Let us next further enlarge the class of theories, by considering orbifolds of generalized Narain CFTs.

\subsection{Orbifold Actions}

We start with the un-orbifolded theory specified by the lattice $\Lambda$.
Given a group $G$, 
let us consider an action of $g\in G$ on the momentum $p=(p_L, p_R)$:
\begin{align}\label{g_action}
g: (p_L, p_R ) \mapsto   (\theta_L(g) p_L + V_L(g), \theta_R(g) p_R + V_R(g) )  \;.
\end{align}
This is defined by a pair of the rotation matrix $\theta(g)=(\theta_L(g), \theta_R(g))$ and the ``shift vector'' $V(g)=(V_L(g), V_R(g))$,
where the latter is defined modulo the lattice $\Lambda$. In general, there are multiple ways of expressing the same
orbifold action in terms of $\theta$ and $V$. We will see, however, that such ambiguities will not affect the subsequent discussions.
For consistency with the identification of the un-orbifolded theory we need to impose the condition that
$G$ is an automorphism of the lattice $\Lambda$, namely $g \cdot \Lambda = \Lambda$ for any $g\in G$.

The rotations $\theta(g)$ generate the point group,
while the pair $(\theta(g), V(g))$, together with the translations of the original lattice $\Lambda$, generates the space group.
Its elements satisfy the consistency condition
\begin{align}
(\theta(g_1), V(g_1)) \circ (\theta(g_2), V(g_2)) = (\theta(g_1 g_2), V(g_2) +  \theta(g_2) V(g_1)) \;.
\end{align}
The toroidal orbifold is defined by the identification of the momentum under the space group.

For simplicity of presentations, we choose $G=\mathbb{Z}_N$ for the rest of this paper.
We use the symbol $\omega$ to be a generator of $\mathbb{Z}_N$,
and sometimes denote the rotation $\theta(\omega)$ (shift vector $V(\omega)$) simply as $\theta$ ($V$).
In this case, $\omega^n= (\theta, V)^N=(\theta^N, N P_{\omega}V  )  =(1,0) $ with
\begin{align}
P_{\omega}: = (1 + \theta + \theta^2 + \dots + \theta^{N-1} ) /N
\end{align}
being a projector onto $\theta$-invariant states. Let us define 
$I$ to be the invariant lattice of $\Lambda$ under $\omega$ (hence under $G=\mathbb{Z}_N$):
\begin{align}\label{I_def}
I := \left\{ \ell \in \Lambda | \, \omega \cdot \ell = \ell \right\} \;.
\end{align}
This lattice is moduli-independent for symmetric orbifolds $\theta_L=\theta_R$,
and we will in the following restrict our analysis to symmetric orbifolds.\footnote{For asymmetric orbifolds
the invariant lattice $I$ is in itself moduli-dependent, and it becomes more challenging to discuss ensemble averages.}

We define $W$ to be the ``projected shift vector''
\begin{align}
W:=P_{\omega} V \;.
\end{align}
Since $(\theta, V)^N=1$, we need $\theta^N = 1$ and $ N W  \in \Lambda$.
The former condition states that
we can denote the eigenvalues of $\theta_L(\omega)$ and $\theta_R(\omega)$ in terms of ``twist vectors'' $l_j, r_j \in \mathbb{Z}/N\mathbb{Z}$:
\begin{align}
\theta_L(g): \{ e^{2\pi i \, l_j } \} \;, \quad
\theta_R(g): \{ e^{2\pi i \, r_j } \} \;.
\end{align}
The latter condition can be written as a condition on $W$:
\begin{align}\label{P_V}
W \in I(N) / I   \;,
\end{align}
where we defined the $I(N)$ to be
\begin{align}
I(N) := \left\{ N\ell \in \Lambda | \, N \ell \in I \right\} \;.
\end{align}
Note that the ambiguity of the shift of $V$ by an element of $\Lambda$
does not affect the element \eqref{P_V} modulo $I$. 
Moreover, the lattice $I$ and the projected shift vector $W$ is not
affected by the ambiguity of expressing $\omega$ in term of the rotation $\theta$ 
and the shift $V$.

In the following we consider symmetric orbifolds with $\theta_L=\theta_R$ (and hence for example $l_j=r_j$), so that
the orbifold action \eqref{g_action} can be written in a moduli-independent manner:
\begin{align}\label{g_action_sym}
g: p \mapsto   \theta_L(g) p + V(g) \;.
\end{align}
The orbifold projects out the Narain moduli space \eqref{eq:genmod} onto 
a smaller subspace
\begin{align} \label{M_I}
\mathcal{M}_{I}:=
\mathrm{O}_I(\mathbb{Z}) \big\backslash (\rmO(p_I,q_I; \mathbb{R})\big/(\rmO(p_I; \mathbb{R})\times \rmO(q_I; \mathbb{R})) \subset \mathcal{M}_{Q}\;,
\end{align}
where we denoted the signature of $I$ by $(p_I, q_I)$.
In the following we consider an ensemble average over this moduli space.

Notice that in general the lattice $I$ could be trivial---in the examples of $\mathbb{Z}_2$-orbifolds of 
toroidally-compactified theories discussed in \cite{Benjamin:2021wzr}, for example, all the twisted sectors have no moduli dependence,
so that the ensemble average over $\mathcal{M}_I$ is trivial. Nevertheless, since we consider a rather general class of 
orbifolds of general lattices, we will in general have non-trivial invariant lattices and hence the associated moduli space.

\subsection{Torus Partition Functions}\label{subsec:torus}

Let us next discuss the torus partition functions of the theory.
While such partition functions have been extensively studied in the context of heterotic string compactifications,
we are not focusing on specific lattices and will keep the lattice to be a general lattice associated with even quadratic forms.

The torus partition function of the $\mathbb{Z}_N$-orbifolded theory is built out of the 
following $N^2$ partition functions:
\begin{align}
&\textrm{untwisted sector}:  Z(1, \omega^n) \quad (n=0, 1, \dots, N-1) \;, \\
&\textrm{twisted sector}: Z(\omega^m ,\omega^n) \quad (m=1, \dots, N-1; n=0, 1, \dots, N-1)  \;.
\end{align}
Here $Z(g, h)$ denotes the torus partition function 
with twists $g \in G$ (and $h \in G$) along spatial (and temporal) directions:
\begin{align}
Z(g,h) = \begin{picture}(20,20)(0,0)
\put(10,0){\framebox(15,15){}}
\put(15,-7){\mbox{$g$}}
\put(2,4){\mbox{$h$}}
\end{picture} \quad \;.
\end{align}
In general such twists require a compatibility condition $gh=hg$. This condition is automatically satisfied for 
Abelian orbifolds.

The full torus partition function of the $G$-orbifolded theory is given by
\begin{align}
Z_{Q/G} = \frac{1}{N}\left[\overbrace{ \sum_{h \in G} Z(1,h)}^{\textrm{untwisted sector}} +\overbrace{\sum_{\genfrac{}{}{0pt}{}{g, h\in G}{[g,h]=1; g\ne 1} } c(g,h) Z(g,h)}^{\textrm{twisted sector}} \right] \;,
\end{align}
and in the literature the coefficients $c(g,h)$ are chosen appropriately for the modular invariance of the full partition function. We are, however, not imposing the modular invariance in this paper,\footnote{In \cite{Ashwinkumar:2021kav} the modular non-invariance of the torus partition function on the boundary CFT
matches with the framing anomaly of the Chern-Simons theory in the bulk, and the modular non-invariance was a crucial ingredient for the consistency of the discussion.} and for our paper it is enough to discuss individual blocks $Z(g,h)$ separately. 

For the untwisted sectors, the orbifold action is along the temporal direction 
and hence we can still use the same Hilbert space $\mathcal{H}_{Q}$ as in the 
un-orbifolded case, as long as we insert an appropriate operator $U(g)$ representing the twist by $g \in G$:
\begin{align}\label{Z1g}
Z(1,g)=\textrm{Tr}_{\mathcal{H}_{\Lambda}} \left( U(g) \, q^{L_0-\frac{c}{24}} \qbar^{\overline{L}_0-\frac{\cbar}{24}} \right)  \;.
\end{align}

For $g=1$, $Z(1,1)$ is nothing but the un-orbifolded partition function $Z^{\Lambda}$ given in \eqref{ZQ}.
For $Z(1,g)$, from the $g$-action given in \eqref{g_action}, 
we find that the only contribution comes from the subspace of the un-orbifolded Hilbert space 
left invariant under the generator $\omega$; this is described by the 
the $G$-invariant sublattice $I$ \eqref{I_def}.
This means that the partition function \eqref{Z1g} contains the contribution
\begin{align}
\sum_{\ell \in I} e^{i \pi \tau Q_L(\ell)- i \pi \taubar Q_R(\ell)} e^{2\pi i (p_L(\ell) V_L -p_R(\ell) V_R)  } 
= \sum_{\ell \in I} e^{i \pi \tau Q_L(\ell)- i \pi \taubar Q_R(\ell)} e^{2\pi i Q(\ell, W )   }\;,
\end{align}
where in the second line we used $Q(\ell, V)= Q(P_{\omega} \ell, V) = Q( \ell, W )$ for $\ell \in I$.

We also need to take into account contributions from oscillators. 
For $(m l_j, n l_j) = (0,0)$ we have the familiar contribution $1/\eta(\tau)$ from the full set of chiral oscillators.
For $(m l_j, n l_j) \ne (0,0)$ 
the contribution from chiral oscillators is  $\eta(\tau)/ \gentheta{\frac{1}{2} + m l_j}{\frac{1}{2} + n l_j}(\tau)$, 
where  the theta function $\gentheta{\theta}{\phi}(\tau)$ is given by
\begin{align}
\begin{split}
\gentheta{\theta}{\phi}(\tau) &: = \sum_{n\in \mathbb{Z}} q^{\frac{1}{2} (n+\theta)^2 } e^{2\pi i(n+\theta) \phi } \\
&= \eta(\tau) \, e^{2 \pi i \theta \phi} q^{\frac{1}{2} \theta^2-\frac{1}{24}} \prod_{k=1}^{\infty}
(1+q^{k+\theta-\frac{1}{2} } e^{2\pi i \phi}) (1+q^{k-\theta-\frac{1}{2} } e^{-2\pi i \phi})  \;.
\end{split}
\end{align}
There are similar oscillator contributions for anti-chiral oscillators, with $\tau$ replaced by $\taubar$.

By collecting all the ingredients, we obtain
\begin{align}
Z(\omega^m,\omega^n) =
\begin{cases}
Z_\Lambda  =\displaystyle\frac{ \vartheta_{\Lambda}}{\eta^p {\overline{\eta}^q}} & (m,n)= (0,0) \;,\\
 Z^{I^{\perp}} (\omega^m,\omega^n) Z^{I} (\omega^m,\omega^n) 
& (m,n)\ne (0,0) \;,
\end{cases} 
\end{align}
The building blocks, $Z_{I^{\perp}}$ for the oscillators and $Z_{I}$ for the lattice, are given by 
\begin{align}
&Z^{I^{\perp}} (\omega^m,\omega^n)  =  C(I, \vartheta) \prod_{j:\, l_j\ne 0} \frac{\eta(\tau)}{ \gentheta{\frac{1}{2} + m l_j}{\frac{1}{2}+ n l_j}(\tau)}
 \prod_{j:\, r_j\ne 0}  \frac{\etabar(\taubar)}{ \gentheta{\frac{1}{2} + m r_j}{\frac{1}{2}+ n r_j}(\taubar)}  \;,  \label{Z_I_perp}\\
&Z^I (\omega^m,\omega^n) = \frac{\vartheta^I_{(m W , n W )}(\tau, \taubar;m) }{\eta(\tau)^{p_I} \etabar(\taubar)^{q_I}} \;,
\label{Z_I}
\end{align}
and $C(I, \vartheta)$ is a moduli-independent numerical factor which is not important for our analysis.\footnote{This includes
an integer $\sqrt{\det\!{}^{\prime}(1-\theta)/|I^*/I|}$ counts the number of fixed points under the $\theta$-rotation \cite{Narain:1986qm}. Here $I^{*}$ is defined in \eqref{I_dual_def} and 
the primed determinant $\det\!{}^{\prime}$ is defined by disregarding the zero eigenvalues.}

In \eqref{Z_I} we defined a generalized version of the theta function associated with the invariant lattice $I$: 
\begin{align} \label{def_Theta_I}
\vartheta^I_{(\delta, \eta)} (\tau, \taubar;m):= \sum_{\ell \in I+\delta } e^{i \pi \tau Q_L(\ell)- i \pi \taubar Q_R(\ell)} e^{2i\pi  Q_I(\ell, \eta)}\;,
\quad (\delta \in \mathbb{R}^{p_I+q_I}/ I, \quad \eta \in \mathbb{R}^{p_I+q_I})\;, 
\end{align}
where we have shown the dependence on the moduli $m\in \mathcal{M}_I$,
and $Q_I$ denotes the quadratic form associated with the lattice $I$.
Since $Q_I$ is a restriction of $Q$, for notational simplify we mostly use the same symbol $Q$ to denote the quadratic form $Q_I$;
however, the distinction matters when we discuss the determinants of the quadratic forms, for example.

We can verify from the definition that $\vartheta_{I, (\delta, \eta)}$ is indeed 
preserved when we shift $\delta$ by an element of $I$, as suggested in \eqref{def_Theta_I}.
By contrast, when we shift $\eta$ by an element of $\beta\in I^*$ we obtain an extra phase factor
\begin{align} 
\vartheta^I_{(\delta, \eta+ \beta)} (\tau, \taubar;m)=e^{2i\pi  Q(\delta, \beta)} \,  \vartheta^I_{(\delta, \eta)} (\tau, \taubar;m) 
\;,
\end{align}
and the phase factor is in general non-trivial even when $\beta \in I$.

By relabeling the sum as $\ell \to -\ell$ we obtain
\begin{align}
\vartheta^I_{(\delta, \eta)} (\tau, \taubar;m)= \vartheta^I_{(-\delta, -\eta)} (\tau, \taubar;m) \;.
\end{align}

Let us denote the signature of $Q_I$ by $(p_I, q_I)$, and 
define the dual lattice $I^{*}$ of $I$ as a dual inside $\mathbb{R}^{p_I+q_I}$ (and not $\mathbb{R}^{p+q}$):
\begin{align}\label{I_dual_def}
I^*:= \left\{ x \in \mathbb{R}^{p_I+q_I} \,\Big|\, Q(x, \ell) \in  \mathbb{Z} \quad  (\forall \ \ell \in I) \right\} \;.
\end{align}
Similar to the case of $\Lambda$, we define the discriminant for $I$ to be
\begin{align}
\mathscr{D}_I = I^*/I \;.
\end{align}

For $T$ and $S$ generators of $\SL(2, \mathbb{Z})$ we have the modular transformation (see \Cref{appendix:theta} for derivation):\footnote{
Special cases of these transformation rules appeared in string theoretic literature \cite[section 3]{Senda:1987pf}.}
\begin{align}
\label{Theta_T}
  &T: \quad \vartheta^I_{(\delta, \eta)}(\tau+1; m) =   e^{-\pi i Q(\delta)}\,  \vartheta^I_{(\delta, \eta+\delta)}(\tau;m) \;, \\
\label{Theta_S}
  &S: \quad \vartheta^I_{(\delta, \eta)}\left(-\frac{1}{\tau}; m\right) = \frac{e^{-\frac{ i \pi (p_I-q_I)}{4}} }{\sqrt{\textrm{det}\, Q_I}} \tau^{\frac{p_I}{2}}   \taubar^{\frac{q_I}{2}} e^{2\pi i Q(\delta, \eta)}  \sum_{\gamma\in \mathscr{D}_I} \vartheta^I_{  (\gamma+\eta, -\delta)}(\tau; m) \;.
\end{align}
In terms of the ratio
\begin{align}
Z^I_{(\delta, \eta)}(\tau;m) := 
\frac{\vartheta^I_{(\delta, \eta)}(\tau; m) }{\eta^{p_I}(\tau) \etabar^{q_I}(\taubar)} \;,
\end{align}
we have
\begin{align}
&T: Z^I_{(\delta, \eta)}(\tau+1; m) = e^{-\frac{2\pi i (p_I-q_I)}{24}}e^{-\pi i Q(\delta)} \,Z^I_{(\delta, \eta+\delta)}(\tau; m) \;,\\
&S: Z^I_{(\delta, \eta)}\left(-\frac{1}{\tau}; m \right) = \frac{1}{\sqrt{\textrm{det}\, Q_I}} \, e^{2\pi i Q(\delta, \eta)}  \sum_{\gamma\in \mathscr{D}_I} Z^I_{(\gamma +\eta, -\delta)}(\tau; m) \;.
\end{align}

For our practical purposes, we do not need fully general choices of $\delta, \eta$,
and we only need those needed for the discussion of the modular transformations of
$Z^I_{(mW, nW)}$ \eqref{Z_I}.
This picks up the choice $\delta= \alpha+mW$ and $\eta=\beta + nW$
for $\alpha, \beta \in \mathscr{D}_I$ and $m, n \in \mathbb{Z}_N$.
The only effect of $\beta$, however, is to change the overall normalization of the partition function 
\begin{align}\label{delta_eta_shift} 
\vartheta^I_{(\alpha + mW,  \beta + nW)} (\tau; m)=e^{2\pi i Q(\alpha +mW, \beta)}\, \vartheta^I_{(\alpha + mW, nW)} (\tau; m) \;,
\end{align} 
and hence we can set $\beta=0$ when we are discussing basic building blocks up to overall normalization factors.
The modular transformations of the blocks $Z^I_{(\alpha+mW , nW )}$ are given by, after using \eqref{delta_eta_shift}: 
\begin{align}\label{TS_I}
&T: Z^I_{(\alpha+mW , nW )}(\tau+1; m) 
= e^{-\frac{2\pi i (p_I-q_I)}{24}} 
e^{-\pi i m^2 Q(W)  + \pi i Q(\alpha)}  
\, Z^I_{(\alpha+mW , (m+n)W)}(\tau;m) \;,\\
&S: Z^I_{(\alpha+mW , nW )}\left(-\frac{1}{\tau} ;m\right) 
= \frac{1}{\sqrt{\textrm{det}\, Q_I }} \, e^{2 \pi i mn Q( W)}  \sum_{\gamma\in \mathscr{D}_I}
e^{-2\pi i Q(\gamma, \alpha)}
Z^I_{ (\gamma+nW, -mW)}(\tau; m) \;.
\end{align}
As we prove in \Cref{appendix:congruence},
the partition functions $Z_{I, (\alpha+mW , nW )}$ are modular functions with respect to a 
congruence subgroup $\Gamma(N^2 L)$ of $\PSL(2, \mathbb{Z})$,
where an integer $L$ is a integer multiple of the level $L_I$ of the quadratic form $Q_I$, where the level is defined as the smallest integer such that $L_I Q_I^{-1}$ is integral.\footnote{The level $L_I$ was denoted as $N$ in \cite{Ashwinkumar:2021kav}.}

By comparing the modular transformations \eqref{TS_I} with those of the un-orbifolded case \eqref{CFT_Z_TS},
one finds that the $(m,n)\ne 0$ sector we have the same transformation as the un-orbifolded case (represented by $m=n=0$), the only difference being that 
\begin{enumerate}
    \item the lattice $\Lambda$ replaced by the $G$-invariant lattice $I$ and 
    \item the discriminant $\mathscr{D}=\Lambda^*/\Lambda$ is replaced by 
          \begin{align}\label{hat_D}
          \hat{\mathscr{D}}:=\mathscr{D}_I \oplus \mathscr{D}_N \;.
           \end{align}
          which is larger than $\mathscr{D}_I=I^*/I$ by a factor  of
             $\mathscr{D}_N:= \mathbb{Z}_N$,
          the choice of which is determined by the projected shift vector $W$.
 \end{enumerate}         
The braiding between the two elements of $\mathscr{D}_N$  is determined by 
the quadratic form $Q$ and the projected shift vector $W$:
\begin{align}\label{D_I}
B(m,n)=e^{2\pi i Q(mW, nW)}  \;, \quad (m, n \in \mathbb{Z}_N ) \;.
\end{align}

For the special case $N=2$ we have 
\begin{align} 
\vartheta^I_{(\alpha+mW ,  nW )} (\tau, \taubar):= \sum_{\ell \in I+\alpha + mW} e^{i \pi \tau Q_L(\ell)- i \pi \taubar Q_R(\ell)} (-1)^{nQ(\ell, 2W) }\;,
\end{align}
with $\alpha, \beta \in I^*$, $W\in I/2$ and $m,n=0, 1$. 
This expression is reminiscent of the theta function for spin CFTs in our previous paper \cite{Ashwinkumar:2021kav},
and the projected shift vector $2W\in I$ plays a role analogous to the Wu class \cite{MR0440554,MR0145525} for the spin CFT.

The modular transformation for the orbifolded lattice theta function \eqref{def_Theta_I} is given by 
(see \Cref{appendix:theta} for derivation)
 \begin{align}\label{theta_SL2}
     \vartheta^I_{(\delta , \eta )}(\tau_{M};m)
     & =
     \mu_{(\delta, \eta)\cdot M}
     \sum_{\gamma \in \mathscr{D}_I}\mathcal{U}_{\alpha,\gamma}^I(M) \,\vartheta^I_{ (\gamma, 0)+(\delta, \eta)\cdot M} (\tau;m) \;,
 \end{align}
Here we defined a phase $\mu_{(\delta, \eta)}$ by
\begin{align} \label{mu_def}
\mu_{(\delta, \eta)} := e^{-i\pi Q(\delta,\eta) }
\end{align}
so that $\mu_{(-\delta, -\eta)} = \mu_{(\delta, \eta)}$ and $\mu_{(\delta, \eta) \cdot M} =  \mu_{(a\delta + c\eta, b\delta + d\eta)}=  e^{-i\pi Q(a\delta + c\eta, b\delta + d\eta)}$. 
The matrix $\mathcal{U}^I_{\alpha,\gamma}(M, \tau)$ (for $M$ given as in \eqref{M_def} with entries $a, b, c, d$) is given by the essentially same expression as 
 $\mathcal{U}^{\Lambda}_{\alpha, \gamma}(M, \tau)$, with $\Lambda$ replaced by $I$: 
 \begin{align} \label{Ucal_def_I}
     \mathcal{U}^{I}_{\alpha, \gamma}(M, \tau) := 
     \begin{cases}
         \displaystyle   (c\tau+d)^{\frac{p}{2}} (c\taubar+d)^{\frac{q}{2}}  \lambda^{I}_{\alpha, \gamma}(M)  &  (c\ne 0)\;, \\
         \displaystyle   \mathcal{U}^{I}_{\alpha, \gamma}(T, \tau) := e^{i\pi  Q(\alpha  )} \delta_{\alpha, \gamma} & (c=0) \;,
     \end{cases}
\end{align}
and
 \begin{align} 
 \lambda^I_{\alpha, \gamma}(M, \tau) := \frac{1}{\sqrt{\textrm{det}\, Q_I} }\,  e^{-\frac{\pi i}{4}(p_I-q_I)} c^{-\frac{p_I+q_I}{2}}   \sum_{\ell_c \in I/( c  I)} e^{\frac{\pi i}{ c  }\left(a Q[\ell_c+\alpha ] -2Q[\ell_c +\alpha , \gamma]+ d   Q[\gamma] \right)} \;.
\end{align}

As in the un-orbifolded case \eqref{M_consistent}, we have
 the consistency relations
\begin{align}\label{M_consistent_I}
     \mathcal{U}^{I}_{\alpha, \beta}(M \cdot M', \tau)  =    \sum_{\gamma \in \mathscr{D}_{\Lambda}}  \mathcal{U}^{I}_{\alpha, \gamma}(M, \tau_{M'})  \,\mathcal{U}^{I}_{\gamma, \beta}(M', \tau)  \;,
\end{align}
for two $\SL(2, \mathbb{Z})$ matrices $M, M'$.
 We also have the sign-flip relations
 \begin{align} 
 \mathcal{U}^I_{\alpha, \gamma}(M, \tau) =  \mathcal{U}^I_{\alpha, -\gamma}(-M, \tau) \;, 
\end{align}
which ensures that the right hand side of \eqref{theta_SL2} does not depend on the choice of the $\SL(2, \mathbb{Z})$ representative $M$.
 
For the special case $\delta= \alpha+mW, \eta = \beta+nW$, we have
\begin{align}\label{theta_SL2_special}
     \vartheta^I_{(\alpha + mW , \beta+ nW )}(\tau_{M};m)
     & =
     \mu_{(\alpha+ m W , \beta+n W)\cdot M}
     \sum_{\gamma \in \mathscr{D}_I}\mathcal{U}_{\alpha,\gamma}^I(M) \,\vartheta^I_{(\gamma,0 ) + (\alpha+mM, \beta+nW)\cdot M}(\tau;m) \;,
 \end{align}

\section{Ensemble Average}\label{sec:ensemble}
\subsection{Unorbifolded Case}

The ensemble average of the theta function $\vartheta_{Q,\alpha}$ over the moduli space \eqref{eq:genTDa}\footnote{We impose $p+q>4$  for the convergence of the integral.}
\begin{align}\label{Siegel_Qh}
\langle \vartheta^Q_{\alpha}\rangle_{\mathcal{M}_Q} (\tau) &:= 
\frac{1}{\textrm{Vol}(\widetilde{\mathcal{M}}_Q ) } \int_{\tilde{\mathcal{M}}_Q} [dm] \, \vartheta^Q_{\alpha}(\tau;m) 
\end{align}
 is given by the Siegel-Eisenstein series \cite{MR67930,Siegel_Lecture,Maloney:2020nni,Ashwinkumar:2021kav} (henceforth referred to simply as the Eisenstein series)
 \begin{align}
\langle \vartheta^{Q}_{\alpha}(\tau) \rangle_{\mathcal{M}_\Lambda}=E^{Q}_{\alpha}(\tau) := \delta_{\alpha \in \Lambda} +
 \sum_{(c,d)=1,\, c>0} \, \frac{\gamma^\Lambda_{\alpha}(c,d)}{ (c\tau+d)^{\frac{p}{2}} (c\taubar+d)^{\frac{q}{2}}}  \;,
 \label{eq:E_as_sum}
\end{align}
associated with the lattice $\Lambda$, where $\delta_{\alpha}=1$ for $\alpha\in \Lambda$, and $\delta_\alpha=0$ for $\alpha\notin \Lambda$.
The factor $\gamma^\Lambda_{\alpha}(c,d)$ is given by 
\begin{align}\label{gamma_def}
\gamma^{\Lambda}_{\alpha}(c,d):=\lambda_{\alpha, 0}(M^{-1}) = \frac{1}{\sqrt{\textrm{det}\, Q} } \, e^{-\frac{\pi i (p-q)}{4}} (-c)^{-\frac{p+q}{2}} \sum_{\ell_c  \in \Lambda/c\Lambda} \exp\left[ -\pi i \frac{d}{c} Q(\ell_c +\alpha) \right] \;.
\end{align}
with $M$ given as in \eqref{M_def}.  Note that this depends only on the two entries $c,d$ of the matrix $M$, as opposed to general $\lambda_{\alpha, \beta}(M)$ given in \eqref{lambda_def}.

Since the Eisenstein series is an ensemble average of the theta function, and since the modular transformation matrices $\mathcal{U}$ in \eqref{theta_Lambda_SL2}
are independent of the moduli, the modular transformation of the Eisenstein series should be the same as in \eqref{theta_Lambda_SL2}:
\begin{align}\label{E_Lambda_SL2}
     E^{\Lambda}_{\alpha}(\tau_{M})
     & = \sum_{\beta \in \mathscr{D}_\Lambda}\mathcal{U}^{\Lambda}_{\alpha, \beta}(M, \tau) \,E^{\Lambda}_{\beta}(\tau) \;.
 \end{align}
 We can verify this expression explicitly from the definition \eqref{eq:E_as_sum} with the help of \eqref{gamma_def} and \eqref{M_consistent}.
 Finally, we have the counterparts of \cref{theta_flip} and \cref{eq:genthetaTD}:
 \begin{align}     
      E^{\Lambda}_{\alpha}(\tau) = E^{\Lambda}_{-\alpha}(\tau) \;, \quad
      E^{\Lambda}_{g\cdot \alpha}(\tau) =E^{\Lambda}_{\alpha}(\tau) \;.
 \end{align}

\subsection{Orbifolded Case}

Let us next discuss the ensemble average of the partition functions over the projected Narain moduli space $\mathcal{M}_I$ \eqref{M_I}.\footnote{We impose $p+q>4$  for the convergence of the integral.} We evaluate the ensemble average of the 
theta function $\vartheta_{I, (\delta, \eta)}$ introduced in \eqref{def_Theta_I},
which we denote as $E^{I}_{(\delta, \eta)}$ (in anticipation of the fact this is a certain generalization of the Siegel-Eisenstein series) :
\begin{align}
\langle \vartheta_{I, (\delta, \eta)} \rangle_{\mathcal{M}_I}(\tau, \taubar) = 
\frac{1}{\textrm{Vol}(\mathcal{M}_I ) } \int_{\mathcal{M}_{I}}[dm] \, \vartheta_{I, (\delta, \eta)}(\tau, \taubar;m) \;.
\end{align}

To evaluate this average we need some generalization of the Siegel-Weil formula,
which does not seem to exist in the literature.\footnote{The exception is Ref.~\cite{Dong:2021wot}, where the authors considered 
the special cases of CFTs defined by even self-dual lattices and vanishing chiral central charge ($p=q$ in our notation).
They considered the $\SU(N)$ WZW models at level $k$,
which describes the $\mathbb{Z}_N$-quotient of a product theory, i.e., Narain theory times the parafermion theory.
The theta functions in the two references are related as 
$\vartheta^{\rm (here)}_{(\delta, \eta)}(\tau) = e^{\pi i \delta\cdot \eta}\vartheta^{\rm (there)}_{(-
\delta,\eta)}(\tau)$.  Since the difference is only an overall constant, the same trivially propagates into the 
definition of the Eisenstein series after the ensemble averages. It should be noticed, however,
that the phase factors transform non-trivially under the modular transformation 
and hence we obtain slightly different expressions for the Poincar\'e sum.}

Fortunately for us, the Siegel-Weil formula was re-derived in our previous paper \cite[section 2.3]{Ashwinkumar:2021kav}
and we can use the same logic to derive the formula necessary for our present purposes,
as explained below.

We find that the ensemble average $\langle \vartheta_{I, (\delta, \eta)} \rangle_{\mathcal{M}_I}$ evaluates to
\begin{align} \label{E_sum_1}
    E^I_{(\delta, \eta)} (\tau, \taubar) &:= \delta_{\delta\in I} +   \frac{1}{\sqrt{\textrm{det}\, Q_I} }\, e^{-\frac{\pi i}{4}(p_I-q_I)}
    \sum_{(c,d)=1,c>0} c^{-\frac{p_I+q_I}{2}}(c\tau+d)^{-\frac{p_I}{2}}(c\bar{\tau}+d)^{-\frac{q_I}{2}}  \nonumber\\
&\quad \times   \mu_{(\delta, \eta)\cdot M^{-1}}
 \lambda^I_{0, -d \delta + c \eta}(M^{-1}) \;,
\end{align}
where $M$ is a $\PSL(2, \mathbb{Z})$ matrix with the second row given by $(c,d)$ as in \eqref{M_def},
and we used previously-defined $\mu_{(\delta, \eta)}$
 as in \cref{mu_def}, as well as 
 \begin{align}
     \lambda^I_{0, -d \delta + c \eta}(M^{-1}) &= \sum_{\ell_c \in I/( c  I)} e^{-\frac{\pi i}{ c  }\left(d Q[\ell_c ] -2Q[\ell_c , -d \delta + c \eta]+ a   Q[-d \delta + c \eta] \right)} 
\end{align}
as in \cref{lambda_def} (previously defined for $\Lambda$).
This is one of the main technical results of this paper. 
We can slightly simplify the formula into
\begin{align}\label{E_sum_2}
    E^I_{(\delta, \eta)} (\tau, \taubar) 
   &= \delta_{\delta\in I} +   \frac{1}{\sqrt{\textrm{det}\, Q_I} }\, e^{-\frac{\pi i}{4}(p_I-q_I)}
    \sum_{(c,d)=1,c>0} c^{-\frac{p_I+q_I}{2}}(c\tau+d)^{-\frac{p_I}{2}}(c\bar{\tau}+d)^{-\frac{q_I}{2}}  \nonumber\\
&\quad \times  \sum_{\ell_c \in I/( c  I) + \delta} e^{-\frac{\pi i}{ c  }\left(d Q[\ell_c ] -2c Q[\ell_c  , \eta]+  c Q[\delta, \eta] \right)}  \;.
\end{align}
In the latter form it is manifest that the expression reduces to the un-orbifolded case \eqref{eq:E_as_sum} for $N=1$ (and hence $\delta=\eta=0$).

We call the function in \eqref{E_sum_1} and \eqref{E_sum_2} the ``orbifold Eisenstein series,''
since this is an orbifold analogue of the non-holomorphic Eisenstein series.
The orbifolded Eisenstein series is a modular form with respect to a congruence subgroup $\Gamma(N^2 L_I)$ with $L_I$
being directly proportional to the level of $Q_I$; see also  \Cref{appendix:congruence}. As far as we are aware, there is no literature discussing this Eisenstein-like series
in the mathematical literature (apart from special cases). It would be interesting to further study the properties of this function.
 
For application to orbifolds, we set $(\delta,\eta) = (\alpha + mW, \beta + nW)$
with $\alpha, \beta \in \mathscr{D}_I$ and $m, n \in \mathbb{Z}_N$,
leading to
\begin{align}\label{E_I}
    E^I_{(\alpha+mW, \beta+nW)} (\tau) =\delta_{\delta\in I} +  
    \sum_{(c,d)=1,c>0}  \gamma^I_{(\alpha, \beta; m,n )}(c,d) (c\tau+d)^{-\frac{p_I}{2}}(c\bar{\tau}+d)^{-\frac{q_I}{2}} \;,
\end{align}
with
\begin{align}
\gamma^I_{(\alpha, \beta; m,n )}(c,d) &=\frac{1}{\sqrt{\textrm{det}\, Q_I} }\, e^{-\frac{\pi i}{4}(p_I-q_I)}
    c^{-\frac{p_I+q_I}{2}} \mu_{(\alpha+mW,  \beta+ nW) \cdot M^{-1}} \lambda^I_{0, -d (\alpha+ mW) + c (\beta +nW)}(M^{-1}) \\
& = \frac{1}{\sqrt{\textrm{det}\, Q_I} }\, e^{-\frac{\pi i}{4}(p_I-q_I)}
    c^{-\frac{p_I+q_I}{2}} \nonumber\\
    & \qquad \times\sum_{\ell_c \in I/( c  I) + \alpha + mW} e^{-\frac{\pi i}{ c  }\left(d Q[\ell_c  ] -2c Q[\ell_c  , \beta+nW]+  c Q[\alpha+ mW, \beta+nW] \right)}  \;.
\end{align}
The averaged partition functions are then
\begin{align}\label{EI_average} 
\langle Z^I_{(\alpha+mW,\beta+nW)}(\tau, \taubar) \rangle
&=
\frac{E^I_{(\alpha+mW,\beta+nW)}(\tau,\taubar)}{\eta(\tau)^{p_I}\overline{\eta}(\overline{\tau})^{q_I}} \nonumber\\
&=\frac{\delta_{mW\in I}}{\eta(\tau)^{p_I}\overline{\eta}(\overline{\tau})^{q_I}}
+\sum_{g\in \Gamma_\infty\backslash\PSL(2,\mathbb{Z}), g\ne 1}
\epsilon(g)^{p_I - q_I}
\frac{ \gamma^I_{(\alpha, \beta; m,n)}(g)}{\eta(g\cdot \tau)^{p_I}\overline\eta(g\cdot \overline{\tau})^{q_I}}\;,
\end{align}
where the phase $\epsilon(g)$ 
 is the multiplier system for the Dedekind eta function and is described in \Cref{appendix:modular_transformation}.
Their modular transformations are given by
\begin{align}\label{TS_I_averaged}
&T: \langle Z^I_{(\alpha+mW , nW )}(\tau+1; m) \rangle = e^{-\frac{2\pi i (p_I-q_I)}{24}} 
 e^{-\pi i Q(m W) }  e^{\pi i Q(\alpha) }
\, \langle Z^I_{(\alpha+mW , (m+n)W)}(\tau;m) \rangle \;,\\
&S: \langle Z^I_{(\alpha+mW , nW )}\left(-\frac{1}{\tau} ;m\right) \rangle = \frac{1}{\sqrt{\textrm{det}\, Q_I }} \, e^{2 \pi i Q(mW, n W)}  \sum_{\gamma\in \mathscr{D}_I}e^{-2\pi i Q(\alpha, \gamma) }  \langle Z^I_{ (\gamma-\alpha-mW, nW)}(\tau; m) \rangle \;.
\end{align}

\paragraph{Modular Transformation}

Since the orbifold Eisenstein series is defined by the ensemble average of the orbifold theta function,
we expect the two have the same modular transformation properties
\begin{align}\label{E_SL2}
     E^I_{(\delta,\eta)}(\tau_{M'})
     & = \mu_{(\delta, \eta)\cdot M'} \sum_{\gamma \in \mathscr{D}_I}\mathcal{U}_{0,\gamma}(M') \,E^I_{(\gamma,0) +(\delta, \eta) \cdot M'   }(\tau) \;,
 \end{align}
 and in particular
 \begin{align}\label{E_TS}
&T: \quad E^I_{(\delta, \eta)}(\tau+1) =   e^{-\pi i Q(\delta, \eta)}\,  E^I_{(\delta, \eta+\delta)}(\tau) \;, \\
&S: \quad E^I_{(\delta, \eta)}\left(-\frac{1}{\tau}\right) = \frac{e^{-\frac{ i \pi (p_I-q_I)}{4}} }{\sqrt{\textrm{det}\, Q_I}} \tau^{\frac{p_I}{2}}   \taubar^{\frac{q_I}{2}} e^{-\pi i Q(\delta, \eta)}  \sum_{\gamma\in \mathscr{D}_I} E^I_{  (\gamma-\eta, \delta)}(\tau) \;.
\end{align}

 To verify this, let us write $M'=\left( \begin{array}{cc} a' & b' \\ c' & d' \end{array} \right)$
 and
 \begin{align}
    E_{I,(\delta, \eta)} (\tau) 
&= \delta_{\delta\in I} +     \sum_{(c,d)=1,c>0} \left(\frac{-c\tau+a}{c\tau+d} \right)^{-\frac{p}{2}} \left(\frac{-c\taubar+a}{c\bar{\tau}+d} \right)^{-\frac{q}{2}}  \nonumber\\
&\qquad \times  \mu_{(\delta, \eta) \cdot M^{-1}}
\sum_{\gamma\in \mathscr{D}_I}  \delta_{\gamma +d\delta -c\eta \in I} \,\mathcal{U}_{0, \gamma}(M_{c,d}^{-1}, \tau) \;. 
\end{align}
When evaluating $E^I_{(\delta,\eta)}(\tau_{M'})$
we can use from \eqref{M_consistent_I}
\begin{align}
 \mathcal{U}^I_{0, \gamma}(M_{c,d}^{-1}, \tau_{M'}) = \sum_{\beta \in \mathscr{D}_I} \mathcal{U}^U_{0, \beta}\left((M'^{-1}  M_{c,d})^{-1}, \tau \right) \, \mathcal{U}^I_{\beta, \gamma}(M'^{-1}, \tau_{M'})
 \end{align}
 and
 \begin{align}
 \mathcal{U}_{\beta, \gamma}(M'^{-1}, \tau) =
\mathcal{U}_{\beta, \gamma}(M'^{-1}, \tau_{M'}) \left(\frac{-c\tau+a}{c\tau+d} \right)^{-\frac{p}{2}} \left(\frac{-c\taubar+a}{c\bar{\tau}+d} \right)^{-\frac{q}{2}} \;.
 \end{align}
In the sum over $M$ we can replace $M$ by $M'M$, to obtain
 \begin{align}
    E_{I,(\delta, \eta)} (\tau_{M'}) 
&= \delta_{\delta\in I} +    \sum_{\beta \in \mathscr{D}_I}    \mathcal{U}_{\beta, \gamma}(M'^{-1}, \tau_{M'})
\sum_{(c,d)=1,c>0}  \mu_{(\delta, \eta) \cdot M' \cdot M^{-1}}
\sum_{\gamma\in \mathscr{D}_I}  \delta_{\gamma +d\delta -c\eta \in I} \, \mathcal{U}_{0, \beta}(M_{c,d}^{-1}, \tau)  \;. 
\end{align}

\paragraph{Proof}

We can now prove \eqref{EI_average}.
Let us consider the function $F_Q:=E_Q  -\langle \vartheta_Q \rangle$.
This function satisfies the following three properties:
\begin{enumerate}
\item $F_Q$ is a modular form for a particular congruence subgroup, denoted $\Gamma$, of $\mathrm{PSL}(2,\mathbb{Z})$; 
see \Cref{appendix:congruence}. This congruence subgroup depends on the dimension of the lattice/quadratic form, specifically if whether the lattice dimension is even or odd.

\item $F_Q$ has no singularities at the cusps of $\mathbb{H}/\Gamma$, 
and hence is square-integrable. This is because $E_Q$ and $\vartheta_Q$ have the same behaviour at the cusps; see \Cref{cuspasymp}.

\item $\tau_2^{(p_I+q_I) / 4}F_Q$ is an eigenfunction of the weight $k$ Laplacian
\ie 
\square_k:=-\tau_2^2\left(\partial_1^2+\partial_2^2\right)+i k \tau_2 \partial_1,
\fe
where $k=(p_I-q_I)/2$, with eigenvalue $-\frac{((p_I+q_I) / 4-1)(p_I+q_I)}{4}$. This follows since
  \begin{align}
    \left(\tau_2(\partial_1^2+\partial_2^2)+\frac{p_I+q_I}{2}\partial_2+\frac{i(q_I-p_I)}{2}\partial_1\right) F_Q(\tau)=0 \;.
  \end{align}
\end{enumerate}

These facts are enough to conclude $F_Q=0$ as in \cite[section 2.3]{Ashwinkumar:2021kav}. 
We have thus proven the identity \eqref{EI_average}. 

\paragraph{Poincar\'e Sum}

We can interpret the right hand side of \cref{EI_average} as a sum over 
geometries in the bulk. Using the transformation law of the Dedekind eta function from \Cref{appendix:modular_transformation}, we have
\begin{align}\label{poincare_1} 
    \langle Z^I_{(\alpha+mW,\beta+nW)}(\tau, \taubar) \rangle
&= \sum_{g\in \Gamma_\infty\backslash\PSL(2,\mathbb{Z})} 
\epsilon(g)^{\sigma_I}\frac{ \gamma^I_{(\alpha, \beta; m,n)}(g)}{\eta(g\cdot \tau)^{p_I}\overline\eta(g\cdot \overline{\tau})^{q_I}}
\;.
\end{align}
This suggests that the average has a holographic interpretation.
However, \cref{poincare_1} describes only one sector of the boundary CFT, and we must cast the entire partition function as a Poincar\`{e} sum 
to justify a holographic interpretation. This can be done by folding in the twisted sector partition functions $Z^{I^{\perp}}(\omega^m,\omega^n)$ into the Poincar\`{e} sum in the same way as the Dedekind eta functions above. Thus we write the averaged twisted sector contribution to the partition function as
\begin{align}\label{poincare_2}
\begin{split}
    \langle Z^{I^\perp}(\omega^m,\omega^n)&Z^I(\omega^m,\omega^n)\rangle =\\ &C(I, \vartheta)\sum_{g\in \Gamma_\infty\backslash\PSL(2,\mathbb{Z})}\left(\prod_{j:\, l_j\ne 0} \frac{\eta(g\cdot\tau)}{ \gentheta{\frac{1}{2} + m^\prime l_j}{\frac{1}{2}+ n^\prime l_j}(g\cdot\tau)}
 \prod_{j:\, r_j\ne 0}  \frac{\etabar(g\cdot\taubar)}{ \gentheta{\frac{1}{2} + m^\prime r_j}{\frac{1}{2}+ n^\prime r_j}(g\cdot \taubar)}\right)\\
 &\hspace{4cm} \times\frac{ \Psi_{(\alpha, \beta; m,n)}(g)}{\eta(g\cdot \tau)^{p_I}\overline\eta(g\cdot \overline{\tau})^{q_I}}
 \end{split}
 \;,
\end{align}
where
 \begin{equation}
     \begin{pmatrix}
         m^\prime \\ n^\prime
     \end{pmatrix} = \begin{pmatrix}
         d & -c\\
         -b & a
     \end{pmatrix}\begin{pmatrix}
         m\\ n
     \end{pmatrix}
 \end{equation}
 and the phase $\Psi(g)$ is a combination of phases from the twist and shift sectors:
\begin{align}
     \Psi_{(\alpha, \beta; m,n)}(g) 
     &= \prod_{j:\, l_j\ne 0}\prod_{j:\, r_j\ne 0}  \frac{\myS(g,mr_j,nr_j)}{\myS(g,ml_j,nl_j)} \epsilon(g)^{\sigma_I} \gamma^I_{(\alpha, \beta; m,n)}(g)
     \;.
\end{align} 
The phases $\myS(g,\alpha,\beta)$ are defined as 
\begin{equation}
\label{myS_def}
\myS(M,\alpha,\beta) := e^{i\pi (\alpha(\beta+1)-\alpha^\prime(\beta^\prime+1))} \;.
\end{equation}
(see \Cref{appendix:modular_transformation} for further details).
The untwisted sector average can also be written as a Poincar\`{e} sum~\cite{Ashwinkumar:2021kav}, and so we have shown that the entire averaged CFT partition function can be cast as a Poincar\`{e} sum. Note that the summand in \cref{poincare_2} appears to explicitly depend on the upper two components of the $\SL(2,\mathbb{Z})$ matrix $g$. However, this is illusory since this dependence in the theta functions cancels out with a similar dependence in the phase $\Psi(g)$. 

\subsection{Flavored Case}

Let us next go back to the un-orbifolded case $N=1$.

In \cite{Datta:2021ftn},  ensemble averaging for Jacobi theta functions defined with respect to even self-dual lattices was considered. Here, the average was taken over $\rmO(D,D; \mathbb{Z})\backslash \rmO(D,D;\mathbb{R})$ instead of the usual Narain moduli space. 

We shall generalize this analysis to the case of Jacobi theta functions associated with general even integer lattices which take the form
\ie \label{jactoav}
\vartheta_{\alpha}(\tau, z)=\sum_{\ell} \exp \left(i \pi\left\{\tau Q_L(\ell+\alpha)-\bar{\tau} Q_R(\ell+\alpha)\right\}\right) e^{2 \pi i [Q_L(z,\ell +\alpha )- Q_R(z,\ell +\alpha )]} \;,
\fe
where the integration is now over $\rmO_Q(p,q;\mathbb{Z}) \backslash \rmO(p,q; \mathbb{R})$.\footnote{As in previous cases, we impose $p+q>4$  for the convergence of the integral.}  These functions have a well-defined transformation under $\textrm{PSL}(2,\mathbb{Z})$, as shown in \Cref{appendix:theta}.

Let us identify $\rmO(p,q;\mathbb{R})$ as a set of linear transformations which preserve the quadratic form $\eta_{AB}Z^AZ^B$, where $A,B= 1, 2, \ldots, p+q$, and where $\eta_{A B}=\operatorname{diag}\left(1^p,-1^q\right)$. The generators of $\rmO(p,q,\mathbb{R})$ then have the following representation as differential operators :
\ie 
J^{A B}=\eta^{B C} Z^A \frac{\partial}{\partial Z^C}-\eta^{A C} Z^B \frac{\partial}{\partial Z^C} \;,
\fe 
which can be shown to satisfy 
\ie \relax
[J^{AB},J^{CD}]=\eta^{BC}J^{AD}-\eta^{DB}J^{AC}-\eta^{AC}J^{BD}+\eta^{AD}J^{BC} \;.
\fe
The quadratic Casimir will be especially useful, since it is proportional to the Laplacian in the present representation. This takes the form 
\ie 
J^2=\eta_{A C} \eta_{B D} J^{A B} J^{C D}.
\fe

It will be convenient to use the generalized vielbein $\mathcal{E}$, as defined in \cref{sec:gennarain}, to define a basis for $\Lambda$ where $Q$ is diagonal. In this basis, the Jacobi theta function takes the form 
\ie \label{jactoav2}
\vartheta_h(\tau, z)=\sum_{\tl} \exp \left(i \pi\left\{\tau \mathbb{I}_L(\tl)-\bar{\tau} \mathbb{I}_R(\tl)\right\}\right) e^{2 \pi i [\mathbb{I}_L(\tilde{z},\tl)- \mathbb{I}_R(\tilde{z},\tl)]} \;,
\fe
where ${\tl} =\mathcal{E}(\ell +h)$, $\tilde{z} =\mathcal{E}z$, and where $\mathbb{I}_L$ and $\mathbb{I}_R$ are defined such that $Q_L=\mathcal{E}^T \mathbb{I}_L\mathcal{E}$ and $Q_R=\mathcal{E}^T \mathbb{I}_R\mathcal{E}$. From \eqref{eq:Qdecomp}, we may deduce that $\mathbb{I}_L= (\mathbbm{1} +\mathbbm{1}_{p,q})/2$ and $\mathbb{I}_R= (\mathbbm{1} -\mathbbm{1}_{p,q})/2$. Thus, we may rewrite \eqref{jactoav2} as 
\ie \label{jactoav3}
    \vartheta_h(\tau, z)
    =\sum_{\tl} \exp \left(i \pi\left\{\tau \delta_{ab}\tl^a_L\tl^b_L-\bar{\tau} \delta_{\mbar \nbar}\tl^{\mbar}_R\tl^{\nbar}_R\right\}\right) 
                      e^{2 \pi i [\delta_{ab}\tz^a_L\tl^b_L-\delta_{\mbar \nbar}\tz^{\mbar}_R\tz^{\nbar}_R ]} \;,
\fe
where we denote the $p$- and $q$-dimensional projections of $\tl$ and $\tz$ using the subscripts $L$ and $R$ respectively, and employ the indices $a, b=1,2, \ldots p, \mbar, \nbar=1,2, \ldots q$ as before.

Let us first consider the case where all the chemical potentials are set to zero. The vectors ${\tl}=(\tl_L,\tl_R)$ transform as contravariant vectors under $\rmO(p,q,\mathbb{R})$ rotations, and we can combine each of them into an $\rmO(p,q,\mathbb{R})$ vector $\tl =(\tl^A)=(\tl^{{a}}=\tl_L^{a}, \tl^{\mbar}=\tl_R^{\mbar})$.
Let us define a second-order differential operator $J^2$ by
\ie \label{cas1}
J^2:=L^a{}_b L_a{}^{b}+R^{\mbar}{}_{\nbar} R_{\mbar}{}^{\nbar}-2 T^a{}_{\nbar} T_a{}^{\nbar} \;,
\fe
with
\ie
    L^a{}_{b}:=\tl_L^a \frac{\partial}{\partial \tl_L^b}-\tl_{Lb} \frac{\partial}{\partial \tl_{La}}\;, 
    \quad  
    R^{\mbar}{}_{\nbar}:=\tl_R^{\mbar} \frac{\partial}{\partial \tl_R^{\nbar}}-\tl_{R\nbar} \frac{\partial}{\partial \tl_{R\mbar}}\;, 
    \quad 
    T^a{}{}_{\nbar}:=\tl_L^a \frac{\partial}{\partial \tl_R^{\nbar}}+\tl_{R\nbar} \frac{\partial}{\partial \tl_{La}}\;.
\fe
With the definition 
$\Delta_{\mathcal{M}_Q}:=- J^2/8$, we can show that, when all the chemical potentials, $z$, are set to zero,
\ie  \label{difff1}
\left(-\tau_2^2\left(\partial_2^2+\partial_1^2\right)-\frac{(p+q) \tau_2}{2} \partial_2-i \frac{(q-p) \tau_2}{2} \partial_1+\Delta_{\mathcal{M}_Q}\right) \vartheta_{Q, h}(\tau, \bar{\tau} ; m)=0 \;.
\fe

When we turn on chemical potentials, the definition of the $\rmO(p,q; \mathbb{R})$ generators ought to be extended:
\ie \label{hatgen}
    &\hat{L}^a{}_{b}=\tl_L^a \frac{\partial}{\partial \tl_L^b}-\tl_{Lb} \frac{\partial}{\partial \tl_{La}}+\tz_L^a \frac{\partial}{\partial \tz_L^b}-\tz_{Lb} \frac{\partial}{\partial \tz_{La}}\;,\\ 
    \quad 
    & \hat{R}^{\mbar}{}_{\nbar}=\tl_R^{\mbar} \frac{\partial}{\partial \tl_R^{\nbar}}-\tl_{R\nbar} \frac{\partial}{\partial \tl_{R\mbar}}+\tz_R^{\mbar} \frac{\partial}{\partial \tz_R^{\nbar}}-\tz_{R\nbar} \frac{\partial}{\partial \tz_{R\mbar}}\;,\\ 
   & \hat{T}^a{}_{\nbar}=\tl_L^a \frac{\partial}{\partial \tl_R^{\nbar}}+\tl_{R\nbar} \frac{\partial}{\partial \tl_{La}}+\tz_L^a \frac{\partial}{\partial \tz_R^{\nbar}}+\tz_{R\nbar} \frac{\partial}{\partial \tz_{La}}.
\fe
These generators all annihilate the $\rmO(p,q,\mathbb{R})$ invariant combination $\delta_{ab}\tz_L^a \tl_L^b-\delta_{\mbar \nbar}\tz_R^{\mbar} \tl_R^{\nbar}$. Defining the Casimir operator $\hat{J}$ for the generators given in \eqref{hatgen} by analogy with \eqref{cas1}, we now have 
\ie  \label{difff2}
\left(-\tau_2^2\left(\partial_2^2+\partial_1^2\right)-\frac{(p+q) \tau_2}{2} \partial_2-i \frac{(q-p) \tau_2}{2} \partial_1+\Delta_{\mathcal{M}_Q}\right) \vartheta_{Q, h}(\tau, \bar{\tau},\tz ; m)=0 \;,
\fe
since the statement in the previous sentence means that the proof of \eqref{difff2} follows from that of \eqref{difff1}.

Since the ensemble average is over $\rmO_Q(p,q; \mathbb{Z}) \backslash \rmO_Q(p,q;\mathbb{R})$, the result of the averaging procedure will depend only on the chemical potentials via the $\rmO(p; \mathbb{R}) \times \rmO(q; \mathbb{R})$ invariant combinations $\tz_L^2=\delta_{ab}\tz_L^a\tz_L^b$ and $\tz_R^2=\delta_{\mbar \nbar}\tz_R^{\mbar}\tz_R^{\nbar}$. Notably, the terms that depend on $z$ in $\hat{L}$ and $\hat{R}$ annihilate $\tz_L^2$ and $\tz_R^2$.  As a result, after averaging, only the potential-dependent parts of $\hat{T}^a{}_{\nbar}$ in $\Delta_{\mathcal{M}}$ contribute to the differential equation satisfied by the average. This allows us to obtain the average, denoted $f(\tau,z)$, by solving 
\ie 
\bigg(&-\tau_2^2\left(\partial_2^2+\partial_1^2\right)-\frac{(p+q) \tau_2}{2} \partial_2-i \frac{(q-p) \tau_2}{2} \partial_1
\\&
    + \frac{1}{4}\left(\tz_L^a \frac{\partial}{\partial \tz_R^{\mbar}}+\tz_{R\mbar} \frac{\partial}{\partial \tz_{La}}\right)\left(\tz_{La} \frac{\partial}{\partial \tz_{R\mbar}}+\tz_{R}^{\mbar} \frac{\partial}{\partial \tz^a_{L}}\right)
\bigg)f(\tau,\tz)=0 \;.
\fe 

To achieve this, we first note that 
\ie 
    &\left(\tz_L^a \frac{\partial}{\partial \tz_R^{\mbar}}+\tz_{R\mbar} \frac{\partial}{\partial \tz_{La}}\right)
      \left(\tz_{La} \frac{\partial}{\partial \tz_{R\mbar}}+\tz_{R}^{\mbar} \frac{\partial}{\partial \tz^a_{L}}\right)
     e^{i \pi\left(\frac{c \tz_L^2}{c \tau+d}-\frac{c \tz_R^2}{c \bar{\tau}+d}\right)}\\
&\quad=\left( \frac{4\pi c^2 \tau_2}{(c\tau +d )(c\bar{\tau} +d )}\right)^2 \tz_R^2 \tz_L^2 e^{i \pi\left(\frac{c \tz_L^2}{c \tau+d}-\frac{c \tz_R^2}{c \bar{\tau}+d}\right)}\\
&\quad  +\left( \frac{4\pi c^2 \tau_2}{(c\tau +d )(c\bar{\tau} +d )}\right)(\tz_L^2q +\tz_R^2p )e^{i \pi\left(\frac{c \tz_L^2}{c \tau+d}-\frac{c \tz_R^2}{c \bar{\tau}+d}\right)}.
\fe
In addition, we find
\ie
&\left(-\tau_2^2\left(\partial_2^2+\partial_1^2\right)-\frac{(p+q) \tau_2}{2} \partial_2-i \frac{(q-p) \tau_2}{2} \partial_1 \right) \frac{e^{i \pi\left(\frac{c \tz_L^2}{c \tau+d}-\frac{c \tz_R^2}{c \bar{\tau}+d}\right)}}{(c\tau+d)^{\frac{p}{2}}(c\bar{\tau}+d)^{\frac{p}{2}} }\\
=& -\left[ 4 \left( \frac{\pi c^2 \tau_2}{(c\tau +d )(c\bar{\tau} +d )}\right)^2 \tz_R^2 \tz_L^2 +\left( \frac{\pi c^2 \tau_2}{(c\tau +d )(c\bar{\tau} +d )}\right)(\tz_L^2q +\tz_R^2p )\right]\frac{e^{i \pi\left(\frac{c \tz_L^2}{c \tau+d}-\frac{c \tz_R^2}{c \bar{\tau}+d}\right)}}{(c\tau+d)^{\frac{p}{2}}(c\bar{\tau}+d)^{\frac{p}{2}} } \;.
\fe

This allows us to propose a solution for the ensemble average.
Recall that $\tilde{z}_R^I$ and $\tilde{z}_L^I$ are defined in the basis where $Q$ is diagonal, via the generalized vielbein $\mathcal{E}$. 
Returning to our original basis using $\delta_{ab}\tz^a_L\tz^b_L= Q_L(z)$ and $\delta_{\mbar \nbar}\tz^{\mbar}_R\tz^{\nbar}_R= Q_R(z)$,  we claim that the average can be described in terms of an Eisenstein-Jacobi series  that takes the form
\ie \label{eisjac}
E_{Q, h}(\tau,z):=\delta_{h \in \Lambda}+\sum_{(c, d)=1, c>0} \frac{\gamma_{Q, h}(c, d)\, e^{ i\pi \left(\frac{c}{c \tau +d} Q_L(z) -\frac{c}{c \bar{\tau} +d} Q_R(z)\right)}}{(c \tau+d)^{\frac{p}{2}}(c \bar{\tau}+d)^{\frac{q}{2}}} \;.
\fe

The justification for this claim is the uniqueness of this solution, which follows by comparison with the case where all chemical potentials are set to zero, as well as the fact that the Jacobi theta function (and its ensemble average) obeys a pair of heat equations. 
In the basis where $Q$ is diagonal, these equations take the form 
\ie 
\frac{\partial}{\partial \tau} f( \tz,\tau)=\frac{1}{4 \pi i} \nabla_L^2 f(\tz, \tau) \;, 
\quad 
\frac{\partial}{\partial \bar{\tau}} f( \tz, \tau)=-\frac{1}{4 \pi i} \nabla_R^2 f( \tz, \tau) \;,
\fe
where
\ie 
    \nabla_L^2=\frac{\partial^2}{\partial \tz_{Li} \partial \tz_L^i}\;,
    \quad 
    \nabla_R^2=\frac{\partial^2}{\partial \tz_{RI} \partial \tz_R^I} \;.
\fe 
Since the dependence of the average is only via $\tz_L=\sqrt{\tz_L^2}$ and $\tz_R=\sqrt{\tz_R^2}$, the Laplace operators can be written in spherical coordinates, with the angular parts acting trivially : 
\ie \label{nosphe}
    \frac{\partial f}{\partial \tau}=\frac{1}{4 \pi i}\left[\frac{\partial^2 f}{\partial \tz_L^2}+\frac{p-1}{\tz_L} \frac{\partial f}{\partial \tz_L}\right]\;,
    \quad \frac{\partial f}{\partial \bar{\tau}}=-\frac{1}{4 \pi i}\left[\frac{\partial^2 f}{\partial \tz_R^2}+\frac{q-1}{\tz_R} \frac{\partial Y}{\partial \tz_R}\right] \;.
\fe
Given that it is possible to expand the solution of these equations as 
$\sum_{m, n=0}^{\infty} f_{m, n}(\tau, \bar{\tau}) \tz_L^{2 m} \tz_R^{2 n}$, and since 
\ie 
f(\tau,\tz=0) = \delta_{h \in \Lambda}+\sum_{(c, d)=1, c>0} \frac{\gamma_{Q, h}(c, d)}{(c \tau+d)^{\frac{p}{2}}(c \bar{\tau}+d)^{\frac{q}{2}}} \;,
\fe 
we find that any possible term, denoted $\Delta f$,  that could be added to the solution \eqref{eisjac} ought to have no constant term in its expansion, that is, $\Delta f_{0,0}=0$. The equations \eqref{nosphe} imply recursion relations relating $f_{m+1,0}$ and $f_{m,0}$, as well as  $f_{0,n+1}$ and $f_{0,n}$, whereby we find that $\Delta f$ must be zero. Thus, \eqref{eisjac} is indeed the unique solution to the differential equation that remains after ensemble averaging, and is the result for the ensemble average of the Jacobi theta function \eqref{jactoav} over $\rmO_Q(p,q,\mathbb{Z}) \backslash \rmO(p,q,\mathbb{R})$.

To show that the expression \eqref{eisjac} indeed satisfies the correct modular transformation property, we note that for  
\ie 
f(M,\tau,z)=e^{ i\pi \left(\frac{c}{c \tau +d} Q_L(z) -\frac{c}{c \bar{\tau} +d} Q_R(z)\right)}
\fe
we have 
\ie 
f(M_1,\tau,z)f(M_2,\tau_{M_1},z_{M_1\tau})=f(M_2M_1, \tau,z )\;,
\fe
which is the analogue of the composition law \eqref{M_consistent}. 

Although we shall not investigate this in detail, one can in principle go on to study the ensemble average of orbifolded Narain CFTs with the inclusion of chemical potentials for global symmetries in the partition function.
\section{Comments on the Holographic Bulk}\label{sec:bulk}

We next comment on the bulk holographic dual. While we leave a detailed analysis for future work,
we will here comment on the salient features of the bulk theory.\footnote{See \cite{Benjamin:2021wzr} for discussion of orbifolded holography for 
some special examples of even self-dual $Q$.}

Let us consider the holography for the orbifolded generalized Narain theory after the ensemble average.
First, the orbifold of the Narain theory by a symmetry group $G$ will correspond to a gauging of the global $G$-symmetry of the 
bulk. We expect that the resulting theory is still a TQFT, whose anyon content will described by the 
enlarged discriminant group \cref{hat_D}, which contains the gauge-equivalence class of anyons of the original theory
as well as newly introduced anyons due to $G$-gauging. 
The modular transformations \eqref{TS_I_averaged} of the theory
determine the action of the modular group on the Hilbert space of the bulk theory on the boundary two-torus.

The expression of the Eisenstein series as a Poincar\`{e} sum \eqref{poincare_1} 
can be interpreted as a sum over $\PSL(2, \mathbb{Z})$ black holes \cite{Maldacena:1998bw, Dijkgraaf:2000fq},
which contain thermal AdS$_3$ and the BTZ black hole \cite{Banados:1992wn} as special cases. 
As in the discussion of the un-orbifolded case \cite{Ashwinkumar:2021kav},
the coefficients $\gamma^I_{(\alpha, \beta; m, n)}(g)$ in \cref{poincare_1} can be regarded as an overlap of the 
two wave functions on the solid torus, which functions are determined by the data $(\alpha, m)$ and $(\beta, n)$,
and $g$ represents the gluing of the two solid tori by an element $g$ of the mapping class group,
where $g$ here should be regarded as an element of $\PSL(2, \mathbb{Z})/\Gamma_{\infty} = \PSL(2, \mathbb{Z})/ \mathbb{Z}$. The resulting overlap is the 
partition function of the bulk TQFT
on the lens space, where the choice of the lens space is determined by $g$ and we have in general non-trivial Wilson line insertions each of the solid tori, resulting in the Hopf link invariant.
Thus the ensemble average of the boundary torus partition function represents a sum over an infinite set of 
lens space invariants of the bulk theory.
 
\section{Summary and Discussion}\label{sec:discussion}

In this paper, we discussed a class of generalized Narain CFTs associated with 
an even integral quadratic form, studied previously in \cite{Ashwinkumar:2021kav}.
We discussed orbifolds of the generalized Narain CFTs, in a procedure similar to the 
orbifolding of CFT associated with toroidal compactification of string theory.
We then discuss the ensemble averages of the orbifolded theories,
and we obtained a new generalization of Eisenstein series.

There are several obvious tasks to be pursued in the future. One problem is to identify precise holographic duals for generalized Narain theories after ensemble average, along the lines of Ref.~\cite{Afkhami-Jeddi:2020ezh,Maloney:2020nni,Ashwinkumar:2021kav}. One may also consider finding roles of such theories in string theory and more generally quantum gravity (cf.\ \cite{Ashwinkumar:2023jtz}). A direct extension of this paper is to odd quadratic forms, where we expect a spin CFT, i.e., a CFT dependent on the choice of the spin structure (See Ref.~\cite{Ashwinkumar:2021kav} discussions on the un-orbifolded case.).  Generalization of this work to asymmetric orbifolds \cite{Narain:1990mw} and CHL orbifolds would be an obvious next step. A recent topic of interest seems to be the study of relationships between code CFTs and quantum error corrections, see e.g.\ \cite{Dymarsky:2020bps,Dymarsky:2020qom,Kawabata:2023iss,Afkhami-Jeddi:2020ezh,Aharony:2023zit,Barbar:2023ncl}. It would be an interesting undertaking to identify the points of the moduli space where the generalized Narain CFT becomes rational (cf.\ \cite{Wendland:2000ye,Kidambi:2022wvh,Furuta:2023xwl}).

\section*{Acknowledgements}

We would like to thank Matthew Dodelson for collaboration in early stages of this project. We would also like to thank Florent Baume, Weiguang Cao, Craig Lawrie, Jacob McNamara, Yuto Moriwaki, Caner Nazaroglu, and Yunqin Zheng for useful discussions. 

A.K.'s research is/was supported by the Schrödinger Junior Research Fellowship (current), Riemann Fellowship (2022). A.K. thanks the hospitality of the Riemann Center for Geometry and Physics, the Simons Center for Geometry and Physics (during the Number Theory and Physics 2022 workshop), the Abdus Salam ICTP, and the Pollica Center for Physics (during the workshop on new connections between number theory and physics). M.Y.\ would like to thank KITP, Santa Barbara for hospitality during the Integrable 22 workshop. A.K. and M.Y thank the Isaac Newton Institute for Mathematical Sciences for support and hospitality during the programme ``Black Holes: Bridges betweeen number theory and holographic quantum information", and were supported by EPSRC Grant Nr. EP/R014604/1. J.M.L. thanks Kavli Institute for the Physics and Mathematics of the Universe for hospitality during the completion of a portion of this work. A.K and M.A. thank Deutsches Elektronen-Synchrotron (DESY) for hospitality during the completion of a portion of this work.
M.A.\ and M.Y.\ are supported in part by the JSPS Grant-in-Aid for Scientific Research (19H00689, 20H05860).
J.M.L. is supported by the Deutsche Forschungsgemeinschaft under Germany's Excellence Strategy - EXC 2121 ``Quantum Universe''. 
M.Y.\ is also supported in part by the JSPS Grant-in-Aid for Scientific Research (19K03820, 23H01168), and by JST, Japan (PRESTO Grant No.\ JPMJPR225A, Moonshot R\&D Grant No.\ JPMJMS2061).

\appendix

\section{Conventions and Notations}\label{appendix:conventions}

We list conventions, notations and standard definitions.
\begin{itemize}
    \item $\Lambda$: discrete data defining a generalized Narain theory,
    i.e.\ integer lattice of signature $(p,q)$ endowed with a $(p+q)$-ary quadratic form $Q$.
    \item $\Lambda^*$: dual lattice of $\Lambda$ 
    \item $\disq$: discriminant group of $Q$, $\disq= \Lambda^*/\Lambda$ and $\lvert\disq\rvert = \lvert\text{det}(Q)\rvert$
    \item $\repsp$: representation space of $Q$, such that $H\in\repsp \rightsquigarrow HQ^{-1}H = Q$
    \item $H$: Majorant associated to quadratic form $Q$, such that $HQ^{-1}H = Q$
    \item $\nviel$: generalized Narain vielbein defined by $Q= \nviel^T \mathbbm{1}_{p,q} \nviel$  and defines $H = \nviel^T\nviel$ 
    \item $\mlatt$: generalized Narain momentum lattice built from $\Lambda$ and $\nviel$
    \item $\mlatt^*$: dual lattice of $\mlatt$
    \item $\genTD$: T-duality group of entire generalized CFT, i.e.\ elements preserving $Q$
    \item $\genTDa$: T-duality group associated to a specific anyon, i.e.\ elements preserving $Q$ and the class $[\alpha] \in\disq$
    \item $\mathfrak{p}$: element of $\mlatt$
    \item $\modsp$: generalized Narain moduli space
    \item $\mlattn$: (standard) Narain lattice
    \item $\modspn$: (standard) Narain Moduli space
    \item $\narTD$: T-duality group of the (standard) Narain CFT
\end{itemize}

\section{T-dualities}
\label{appendix:TDproofs}
Here we prove a few statements in the main text. 

\begin{lemma}\label{lem:autodual}
    Every element of $\genTD$ defines an automorphism of  $\Lambda^*$.
\end{lemma}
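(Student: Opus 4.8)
The plan is to unwind the two defining properties of an element $\Sigma \in \genTD$ and feed them into the definition \eqref{Lambda_dual} of $\Lambda^*$. Namely: (i) $\Sigma \in \GL(p+q;\mathbb{Z})$, so $\Sigma$ and $\Sigma^{-1}$ both have integer entries and $\Sigma$ restricts to a group automorphism of $\Lambda$ (indeed $Q(\Sigma\ell)=\ell^T\Sigma^TQ\Sigma\ell=Q(\ell)$, which lies in $2\mathbb{Z}$ precisely when $Q(\ell)$ does); and (ii) $\Sigma^T Q\Sigma=Q$, i.e. $\Sigma$ preserves the bilinear form $Q(\cdot,\cdot)$ in the sense $Q(\Sigma x,\Sigma y)=Q(x,y)$ for all $x,y\in\mathbb{R}^{p+q}$.

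First I would show $\Sigma(\Lambda^*)\subseteq\Lambda^*$. Fix $x\in\Lambda^*$ and an arbitrary $\ell\in\Lambda$. Since $\Sigma$ is an automorphism of $\Lambda$ we have $\Sigma^{-1}\ell\in\Lambda$, hence
\[
Q(\Sigma x,\ell)=Q\big(\Sigma x,\Sigma(\Sigma^{-1}\ell)\big)=Q(x,\Sigma^{-1}\ell)\in\mathbb{Z}
\]
by the defining property of $\Lambda^*$. As $\ell\in\Lambda$ was arbitrary, $\Sigma x\in\Lambda^*$. Next I would observe that $\Sigma^{-1}\in\genTD$ as well: inverting and transposing $\Sigma^TQ\Sigma=Q$ gives $(\Sigma^{-1})^TQ\Sigma^{-1}=Q$, and $\Sigma^{-1}\in\GL(p+q;\mathbb{Z})$. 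Applying the inclusion just proved to $\Sigma^{-1}$ yields $\Sigma^{-1}(\Lambda^*)\subseteq\Lambda^*$, so in fact $\Sigma(\Lambda^*)=\Lambda^*$. Being $\mathbb{Z}$-linear, bijective on $\Lambda^*$, and form-preserving, $\Sigma|_{\Lambda^*}$ is an automorphism of $\Lambda^*$ (as an abelian group, and indeed as a quadratic lattice), which is the claim.

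The statement is essentially a direct consequence of the definitions, so there is no substantial obstacle; the only points requiring a little care are (a) verifying that $\Sigma^{-1}$ again lies in $\genTD$, which is what upgrades the inclusion $\Sigma(\Lambda^*)\subseteq\Lambda^*$ to an equality and hence to a bijection, and (b) being explicit about the sense in which $\Sigma$ is an ``automorphism of $\Lambda^*$'' — namely that $\mathbb{Z}$-linearity plus bijectivity on the set $\Lambda^*$ (plus, optionally, $Q$-invariance if one wants the quadratic-lattice structure preserved) suffices. If the surrounding text later needs the analogous fact for the invariant sublattice $I$ and its dual $I^*$ from \eqref{I_dual_def}, the identical argument applies verbatim with $\genTD$ replaced by $\mathrm{O}_I(\mathbb{Z})$.
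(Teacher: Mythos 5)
Your proposal is correct and follows essentially the same route as the paper: both rest on the computation $Q(\Sigma^{\pm 1}x,\ell)=Q(x,\Sigma^{\mp 1}\ell)\in\mathbb{Z}$ (equivalently $\Sigma^{-T}Q=Q\Sigma$), using integrality of $\Sigma^{\pm1}$ and $\Sigma^TQ\Sigma=Q$. The only cosmetic difference is that you obtain bijectivity from the two inclusions $\Sigma(\Lambda^*)\subseteq\Lambda^*$ and $\Sigma^{-1}(\Lambda^*)\subseteq\Lambda^*$, whereas the paper argues surjectivity via $\Sigma^{-1}\alpha\in\Lambda^*$ and injectivity separately by contradiction.
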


\begin{proof}
Let $\Sigma\in\genTD$. First, we note that for every $\alpha_1\in\Lambda^*$, there exists another $\alpha_2\in \Lambda^*$ such that $\alpha_1 = \Sigma\alpha_2$. This follows from the fact that $\Sigma^{-1}$ (which is integral and unimodular) exists and 
\begin{equation}
        Q(\Sigma^{-1}\alpha,\ell) =\alpha^T \Sigma^T{}^{-1} Q \ell = \alpha^T Q \Sigma \ell \in \mathbb{Z}
\end{equation}
for all $\ell\in\Lambda$, so that $\alpha_2:=\Sigma^{-1} \alpha_1\in \Lambda^{*}$ for $\alpha_1\in \Lambda^{*}$.

Next, we note that if $\alpha_1 = \Sigma \alpha_2$, $\alpha_1^\prime = \Sigma\alpha_2^\prime$, and $\alpha_1\neq \alpha_2$, then $\alpha_1^\prime\neq \alpha_2^\prime$. This follows by contradiction: assume $\alpha_1^\prime =  \alpha_2^\prime$. Then 
\begin{equation}
    \alpha_1 - \alpha_2 = \Sigma^{-1}(\alpha_1^\prime - \alpha_2^\prime) = 0 \;.
\end{equation}
This implies $\alpha_1 = \alpha_2$, which is a contradiction. 
Thus $\Sigma\in\genTD$ defines a bijective map from $\Lambda^*$ to itself.

\end{proof}

\begin{lemma}
       Every element of $\genTD$ defines an automorphism of  $\Lambda$.
\end{lemma}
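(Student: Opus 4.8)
The plan is to follow the blueprint of \Cref{lem:autodual}, while noting that the statement is in fact considerably simpler for $\Lambda$ than for $\Lambda^*$: whereas $\Lambda^*$ typically contains non-integral vectors, the lattice $\Lambda$ is generated by integer vectors and every element of $\genTD$ is already an integral, unimodular matrix. So the first step is to record that for $\Sigma \in \genTD$ the inverse $\Sigma^{-1}$ is again integral (since $\det\Sigma = \pm 1$), and moreover $\Sigma^{-1} \in \genTD$ itself: conjugating the defining relation $\Sigma^T Q \Sigma = Q$ on the left by $(\Sigma^{-1})^T$ and on the right by $\Sigma^{-1}$ gives $(\Sigma^{-1})^T Q\, \Sigma^{-1} = Q$. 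In particular $\Sigma$ acts on $\mathbb{Z}^{p+q}$ as a bijection.

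Next I would verify that $\Sigma$ respects the even-norm condition cutting out $\Lambda$. For any $\ell$,
\[
  Q(\Sigma\ell) = (\Sigma\ell)^T Q (\Sigma\ell) = \ell^T (\Sigma^T Q \Sigma)\,\ell = \ell^T Q \ell = Q(\ell) \;,
\]
so $Q(\Sigma\ell) \in 2\mathbb{Z}$ whenever $Q(\ell) \in 2\mathbb{Z}$; that is, $\Sigma(\Lambda) \subseteq \Lambda$. Applying the same identity to $\Sigma^{-1} \in \genTD$ gives $\Sigma^{-1}(\Lambda) \subseteq \Lambda$, and combining the two inclusions yields $\Sigma(\Lambda) = \Lambda$. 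Since $\Sigma$ is linear, injective (being an element of $\GL(p+q;\mathbb{R})$), surjective onto $\Lambda$, and preserves the quadratic form $Q$, it is an automorphism of $\Lambda$.

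There is essentially no obstacle here — the lemma is genuinely elementary. The only point worth flagging is the (routine) observation that $\genTD$ is closed under inversion, which is exactly what upgrades the inclusion $\Sigma(\Lambda) \subseteq \Lambda$ to the equality $\Sigma(\Lambda) = \Lambda$. One could alternatively deduce the result from \Cref{lem:autodual} together with the double-duality identity $\Lambda = (\Lambda^*)^*$ and the fact that $\Sigma$ preserves the pairing $Q$ on $\Lambda^*$, but the direct argument above is shorter and self-contained.
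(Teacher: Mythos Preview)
Your proof is correct and follows essentially the same approach as the paper, which simply states that the argument is identical to that of \Cref{lem:autodual}; you have spelled out the parallel argument explicitly, replacing the dual-pairing condition by the even-norm condition $Q(\Sigma\ell)=Q(\ell)\in 2\mathbb{Z}$, and using closure of $\genTD$ under inversion to upgrade the inclusion to a bijection.
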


\begin{proof}
    The proof is essentially identical to that of~\cref{lem:autodual}.
\end{proof}
These give us the following corollaries immediately:
\begin{cor}
    If $\alpha\in\Lambda^*$ such that $[\alpha]=[0]$, then $[\Sigma\alpha]=[0]$ $\forall\;\Sigma\in\genTD$.
\end{cor}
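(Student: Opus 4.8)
The plan is to obtain this as an immediate consequence of the preceding lemma, which states that every element of $\genTD$ restricts to an automorphism of $\Lambda$. First I would unwind the hypothesis: by the definition $\disq = \Lambda^*/\Lambda$, together with the convention established in the main text that the representatives of $[0]$ are precisely the elements of $\Lambda$, the condition $[\alpha] = [0]$ is equivalent to $\alpha \in \Lambda$ (reading $\alpha$ here as a chosen representative of its class).

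Next I would invoke the lemma directly: for any $\Sigma \in \genTD$, the map $\ell \mapsto \Sigma \ell$ is a bijection of $\Lambda$ onto itself. Applying this to $\alpha \in \Lambda$ yields $\Sigma\alpha \in \Lambda$, which by the same convention is exactly the assertion $[\Sigma\alpha] = [0]$. This closes the argument.

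The only point deserving a line of care is the abuse of notation flagged earlier between representatives and equivalence classes: because $\Sigma$ preserves $\Lambda$ (by \Cref{lem:autodual} and its companion for $\Lambda$), the linear action of $\Sigma$ on $\Lambda^*$ descends to a well-defined action on $\disq$, so the symbol $[\Sigma\alpha]$ is unambiguous and the computation is independent of the chosen representative. I do not anticipate any real obstacle here --- the corollary is a one-line consequence, and the substantive work, namely showing that $\genTD$ acts by lattice automorphisms, has already been carried out in the two lemmas above.
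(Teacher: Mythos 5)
Your proof is correct and follows exactly the route the paper intends: the corollary is stated as an immediate consequence of the lemma that every $\Sigma\in\genTD$ is an automorphism of $\Lambda$, which is precisely the one-line argument you give ($[\alpha]=[0]\Leftrightarrow\alpha\in\Lambda$, so $\Sigma\alpha\in\Lambda$ and $[\Sigma\alpha]=[0]$). Your additional remark on the well-definedness of the induced action on $\disq$ is a sensible clarification that the paper leaves implicit.
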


\begin{cor}
   Let $\alpha_1,\alpha_2\in\Lambda^*$ such that $[\alpha_1]\neq [\alpha_2]$. Then $[\Sigma\alpha_1]\neq[\Sigma\alpha_2]$. Furthermore, $\Sigma\alpha_1\notin\Lambda$ unless $[\alpha_1]=[0]$.
\end{cor}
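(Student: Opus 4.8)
The plan is to deduce both assertions directly from the two lemmas establishing that every $\Sigma\in\genTD$ restricts to an automorphism of $\Lambda$ (and of $\Lambda^*$), together with the elementary observation that $\genTD$ is a group and hence contains $\Sigma^{-1}$ whenever it contains $\Sigma$; the latter is immediate from the defining relation $\Sigma^T Q\Sigma=Q$ and unimodularity of $\Sigma$.

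For the first claim I would argue by contradiction. Suppose $[\Sigma\alpha_1]=[\Sigma\alpha_2]$, so that $\Sigma(\alpha_1-\alpha_2)=\Sigma\alpha_1-\Sigma\alpha_2\in\Lambda$. Since $\Lambda^*$ is closed under addition, $\alpha_1-\alpha_2\in\Lambda^*$, and since $\Sigma$ is an automorphism of $\Lambda$ by the second lemma, so is $\Sigma^{-1}$; applying it gives $\alpha_1-\alpha_2=\Sigma^{-1}\bigl(\Sigma(\alpha_1-\alpha_2)\bigr)\in\Lambda$, i.e.\ $[\alpha_1]=[\alpha_2]$, contradicting the hypothesis.

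For the second claim, suppose $\Sigma\alpha_1\in\Lambda$, i.e.\ $[\Sigma\alpha_1]=[0]$. One can either invoke the preceding corollary with the group element $\Sigma^{-1}\in\genTD$ and the lattice vector $\Sigma\alpha_1\in\Lambda^*$, which yields $[\Sigma^{-1}(\Sigma\alpha_1)]=[\alpha_1]=[0]$, or simply note again that $\Sigma^{-1}$ is an automorphism of $\Lambda$, so $\Sigma\alpha_1\in\Lambda$ forces $\alpha_1=\Sigma^{-1}(\Sigma\alpha_1)\in\Lambda$. Taking the contrapositive gives $\Sigma\alpha_1\notin\Lambda$ whenever $[\alpha_1]\neq[0]$. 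There is no real obstacle in the argument; the only point deserving a word of care is the closure of $\genTD$ under inverses, needed so that the lemmas and the earlier corollary may be applied to $\Sigma^{-1}$.
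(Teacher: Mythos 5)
Your argument is correct and is exactly the route the paper intends: the corollary is stated as an immediate consequence of the two lemmas (that every $\Sigma\in\genTD$ is an automorphism of $\Lambda$ and of $\Lambda^*$), and your explicit use of $\Sigma^{-1}\in\genTD$ to pull $\Sigma(\alpha_1-\alpha_2)\in\Lambda$ back to $\alpha_1-\alpha_2\in\Lambda$ is just the surjectivity half of that automorphism statement spelled out. The paper omits these details entirely, so your write-up simply supplies them; nothing is missing or superfluous.
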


\begin{theorem} Then the elements of $\genTD$ define automorphisms of $\disq$ that leave $[0]$ invariant. 
\end{theorem}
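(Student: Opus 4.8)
The plan is to show that each $\Sigma \in \genTD$ induces a well-defined map $\bar{\Sigma}\colon \disq \to \disq$, that this map is a group automorphism, and that it fixes $[0]$. Essentially all of the substantive work has already been done in \Cref{lem:autodual} and the lemma and corollaries following it; what remains is to assemble these pieces.

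First I would check that $\bar{\Sigma}$ is well defined on equivalence classes. Given $\alpha \in \Lambda^*$, \Cref{lem:autodual} guarantees $\Sigma\alpha \in \Lambda^*$, so $[\Sigma\alpha] \in \disq$ makes sense. If $[\alpha] = [\beta]$, i.e.\ $\alpha - \beta \in \Lambda$, then by the lemma that every element of $\genTD$ is an automorphism of $\Lambda$ we have $\Sigma\alpha - \Sigma\beta = \Sigma(\alpha - \beta) \in \Lambda$, hence $[\Sigma\alpha] = [\Sigma\beta]$. Thus $\bar{\Sigma}([\alpha]) := [\Sigma\alpha]$ is well defined. Since $\Sigma$ acts $\Z$-linearly on $\mathbb{R}^{p+q}$, the induced map is additive, $\bar{\Sigma}([\alpha] + [\beta]) = [\Sigma(\alpha+\beta)] = \bar{\Sigma}([\alpha]) + \bar{\Sigma}([\beta])$, so $\bar{\Sigma}$ is a homomorphism of the finite abelian group $\disq$. (One may also observe that it preserves the induced $\mathbb{Q}/\mathbb{Z}$-valued pairing $Q(\cdot,\cdot) \bmod \Z$ on $\disq$, since $\Sigma$ preserves $Q$, although this is more than the statement requires.)

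Next I would establish bijectivity. Injectivity is precisely the second corollary above: if $[\alpha] \neq [\beta]$ then $[\Sigma\alpha] \neq [\Sigma\beta]$. Since $\disq$ is finite, injectivity forces surjectivity; alternatively, $\Sigma^{-1} \in \genTD$ as well, and $\overline{\Sigma^{-1}}$ is a two-sided inverse of $\bar{\Sigma}$. Hence $\bar{\Sigma} \in \mathrm{Aut}(\disq)$. Finally, $\Sigma \cdot 0 = 0 \in \Lambda$, so $\bar{\Sigma}([0]) = [0]$, which is also the content of the first corollary. I do not expect any genuine obstacle here: the statement is a formal consequence of the fact that both $\Sigma$ and $\Sigma^{-1}$ preserve $\Lambda$ and $\Lambda^*$, and the only step requiring explicit verification is the well-definedness of $\bar{\Sigma}$ on the quotient.
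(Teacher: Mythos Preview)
Your proposal is correct and takes essentially the same approach as the paper: the paper's proof is the single sentence ``The proof follows from the above Lemmas and Corollaries,'' and you have simply unpacked that sentence by spelling out well-definedness (via the automorphism of $\Lambda$), injectivity (the second corollary), and the invariance of $[0]$ (the first corollary). There is no substantive difference in strategy.
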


\begin{proof}
The proof follows from the above Lemmas and Corollaries.
\end{proof}

\section{Orbifold Jacobi Theta Functions} \label{appendix:theta}

In this section, we derive the transformation properties of 
the orbifold theta function \eqref{def_Theta_I} and the Jacobi theta function \eqref{jactoav}. While we include these for completeness of the presentation,
we note that essentially the same transformations were derived previously in a 
recent mathematics paper \cite{MR4309840} (see also \cite{MR1625724}).

Since $S$ and $T$ are generators of $\SL(2, \mathbb{Z})$, one can verify the general transformation \eqref{theta_SL2}
by checking that (1) the formula coincides with \eqref{Theta_T}, \eqref{Theta_S} for $T, S$-transformations and that (2) composes consistently under the composition of two $\SL(2, \mathbb{Z})$ matrices. The latter statement follows from \eqref{M_consistent_I}, which is known for the 
un-orbifolded case since the times of C.~Siegel. In the following we will nevertheless directly prove the 
modular transformation under a general element of $\SL(2, \mathbb{Z})$.

\subsection{Modular Transformations}

We shall now consider general modular transformations for Jacobi theta functions with characteristics, associated with an arbitrary even integral lattice, denoted $I$, with indefinite signature $(p_I,q_I)$: 
\ie 
 \vartheta_{\delta, \eta}(\tau,z)=\sum_{y=\ell+\delta} \exp \left(i \pi\left\{\tau Q_L(y)-\bar{\tau} Q_R(y)\right\}\right) e^{2 \pi i Q(y, \eta)}e^{2 \pi i Q(z,y)}.
 \fe 

Under a modular transformation by $[M]=\left[\left(\begin{array}{cc} a& b \\ c&d \end{array}\right) \right] \in \PSL(2, \mathbb{Z})$:
\ie
    \tau &\rightarrow \tau_{M} := \frac{a\tau+ b }{ c  \tau+ d  } = \frac{a}{ c }+ c ^{-2} \frac{-c}{c\tau +d} 
    \;,\\
    z &\rightarrow z_{M,\tau}:=Q^{-1}\left( \frac{1}{c \tau +d }Q_L-\frac{1}{c \bar{\tau} +d}Q_R\right)z,
\fe
where we assumed $c\ne 0$.
To simplify the notation, in this appendix we denote $\vartheta_{I, (\delta, \eta)}$ by $\vartheta_{\delta, \eta}$.

We begin with the expression
\begin{align}\label{theta_begin}
     \vartheta^I_{\delta,\eta}(\tau_{M},z_{M,\tau}) &= \sum_{\ell \in I} e^{i\pi \left[ \frac{a}{ c }Q(\ell+\delta) +  c ^{-2}U(\ell+\delta) \right]} e^{2\pi i Q(\ell+\delta,\eta)}e^{-\frac{2\pi i }{c} U(z,\ell +\delta)} \;,
\end{align}
where we defined 
\begin{align}
    \tau_M Q_L - \bar{\tau}_MQ_R &= \frac{a}{ c }Q + c ^{-2}U \;, \quad
    U :=  \frac{-c Q_L}{c\tau+d} + \frac{c Q_R}{c\taubar+d} \;.
    \label{U_def}
 \end{align}
Note that $U$ depends on the moduli $\tau, \taubar$.
We now write $\ell =  c  \ell'+\ell_c $ in the sum \eqref{theta_begin}, where $\ell' \in I$ and $\ell_c \in I/( c  I)$. After massaging the resulting expression, 
we obtain\footnote{Here $\cdot$ denotes a multiplication for matrices and vectors. For example, in the expression $U^{-1}[Q\cdot ( c \eta +a\delta)]$, the combination $Q\cdot ( c \eta +a\delta)$ denotes the 
product of a matrix $Q$ and a vector $c \eta +a\delta$, and the resulting vector is then fed into the argument of the quadratic form $U^{-1}$.} 
\begin{align}
     \vartheta^I_{\delta,\eta}(\tau_M,z_{M,\tau})
            &=   \sum_{\ell_c  \in I/( c  I)} e^{\left(i\pi\frac{a}{ c }Q( \ell_c +\delta)\right)} e^{\left(-\frac{2\pi ia}{ c }Q(\ell_c ,\delta)-\frac{2\pi ia}{ c }Q(\delta)\right)}\nonumber\\
                &\hspace{2cm} \times\sum_{\ell'\in I} e^{\left( i\pi  U\left[ \ell' +  c ^{-1}(\ell_c +\delta)+(U^{-1}Q)\cdot ( c \eta +a\delta)-z\right] \right)}e^{-i\pi U^{-1}[Q\cdot ( c \eta +a\delta)]}\nonumber \\
                & \hspace{2cm} \times e^{2\pi i Q(c \eta +a\delta, z)} e^{-i\pi U(z,z) }.
\end{align}
Performing a Poisson resummation in $\ell'\in I$, we obtain\footnote{After a Poisson resummation, we obtain a sum over an element $\ell^*$ of the dual lattice $I^*$,
which we can write as $\ell^*= \ell + \gamma$ with $\ell \in I, \gamma\in I^*/I$.} 
\begin{align} 
     \vartheta^I_{\delta,\eta}(\tau_M,z_{M,\tau})&= [\textrm{det}(-iU)]^{-\frac{1}{2}}\sum_{\ell_c  \in I/( c  I)} e^{\left(i\pi\frac{a}{ c }Q( \ell_c +\delta)\right)} e^{\left(-\frac{2\pi ia}{ c }Q(\ell_c ,\delta)-\frac{2\pi ia}{ c }Q(\delta)\right)}  e^{-i\pi U^{-1}[Q( c \eta +a\delta)]}\nonumber\\
                & \times \sum_{\ell \in I} \sum_{\gamma\in \mathscr{D}_I} e^{\left( -i\pi  U^{-1}[Q\cdot (\ell + \gamma)]-2\pi i \left(Q\cdot (\ell +\gamma )\right)^T \cdot \left\{ c ^{-1}(\ell_c+\delta)+(U^{-1}Q)\cdot ( c \eta +a\delta)-z\right\}  \right) }
                \;. \nonumber \\
                & \times e^{2\pi i Q(c \eta +a\delta, z)} e^{-i\pi U(z,z) }
\end{align}
Since we have (as follows from the definition of $U$ in \eqref{U_def})
\begin{align}
    -U^{-1} [Q\cdot y]&= \bigg(\frac{ d  }{ c }Q + (Q_L\tau - Q_R\bar{\tau})\bigg) [y] \;,
\end{align}
and 
\ie 
e^{ -i\pi U(z,z)}=  e^{ i\pi \left(\frac{c}{c \tau +d} Q_L(z) -\frac{c}{c \bar{\tau} +d} Q_R(z)\right)},
\fe 
we can rewrite
\begin{align} \label{large_sum}
     &\vartheta^I_{\delta,\eta}(\tau_M,z_{M,\tau })= [\textrm{det}(-iU)]^{-\frac{1}{2}}\left(\sum_{\ell_c  \in I/( c  I)} e^{\left(i\pi\frac{a}{ c }Q( \ell_c +\delta)\right)} e^{\left(-\frac{2\pi ia}{ c }Q(\ell_c ,\delta)-\frac{2\pi ia}{ c }Q(\delta)\right)} \right) \nonumber\\
                & \times e^{i\pi \frac{d}{c} Q( c \eta +a\delta)}  e^{i\pi (\tau Q_L - \taubar Q_R)( c \eta +a\delta)}  e^{ i\pi \left(\frac{c}{c \tau +d} Q_L(z) -\frac{c}{c \bar{\tau} +d} Q_R(z)\right)} \nonumber\\
                & \times \sum_{\ell \in I} \sum_{\gamma\in \mathscr{D}_I} 
               e^{\left[ i\pi  \frac{d}{c}(\ell + \gamma) + i\pi  (\tau Q_L - \taubar Q_R) (\ell+\gamma) 
                -\frac{2\pi i}{c} Q(\ell +\gamma ,\ell_c+\delta) 
                 +\frac{2\pi i d}{c} Q (\ell +\gamma, c \eta +a\delta)
                 + 2\pi i (\tau Q_L - \taubar Q_R)  (\ell +\gamma, c \eta +a\delta)
                \right] }
                \; \nonumber
                \\ & \times e^{2i\pi Q(\ell + \gamma +c \eta +a\delta, z)} .
\end{align}
While the expression \eqref{large_sum} looks formidable, the 
moduli-dependent terms
combine nicely into
\begin{align}
e^{ i\pi (Q_L\tau - Q_R\bar{\tau})[\ell+\gamma+ c \eta+a\delta ] }  e^{ i\pi \left(\frac{c}{c \tau +d} Q_L(z) -\frac{c}{c \bar{\tau} +d} Q_R(z)\right)}\;,
\end{align}
while the moduli-independent exponential terms combine into
\begin{align}\label{combination}
e^{2i\pi Q(\ell + \gamma +c \eta +a\delta, z)} e^{2i\pi Q[\ell +\gamma+ c  \eta  + a \delta,  d   \eta +  b \delta] } \,
 e^{-i\pi Q(a\delta+ c \eta, b\delta + d \eta) }
 \left(\sum_{\ell_c \in I/( c  I)} e^{\frac{i\pi}{ c }\left(a Q[\ell_c ] - 2Q[\ell_c , \gamma] 
                    +  d   Q[\gamma]\right) }  \right) \;.
\end{align}
This combination is preserved when we shift $\gamma$ 
by an element of $I$, so that we can set $\ell=0$ in \eqref{combination}. 
This means we can bring \eqref{combination} outside the sum over $\ell$,
and we find that the sum over $\ell$ over the remaining $\ell$-dependent expression gives rise to the Jacobi theta function  $\vartheta^I_{(\gamma +a\delta+ c \eta, b\delta + d  \eta)}(\tau,z)$.
Finally, we need to include the remaining factor
\begin{align}
[\textrm{det}(-iU)]^{-\frac{1}{2}} = \left[\textrm{det}\left(\frac{-ic Q_L}{c\tau+d} \right) \right]^{-\frac{1}{2}} \left[ \textrm{det}\left(\frac{ic Q_R}{c\taubar+d}\right)  \right]^{-\frac{1}{2}} 
= \frac{ e^{-\frac{i\pi}{4}(p_I-q_I)} c^{-\frac{p_I+q_I}{2}} \left(\textrm{det}\, Q\right)^{-\frac{1}{2}}}{(c\tau+d)^{-\frac{p_I}{2} }   (c\taubar+d)^{-\frac{q_I}{2} } } \;.
\end{align}
  Shifting $\delta$ by an element $\gamma'\in \mathscr{D}_I$, and defining $\tilde{\gamma}=\gamma + a \gamma' \in \Lambda^*/\Lambda$ allows us to retrieve the most general modular transformation. By combining all the factors above, we obtain
\ie 
\vartheta_{(\delta+\gamma', \eta)}\left(\tau_M,z_{M,\tau}\right)&=\frac{ e^{-\frac{i\pi}{4}(p_I-q_I)} c^{-\frac{p_I+q_I}{2}} \left(\textrm{det}\, Q\right)^{-\frac{1}{2}}}{(c\tau+d)^{-\frac{p_I}{2} }   (c\taubar+d)^{-\frac{q_I}{2} } }\left(e^{-i \pi ab Q[\delta]-2 i \pi bc Q[\delta, \eta]-i \pi cd Q[\eta]}\right) \\
& \quad \times 
e^{ i\pi \left(\frac{c}{c \tau +d} Q_L(z) -\frac{c}{c \bar{\tau} +d} Q_R(z)\right)} \sum_{\tilde{\gamma}} \lambda_{\gamma', \tilde{\gamma}} \vartheta_{\left(\tilde{\gamma}+c \eta+a \delta, d \eta+b\delta\right)} (\tau,z) ,
\fe
where 
\ie 
    \lambda_{\gamma',\tilde{\gamma}}
    :=\sum_{\ell_c \in I/cI}e^{\frac{i \pi}{c}\left(a Q[\ell_c+\gamma']-2 Q\left[\tilde{\gamma}, \ell_c+\gamma'\right]+dQ\left[\tilde{\gamma}\right]\right)} \;.
\fe
In particular, setting $\delta=0=\eta$, we find the transformation formula for a Jacobi theta function associated with a general even integral lattice 
\ie 
&\vartheta_{\gamma'}\left(\tau_M,z_{M,\tau}\right) =\frac{ e^{-\frac{i\pi}{4}(p_I-q_I)} c^{-\frac{p_I+q_I}{2}} \left(\textrm{det}\, Q\right)^{-\frac{1}{2}}}{(c\tau+d)^{-\frac{p_I}{2} }   (c\taubar+d)^{-\frac{q_I}{2} } }  \sum_{\gamma} \lambda_{\gamma',\gamma} \vartheta_{\gamma } (\tau,z) 
e^{ i\pi \left(\frac{c}{c \tau +d} Q_L(z) -\frac{c}{c \bar{\tau} +d} Q_R(z)\right)}.
\fe
If we instead set $z=0$, we obtain \eqref{theta_SL2}. 

We can see from \eqref{theta_SL2} that the only effects of $\delta$ and $\eta$ are to 
change the overall phases of the modular transformation. Note that in our derivation we have not imposed any conditions on $\delta$ and $\eta$.
If we consider the S-transformations with $a= d  =0$ and $ c  = - b  = 1$, we obtain
\begin{align}
	\vartheta^I_{\delta,\eta} \left(-\frac{1}{\tau}\right) &= \frac{e^{-i\pi \frac{(p_I-q_I)}{4}}}{\sqrt{\lvert \mathscr{D}_I \rvert}}\tau^{p_I/2}\bar{\tau}^{q_I/2}e^{2\pi i Q[\eta,\delta]}\sum_{\gamma\in \mathscr{D}_I}  \vartheta^I_{( \gamma +\eta,-\delta)}(\tau) \;.
\end{align}

While we assumed $c\ne 0$ in the derivation, the case $c=0$ amounts to the $T$-transformation of the form $\tau\rightarrow \tau+ab$ with $a^2=1$, and $z \rightarrow az$. In this case, it is straightforward to show that 
\ie 
    \vartheta_{(\delta+\gamma', \eta)}\left(\tau_M,z_{M,\tau}\right)
    &= e^{\pi i ab Q(\gamma')} e^{-\pi i ab Q(\delta)}\vartheta_{(\delta+\gamma', \eta +ab\delta  )}\left(\tau,az\right)\\ 
    &= e^{\pi i ab Q(\gamma')} e^{-\pi i ab Q(\delta)}\vartheta_{(a\delta+a\gamma', a\eta +b\delta  )}\left(\tau,z\right).
\fe 
In fact, the formula \eqref{theta_SL2} works for the T-transformations with $a= b = d  =1$ and $c =0$:
\begin{align}
	\vartheta^I_{\delta,\eta} (\tau+1) 
    &= e^{-i\pi Q[\delta]}\, \vartheta^I_{(\delta,\eta+\delta)}(\tau) \;,
\end{align}
which formula can be verified directly from the definition of the theta function.

\subsection{Asymptotics at Cusps}\label{cuspasymp}

At the cusp $\tau\to i\infty$, we find from the definition \eqref{def_Theta_I} that
\begin{align} \label{infinity}
\vartheta_{I, (\delta, \eta)} (\tau\to i\infty)= \delta_{\delta \in I} \;.
\end{align}
We use the transformation law of the theta functions to determine the behavior at another cusp $\tau = -d/c$.
For this purpose we use \eqref{theta_SL2} for $M^{-1} = \begin{pmatrix}d & -b\\ -c & a\end{pmatrix}$ 
\begin{align}
     \vartheta^I_{(\delta,\eta)}\left(\frac{d\tau-b}{-c\tau+a}\right)
     & = \mu_{(\delta, \eta)\cdot M^{-1}} \sum_{\gamma \in \mathscr{D}_I}\mathcal{U}_{\gamma}(M^{-1}) \,\vartheta_{(\gamma + d\delta- c \eta, -b \delta+a  \eta )}(\tau) \;,
 \end{align}
Let us define $\tau':=(d\tau-b)/(-c\tau+a)$, 
or equivalently $\tau=(a\tau'+b)/(c\tau'+d)$,
and consider taking the limit $\tau\to i\infty$,
which is equivalent to $\tau'\to -d/c$.
In this limit, using \eqref{infinity}, we obtain\footnote{
The expression on the right-hand side depends only on the element $\PSL(2,\mathbb{Z})/\Gamma_{\infty}$,
namely the expression depends only on the entries $c,d$ of the $\PSL(2, \mathbb{Z})$ matrix.
To see this, let us pick up a modular inverse $\dtil$ of the integer $d$ via $d\dtil = 1 \text{ mod } c$, so that
\begin{align}
\begin{split}
    a = ck +\dtil \;,\quad
    b = dk + \frac{d\dtil-1}{c} \;, \quad (k \in \mathbb{Z}) \;,
\end{split}
\end{align}
where $k$ labels an element of $\Gamma_{\infty}\simeq \mathbb{Z}$.
The $k$-dependence generates an extra factor of
\begin{align}
e^{i \pi k(Q(d\delta-c \eta) -Q(\beta))}\;,
\end{align}
which is equal to the identity under the delta-function constraint $\beta +d\delta -c\eta \in I$ (recall $\beta\in \mathscr{D}_I$).
This ensures that the behavior at the cusp is independent of the upper two entries of the $\PSL(2,\mathbb{Z})$ matrix.}
\begin{align} \label{cusp}
    \vartheta_{(\delta, \eta)}\left(\tau' \to -\frac{d}{c}\right) &\to \frac{1}{\sqrt{\textrm{det}\, Q_I} } \, e^{-\frac{\pi i}{4}(p_I-q_I)} 
     (c\tau'+d)^{-\frac{p}{2}}(c\taubar'+d)^{-\frac{q}{2}}  \mu_{(\delta, \eta) \cdot M^{-1}}   \lambda_{0, -d \delta + c\eta}(M^{-1}) \;,
\end{align}
where we used $-c\tau+a = 1/(c\tau'+d)$. This matches the behavior of the Eisenstein series \eqref{E_sum_1} at the cusp.

As is clear from this discussion above, the only properties needed for the derivation of the 
asymptotic behavior at the cups are (i) behavior at $\tau\to i\infty$ as in \eqref{infinity}
as well as (ii) the modular transformation rule \eqref{def_Theta_I}.
Since we can directly verify these two properties
for the Eisenstein series \eqref{E_sum_1}, we can repeat the same argument to 
conclude that the Einstein series has the same asymptotic as in \eqref{cusp}.

%
\section{Congruence Subgroup}\label{appendix:congruence}
In this appendix, we prove that the theta functions we have been working with are modular forms for the congruence subgroup $\Gamma(N^2 L)$, where $N$ is the order of the orbifold action and $L$ is an integer multiple of the level $L_Q$ of the quadratic form.
Our discussion here seems to be new for $N>1$.
Our proof relies heavily on the result of \cite{Shintani1975OnCO,Oda1977}, which discuss the un-orbifolded case $N=1$.\footnote{For the special case $N=1$, 
 \cite{Shintani1975OnCO} already noted
 that the modular transformation formulas simplify for a special congruence subgroup,
 and for example already considered $\Gamma(\textrm{lcm}(L_Q, 4))$ for $p+q$ odd.
 The focus there, however, was not necessarily to 
 identify the precise congruence subgroup where the multiplier system evaluates to $1$, and we have not found an explicit statement in the literature that our theta function is modular (without any multiplier system) for a concrete choice of $\Gamma(L)$. }

We shall focus on the case modular transformations where $c\neq 0$ first. Let us reproduce the corresponding transformation law of the orbifold theta functions: 
\begin{align}
    \vartheta_{(\alpha,0)+(\delta,\eta)}(\tau_M) 
    &= \sum_{\beta\in\mathcal{D}}   \mathcal{U}_{\alpha,\beta}(M,\tau;(\delta,\eta))\,
                                                     \vartheta_{\underbrace{(\beta,0)+(\delta,\eta)\cdot M}_{\textrm{characteristics}}} \;,\\
   \mathcal{U}_{\alpha,\beta}(M,\tau;(\delta,\eta)) &= (c\tau+d)^{\frac{p}{2}}(c\taubar+d)^{\frac{q}{2}} \nonumber\\
   &  \times \underbrace{\exp\bigg[-i\pi(ab Q[\delta]+2bc Q[\delta,\eta] +cd Q[\eta])\bigg]}_{\textrm{shift-dependent phase}}
  \underbrace{\lambda_{\alpha,\beta}(M)}_{\textrm{transformation matrix}} \;.
\end{align}
We wish to show that this reduces to the standard transformation for modular functions
when $M\in \Gamma(N^2 L)$.
We break the proof into three parts: the shift-dependent phase, the theta function characteristics, and the transformation matrix.

\paragraph{Shift-dependent Phase}
We investigate the factor
\begin{align}
   \exp\bigg[-i\pi(ab Q[\delta]+2bc Q[\delta,\eta] +cd Q[\eta])\bigg] \;.
\end{align}
For the factor of $\textrm{exp}(-i\pi (ab Q[\delta]))$, we observe that it is trivial since $b\equiv 0\textrm{ mod }N^2L$, $bQ[\delta]\in 2\mathbb{Z}$. This is because $\delta \in I/N$ and from the definition of the level,  $LQ[\ell']\in 2 \mathbb{Z}$ for $\ell'\in I^*$ (recall that $I\subset I^*$). Similarly, since $\eta \in I/N$ and $c\equiv 0\textrm{ mod }N^2L$ the factor of $\textrm{exp}(-i\pi (cd Q[\eta]))$ is trivial as well. Finally, the factor of $\textrm{exp}(-i\pi (2bc Q[\delta,\eta]))$ is trivial since $c\equiv 0\textrm{ mod }N^2L$, so that $c Q[\delta,\eta]\in \mathbb{Z}$.

\paragraph{Characteristics}
We look at the behaviour of the characteristics $(\delta, \eta)$ under a modular transformation:
\begin{align}
   (\delta,\eta)\rightarrow (\delta,\eta)\begin{pmatrix}   a & b\\
   c & d
   \end{pmatrix} = (a\delta+c\eta, b\delta +d\eta ) \;.
\end{align}
In the definition of our theta functions, we have a lattice sum over $I+\alpha+\delta$, $\alpha\in\mathcal{D}$ and $\delta\in I/N$. After the transformation, we have the sum
\begin{equation}
   \sum_{\ell\in I+\alpha+a \delta +c\eta} \exp (i \pi \tau Q_L(\ell)-i \pi \bar{\tau} Q_R(\ell)) \exp (2i\pi Q_I(\ell, \delta b+\eta d)) \;.
   \end{equation}
   Now, since 
           $a\delta \equiv \delta\textrm{ mod }I, 
           d\eta \equiv \eta \textrm{ mod }I, 
           b\delta \in I, 
           c\eta \in I$, the sum reduces to 
\begin{equation}
   \sum_{\ell\in I+\alpha+ \delta } \exp (i \pi \tau Q_L(\ell)-i \pi \bar{\tau} Q_R(\ell)) \exp (2i\pi Q_I(\ell, \eta )) \;,
\end{equation}
   and therefore
   the characteristics of the transformed theta function reduce to those of the theta function prior to modular transformation.

\paragraph{Transformation Matrix}
We now show that the matrix $\lambda^{\Lambda}_{\alpha, \beta}(M)$ 
becomes trivial (i.e.\ unity) for elements of the congruence subgroup $\Gamma(N^2L)$. 

Since we are interested in the evaluation of $\lambda_{\alpha, \beta}^{\Lambda}(M)$,
and since this factor was defined by the modular transformation property of the 
un-orbifolded theta function, we can take advantage of the literature for 
$N=1$.

In the following we use the quadratic residue symbol $\displaystyle\left(\frac{a}{b}\right)$ as defined in \cite{Shimura},
which is known as the Kronecker symbol.
This symbol is defined for an integer $a$ and an odd integer $b\ne 0$, and coincides with 
the ordinary quadratic residue symbol (Legendre symbol) when $b$ is an odd prime. This symbol has many interesting properties, 
e.g.\ $\jac{a}{b}$ for a fixed $b$ is a character modulo $b$
 as a function of $a$. See \cite{Shimura} for further properties of this symbol.

For $p+q$ even, it is known \cite{Oda1977}
that $\lambda_{\alpha, \beta}(M)=\delta_{\alpha, \beta}$
for $M\in \Gamma(L_Q)$.
The case of $p+q$ odd is more subtle:
it was shown in \cite{Oda1977} that
\begin{align}
    \lambda_{\alpha, \beta}(M)
    =\left(\frac{-2}{d} \right)
    \left(\frac{c}{d} \right)
    \left(\frac{\det\, Q}{d} \right)
    \delta_{\alpha, \beta}
\end{align}
for $M\in \Gamma(\textrm{lcm}(L_Q, 4))$.
If we further impose $c\equiv 0 \bmod \det\, Q$, we have
$\jac{c}{d}=\jac{\det\, Q}{d}$, which cancels the two factors, leaving 
\begin{align}
    \left(\frac{-2}{d} \right) = \jac{-1}{d}\cdot \jac{2}{d}~,
\end{align}
which can be evaluated using the following identities
\begin{align}
\label{eq:jacobisymbolidentities1}
\begin{split}
 \left( \dfrac{2}{d}\right) &= (-1)^{\frac{d^2-1}{8}} = \begin{cases}
         1, & d \equiv 1,7 \mod 8 \\
         -1, & d \equiv 3,5 \mod 8
     \end{cases}  \\
     \jac{-1}{d} &= (-1)^{\frac{d-1}{2}} 
      =\begin{cases}
         1, & d \equiv 1 \mod 4 \\
         -1, & d \equiv 3 \mod 4
     \end{cases} .
\end{split}
     \end{align}
These symbols trivialize for $d\equiv 1\bmod 8$,
and hence $\lambda_{\alpha, \beta}(M) = \delta_{\alpha, \beta}$.
The restrictions imposed on $c, d$ can be realized by 
choosing $M$ from a congruence subgroup $\Gamma( \textrm{lcm}(L_Q, 8, |\det\, Q|)
=\Gamma( \textrm{lcm}(8, |\det\, Q|))$.

For future reference, let us state this as a Proposition:\footnote{The implication of this result is that if we consider a Narain moduli space of conformal field theories of with difference in cleft and right moving central charges $c_L-c_R = p-q \in 2\mathbb Z$ admitting an action of an order $N$ orbifold, the theta function transforms as a modular form on $\Gamma(N^2 L)$ (which we momentarily refer to as the `even' congruence subgroup), where $L$ is the level of the Narain lattice. However, if the chiral central charge is \textit{odd}, then the theta function transforms on a smaller congruence subgroup contained in the even congruence subgroup.}
\begin{props}
\label{prod:csg}
   $\lambda_{\alpha, \beta}(M) = \delta_{\alpha, \beta}$
   if $M \in \Gamma(L_Q)$ for $p+q$ even and 
   if $M \in \Gamma( \textrm{lcm}(8, |\det\, Q|))$ for $p+q$ odd.   
\end{props}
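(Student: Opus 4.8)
The plan is to recognize that $\lambda_{\alpha,\beta}(M)$ in \eqref{lambda_def} is, up to normalization, exactly the Gauss-sum factor governing the modular transformation of the \emph{un-orbifolded} theta function attached to the quadratic form $Q$ (in the orbifold application one simply takes $Q=Q_I$), so that the Proposition is nothing but the content of the classical computations of Shintani \cite{Shintani1975OnCO} and Oda \cite{Oda1977}, supplemented by a short Kronecker-symbol manipulation in the odd-rank case. The reason this is the right building block for the stronger assertion that the full orbifold theta function is modular on $\Gamma(N^2L)$ is that the remaining orbifold data -- the projected shift vector $W$ and the shifted characteristics $(\delta,\eta)$ -- contribute only the shift-dependent phase and the $\Gamma$-action on the characteristics, both already disposed of in the paragraphs preceding the Proposition; the present statement isolates the purely lattice-theoretic ingredient.

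For $p+q$ even the claim is immediate: after matching the normalization of $\lambda_{\alpha,\beta}(M)$ with the conventions of \cite{Oda1977}, Oda's theorem gives $\lambda_{\alpha,\beta}(M)=\delta_{\alpha,\beta}$ for every $M\in\Gamma(L_Q)$. For $p+q$ odd one starts instead from Oda's evaluation, valid for $M=\bigl(\begin{smallmatrix}a&b\\ c&d\end{smallmatrix}\bigr)\in\Gamma(\textrm{lcm}(L_Q,4))$,
\[
   \lambda_{\alpha,\beta}(M)=\jac{-2}{d}\jac{c}{d}\jac{\det Q}{d}\,\delta_{\alpha,\beta},
\]
with $\jac{\cdot}{\cdot}$ the Kronecker symbol. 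Since the level always divides the determinant, $L_Q\mid|\det Q|$, and $4\mid 8$, one has $\Gamma(\textrm{lcm}(8,|\det Q|))\subseteq\Gamma(\textrm{lcm}(L_Q,4))$, so the formula applies on the smaller group, where in addition $c\equiv 0$ and $d\equiv 1$ modulo $\textrm{lcm}(8,|\det Q|)$. Using multiplicativity and $d$-periodicity of the Kronecker symbol in its upper entry together with quadratic reciprocity, one shows $\jac{c}{d}=\jac{\det Q}{d}$, so these two factors cancel; the surviving factor $\jac{-2}{d}=\jac{-1}{d}\jac{2}{d}$ equals $1$ when $d\equiv 1\bmod 8$ by the identities \eqref{eq:jacobisymbolidentities1}. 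Hence $\lambda_{\alpha,\beta}(M)=\delta_{\alpha,\beta}$, and this makes transparent why the modulus must be enlarged by a factor $8$ rather than only the factor $4$ needed for the half-integral-weight multiplier.

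The step requiring real care -- and hence the main obstacle -- is the identity $\jac{c}{d}=\jac{\det Q}{d}$. One must bear in mind that $\det Q$ can be negative, since $Q$ is indefinite of signature $(p,q)$; that the Kronecker symbol behaves subtly under sign changes of either argument; and that $c$ is only known to be a multiple of $|\det Q|$, not congruent to $\det Q$ modulo $d$. The cleanest route I see is to write $c=|\det Q|\,m$, split $|\det Q|$ into its power of $2$ and its odd part, apply multiplicativity $\jac{c}{d}=\jac{|\det Q|}{d}\jac{m}{d}$ (legitimate because $d$ is odd), and then use $d\equiv 1\bmod 8$ and $d\equiv 1\bmod|\det Q|$ to evaluate $\jac{|\det Q|}{d}$ via reciprocity, reducing everything to controlling $\jac{m}{d}$ with the same congruence data. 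A minor supplementary check is to confirm that the Kronecker-delta already present in Oda's formula matches the collapse of the theta characteristics established in the ``Characteristics'' discussion above, so that the two descriptions of the transformed theta function coincide.
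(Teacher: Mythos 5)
Your overall route is the same as the paper's: quote the Shintani--Oda evaluation of $\lambda_{\alpha,\beta}(M)$, dispose of the even-rank case immediately, and for odd rank cancel $\jac{c}{d}\jac{\det Q}{d}$ and trivialize $\jac{-2}{d}$ on $\Gamma(\textrm{lcm}(8,|\det Q|))$. The even-rank half and the reduction to Oda's formula (including the observation $L_Q\mid|\det Q|$, so that the formula is available on the smaller group) are fine and match the paper.

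The step you flag as ``requiring real care'' is, however, not merely delicate: it fails, and the route you sketch cannot close it. Writing $c=|\det Q|\,m$ and using $d\equiv 1\bmod 8$ and $d\equiv 1\bmod|\det Q|$ does give $\jac{c}{d}=\jac{\det Q}{d}\,\jac{m}{d}$, but the leftover $\jac{m}{d}$ is \emph{not} determined by these congruences. Concretely, take $Q=(2)$ (rank one, $L_Q=2$, $|\det Q|=2$, so the relevant group is $\Gamma(8)$) and
\begin{equation*}
M=\begin{pmatrix}97 & 80\\ 40 & 33\end{pmatrix}\in\Gamma(8).
\end{equation*}
Here $\jac{c}{d}=\jac{40}{33}=\jac{7}{33}=\jac{7}{3}\jac{7}{11}=-1$ while $\jac{\det Q}{d}=\jac{2}{33}=1$, and a direct evaluation of the quadratic Gauss sum in \eqref{lambda_def} gives $\lambda_{0,0}(M)=-1$, consistent with the classical multiplier $\jac{c}{d}\epsilon_d^{-1}$ of $\sum_n q^{n^2}$ on $\Gamma_0(4)$. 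The same phenomenon occurs in any odd rank (e.g.\ $Q=\mathrm{diag}(2,\dots,2)$), and since every coprime pair $(c,d)\equiv(0,1)\bmod N$ arises as the bottom row of some element of $\Gamma(N)$, one can arrange $\jac{c}{d}=-1$ inside \emph{every} principal congruence subgroup. So for $p+q$ odd the factor $\jac{c}{d}$ is the irreducible half-integral-weight multiplier and cannot be removed by shrinking the level; a correct statement must retain it or absorb it into the automorphy factor. To be fair, the paper's own proof asserts $\jac{c}{d}=\jac{\det Q}{d}$ from $c\equiv 0\bmod\det Q$ with no more justification than you give, so you inherit the paper's defect rather than introduce a new one --- but the supplementary argument you propose (``controlling $\jac{m}{d}$ with the same congruence data'') cannot be completed, and the odd-rank case should not be presented as proved.
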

\begin{rmk}
\Cref{prod:csg} demonstrates that the Siegel-Weil formula holds for a sufficiently small principal congruence group $\Gamma \subset \Gamma(N^2 L)$. It is however possible that the congruence subgroup can be enlarged slightly such that the Siegel-Weil theorem still holds, thereby permitting ensemble averages, but we are unaware of justifiable constraints that allow us to do this.
\end{rmk}

\subsubsection*{The $c=0$  case}

   We have so far restricted ourselves to the case of $c\neq 0$. We also need to show that the theta function is a modular function for $\Gamma(N^2L)$ even when $c=0$.    In this case, the modular transformation law takes the form 
   \ie \label{d24}
\vartheta_{(\delta+\alpha, \eta)}\left(\tau_M\right)= e^{\pi i ab Q(\alpha)} e^{-\pi i ab Q(\delta)}\vartheta_{(\delta+\alpha, \eta +ab\delta  )}\left(\tau\right),
\fe 
where $\alpha \in I^*/I$, and $\delta, \eta \in I/N$. 
Since $c=0$, and $a, d=1 \textrm{ mod } N^2L$ for an element of $\Gamma(N^2L)$, it must be the case that $a=1$ and $d=1$ since $N>1$. In particular, $c\tau +d =1$, and the theta function should be invariant under the modular transformation.

Now, since $b = 0\textrm{ mod }N^2L$, and $LQ[\ell'] \in 2 \mathbb{Z}$ for $\ell' \in I^*$, we find that the phases in \eqref{d24} are trivial. Moreover, 
   \ie 
\vartheta_{(\delta+\alpha, \eta +ab\delta  )}\left(\tau\right)=\sum_{y=\ell + \alpha + \delta} e^{\pi i (\tau Q_L[y] -\bar{\tau} Q_R[y])}e^{2\pi i Q(y,\eta)}e^{2\pi i Q(y,ab \delta)},
   \fe
   and we find that 
   \ie 
e^{2\pi i Q(y,ab \delta)}=e^{2\pi i Q(\ell + \alpha +\delta ,ab \delta)}=1
   \fe
   since $a=1$ and $b = 0\textrm{ mod }N^2L$. Thus, we find that 
   \ie 
\vartheta_{(\delta+\alpha, \eta)}\left(\tau_M\right)= \vartheta_{(\delta+\alpha, \eta  )}\left(\tau\right),
   \fe
where $M\in \Gamma(N^2 L)$ with $c=0$.
This completes a proof that the theta function is a modular function for $\Gamma(N^2 L)$.

\section{Transformation of Eta and Theta Functions}\label{appendix:modular_transformation}

We reproduce here modular transformation properties of some modular forms used in the text.

For $M=\begin{pmatrix}a & b\\ c & d \end{pmatrix}\in\SL(2,\mathbb{Z})$, the Dedekind eta function transforms as 
\begin{equation}
    \eta(M\cdot \tau) = \epsilon(M) (c\tau+d)^{\frac{1}{2}}\eta(\tau) \;,
\end{equation}
where the multiplier system $\epsilon(M)$ is given by 
\begin{align}
    \label{eq:etamultiplier}
    \epsilon(M) := \begin{cases}
        \exp\left({\dfrac{i \pi b}{12}}\right), & c = 0, \ d = 1 \\
        \exp\left(\dfrac{a+d}{24c}-\dfrac{s(d,c)}{2}-\dfrac 18 \right), & c>0
    \end{cases}~,
\end{align}
where the $s(d,c)$ is the Dedekind sum given by
\begin{align}
    s(d,c) = \sum_{n \textrm{ mod } c} ((n/c)) ((dn/c)) \;,
\end{align}
and $((x))$ is the sawtooth function 
\begin{align}
    ((x)) & := 
    \begin{cases}%
     x - [x] -\frac{1}{2} & x \in \mathbb{R}\backslash \mathbb{Z} \;,\\
     0 & x \in \mathbb{Z} \;,
    \end{cases} 
\end{align}
with $[x]$ being the integer part of $x$.\footnote{A study of Dedekind sums and their relation to the study of the eta function can be found in \cite{rademacher1972dedekind}.}

Consider the Jacobi theta function with characteristics
\begin{align}
    \vartheta\begin{bmatrix}\alpha\\  \beta \end{bmatrix}(\tau) := \sum_{n\in\mathbb{Z}} e^{i\pi (n+\alpha)^2\tau}e^{2\pi i (n+\alpha)\beta}.
\end{align}
Under an element of $M= \begin{pmatrix}a & b\\ c & d \end{pmatrix}\in\SL(2, \mathbb{Z})$ with $c>0$, we have the transformation law 
\begin{align}
    \vartheta\begin{bmatrix}\alpha + \frac{1}{2} \\ \beta + \frac{1}{2} \end{bmatrix}(\tau) 
    = \frac{\epsilon(M)^{-3} \myS(M,\alpha,\beta)^{-1}}{ \sqrt{c\tau+d}}\,  \vartheta\begin{bmatrix} a\alpha + c\beta + \frac{1}{2}\\   b\alpha + d\beta +\frac{1}{2}\end{bmatrix}(M\cdot \tau) \;,
\end{align}
where the characteristic-dependent phase is defined as in \eqref{myS_def}.
This implies
\begin{equation}
    \frac{\vartheta\begin{bmatrix}\alpha+\frac{1}{2}\\\beta+\frac{1}{2}\end{bmatrix}(\tau)}{\eta(\tau)} 
    = \epsilon(M)^{-2} \myS(M,\alpha,\beta)^{-1} \frac{\vartheta\begin{bmatrix}a\alpha + c\beta+\frac{1}{2}\\ b\alpha + d\beta+\frac{1}{2}\end{bmatrix}(M\cdot\tau)}{\eta(M\cdot\tau)} \;,
\end{equation}

\bibliographystyle{ytphys}
\bibliography{narain_orb}

\providecommand{\href}[2]{#2}\begingroup\raggedright\begin{thebibliography}{10}

\bibitem{Afkhami-Jeddi:2020ezh}
N.~Afkhami-Jeddi, H.~Cohn, T.~Hartman, and A.~Tajdini, ``{Free partition
  functions and an averaged holographic duality},''
  \href{http://arxiv.org/abs/2006.04839}{{\ttfamily arXiv:2006.04839
  [hep-th]}}.

\bibitem{Maloney:2020nni}
A.~Maloney and E.~Witten, ``{Averaging over Narain moduli space},''
  \href{http://dx.doi.org/10.1007/JHEP10(2020)187}{{\em JHEP} {\bfseries 10}
  (2020) 187}, \href{http://arxiv.org/abs/2006.04855}{{\ttfamily
  arXiv:2006.04855 [hep-th]}}.

\bibitem{Perez:2020klz}
A.~P\'erez and R.~Troncoso, ``{Gravitational dual of averaged free
  CFT\textquoteright{}s over the Narain lattice},''
  \href{http://dx.doi.org/10.1007/JHEP11(2020)015}{{\em JHEP} {\bfseries 11}
  (2020) 015}, \href{http://arxiv.org/abs/2006.08216}{{\ttfamily
  arXiv:2006.08216 [hep-th]}}.

\bibitem{Dymarsky:2020bps}
A.~Dymarsky and A.~Shapere, ``{Solutions of modular bootstrap constraints from
  quantum codes},'' \href{http://arxiv.org/abs/2009.01236}{{\ttfamily
  arXiv:2009.01236 [hep-th]}}.

\bibitem{Dymarsky:2020pzc}
A.~Dymarsky and A.~Shapere, ``{Comments on the holographic description of
  Narain theories},'' \href{http://arxiv.org/abs/2012.15830}{{\ttfamily
  arXiv:2012.15830 [hep-th]}}.

\bibitem{Meruliya:2021utr}
V.~Meruliya, S.~Mukhi, and P.~Singh, ``{Poincar\'e Series, 3d Gravity and
  Averages of Rational CFT},''
  \href{http://arxiv.org/abs/2102.03136}{{\ttfamily arXiv:2102.03136
  [hep-th]}}.

\bibitem{Datta:2021ftn}
S.~Datta, S.~Duary, P.~Kraus, P.~Maity, and A.~Maloney, ``{Adding Flavor to the
  Narain Ensemble},'' \href{http://arxiv.org/abs/2102.12509}{{\ttfamily
  arXiv:2102.12509 [hep-th]}}.

\bibitem{Benjamin:2021wzr}
N.~Benjamin, C.~A. Keller, H.~Ooguri, and I.~G. Zadeh, ``{Narain to Narnia},''
  \href{http://arxiv.org/abs/2103.15826}{{\ttfamily arXiv:2103.15826
  [hep-th]}}.

\bibitem{Meruliya:2021lul}
V.~Meruliya and S.~Mukhi, ``{AdS$_3$ Gravity and RCFT Ensembles with Multiple
  Invariants},'' \href{http://arxiv.org/abs/2104.10178}{{\ttfamily
  arXiv:2104.10178 [hep-th]}}.

\bibitem{Eberhardt:2021jvj}
L.~Eberhardt, ``{Summing over Geometries in String Theory},''
  \href{http://arxiv.org/abs/2102.12355}{{\ttfamily arXiv:2102.12355
  [hep-th]}}.

\bibitem{Kames-King:2023fpa}
J.~Kames-King, A.~Kanargias, B.~Knighton, and M.~Usatyuk, ``{The Lion, the
  Witch, and the Wormhole: Ensemble averaging the symmetric product
  orbifold},'' \href{http://arxiv.org/abs/2306.07321}{{\ttfamily
  arXiv:2306.07321 [hep-th]}}.

\bibitem{Saad:2019lba}
P.~Saad, S.~H. Shenker, and D.~Stanford, ``{JT gravity as a matrix integral},''
  \href{http://arxiv.org/abs/1903.11115}{{\ttfamily arXiv:1903.11115
  [hep-th]}}.

\bibitem{Stanford:2019vob}
D.~Stanford and E.~Witten, ``{JT Gravity and the Ensembles of Random Matrix
  Theory},'' \href{http://arxiv.org/abs/1907.03363}{{\ttfamily arXiv:1907.03363
  [hep-th]}}.

\bibitem{Ashwinkumar:2023jtz}
M.~Ashwinkumar, J.~M. Leedom, and M.~Yamazaki, ``{Duality Origami: Emergent
  Ensemble Symmetries in Holography and Swampland},''
  \href{http://arxiv.org/abs/2305.10224}{{\ttfamily arXiv:2305.10224
  [hep-th]}}.

\bibitem{Heckman:2021vzx}
J.~J. Heckman, A.~P. Turner, and X.~Yu, ``{Disorder averaging and its UV
  discontents},'' \href{http://dx.doi.org/10.1103/PhysRevD.105.086021}{{\em
  Phys. Rev. D} {\bfseries 105} no.~8, (2022) 086021},
  \href{http://arxiv.org/abs/2111.06404}{{\ttfamily arXiv:2111.06404
  [hep-th]}}.

\bibitem{Collier:2022emf}
S.~Collier and E.~Perlmutter, ``{Harnessing S-duality in $ \mathcal{N} $ = 4
  SYM \& supergravity as SL(2, \ensuremath{\mathbb{Z}})-averaged strings},''
  \href{http://dx.doi.org/10.1007/JHEP08(2022)195}{{\em JHEP} {\bfseries 08}
  (2022) 195}, \href{http://arxiv.org/abs/2201.05093}{{\ttfamily
  arXiv:2201.05093 [hep-th]}}.

\bibitem{Baume:2023kkf}
F.~Baume, J.~J. Heckman, M.~H\"ubner, E.~Torres, A.~P. Turner, and X.~Yu,
  ``{SymTrees and Multi-Sector QFTs},''
  \href{http://arxiv.org/abs/2310.12980}{{\ttfamily arXiv:2310.12980
  [hep-th]}}.

\bibitem{Gukov:2004id}
S.~Gukov, E.~Martinec, G.~W. Moore, and A.~Strominger,
  \href{http://dx.doi.org/10.1142/9789812775344_0036}{``{Chern-Simons gauge
  theory and the AdS(3) / CFT(2) correspondence},''} in {\em {From Fields to
  Strings: Circumnavigating Theoretical Physics: A Conference in Tribute to Ian
  Kogan}}, pp.~1606--1647.
\newblock 3, 2004.
\newblock \href{http://arxiv.org/abs/hep-th/0403225}{{\ttfamily
  arXiv:hep-th/0403225}}.

\bibitem{Ashwinkumar:2021kav}
M.~Ashwinkumar, M.~Dodelson, A.~Kidambi, J.~M. Leedom, and M.~Yamazaki,
  ``{Chern-Simons Invariants from Ensemble Averages},''
  \href{http://arxiv.org/abs/2104.14710}{{\ttfamily arXiv:2104.14710
  [hep-th]}}.

\bibitem{Narain:1985jj}
K.~S. Narain, ``{New Heterotic String Theories in Uncompactified Dimensions
  \ensuremath{<} 10},''
  \href{http://dx.doi.org/10.1016/0370-2693(86)90682-9}{{\em Phys. Lett. B}
  {\bfseries 169} (1986) 41--46}.

\bibitem{Narain:1986am}
K.~S. Narain, M.~H. Sarmadi, and E.~Witten, ``{A Note on Toroidal
  Compactification of Heterotic String Theory},''
  \href{http://dx.doi.org/10.1016/0550-3213(87)90001-0}{{\em Nucl. Phys. B}
  {\bfseries 279} (1987) 369--379}.

\bibitem{Dong:2021wot}
J.~Dong, T.~Hartman, and Y.~Jiang, ``{Averaging over moduli in deformed WZW
  models},'' \href{http://dx.doi.org/10.1007/JHEP09(2021)185}{{\em JHEP}
  {\bfseries 09} (2021) 185}, \href{http://arxiv.org/abs/2105.12594}{{\ttfamily
  arXiv:2105.12594 [hep-th]}}.

\bibitem{Antinucci:2023uzq}
A.~Antinucci, G.~Galati, G.~Rizi, and M.~Serone, ``{Symmetries and topological
  operators, on average},'' \href{http://arxiv.org/abs/2305.08911}{{\ttfamily
  arXiv:2305.08911 [hep-th]}}.

\bibitem{Nilles:2021glx}
H.~P. Nilles, S.~Ramos-Sanchez, A.~Trautner, and P.~K.~S. Vaudrevange,
  ``{Orbifolds from Sp(4,Z) and their modular symmetries},''
  \href{http://dx.doi.org/10.1016/j.nuclphysb.2021.115534}{{\em Nucl. Phys. B}
  {\bfseries 971} (2021) 115534},
  \href{http://arxiv.org/abs/2105.08078}{{\ttfamily arXiv:2105.08078
  [hep-th]}}.

\bibitem{Siegel_Lecture}
C.~L. Siegel, ``{Lectures on Quadratic Forms},''.
  \url{http://www.math.tifr.res.in/~publ/ln/tifr07.pdf}.

\bibitem{sylvester1852xix}
J.~J. Sylvester, ``A demonstration of the theorem that every homogeneous
  quadratic polynomial is reducible by real orthogonal substitutions to the
  form of a sum of positive and negative squares,'' {\em The London, Edinburgh,
  and Dublin Philosophical Magazine and Journal of Science} {\bfseries 4}
  no.~23, (1852) 138--142.

\bibitem{Wendland:2000ye}
K.~Wendland, ``{Moduli spaces of unitary conformal field theories},'' other
  thesis, University of Bonn, 9, 2000.

\bibitem{Kidambi:2022wvh}
A.~Kidambi, M.~Okada, and T.~Watari, ``{Towards Hodge Theoretic
  Characterizations of 2d Rational SCFTs},''
  \href{http://arxiv.org/abs/2205.10299}{{\ttfamily arXiv:2205.10299
  [hep-th]}}.

\bibitem{Kapustin:2000aa}
A.~Kapustin and D.~Orlov, ``{Vertex algebras, mirror symmetry, and D-branes:
  The Case of complex tori},''
  \href{http://dx.doi.org/10.1007/s00220-002-0755-7}{{\em Commun. Math. Phys.}
  {\bfseries 233} (2003) 79--136},
  \href{http://arxiv.org/abs/hep-th/0010293}{{\ttfamily arXiv:hep-th/0010293}}.

\bibitem{Kontsevich:2000yf}
M.~Kontsevich and Y.~Soibelman, ``{Homological mirror symmetry and torus
  fibrations},'' in {\em {KIAS Annual International Conference on Symplectic
  Geometry and Mirror Symmetry}}, pp.~203--263.
\newblock 11, 2000.
\newblock \href{http://arxiv.org/abs/math/0011041}{{\ttfamily
  arXiv:math/0011041}}.

\bibitem{Moriwaki:2020cxf}
Y.~Moriwaki, ``{Full vertex algebra and non-perturbative current-current
  deformation of 2d CFT},'' \href{http://arxiv.org/abs/2007.07327}{{\ttfamily
  arXiv:2007.07327 [math.QA]}}.

\bibitem{MR0165033}
A.~Weil, ``Sur certains groupes d'op\'{e}rateurs unitaires,''
  \href{https://doi.org/10.1007/BF02391012}{{\em Acta Math.} {\bfseries 111}
  (1964) 143--211}.

\bibitem{Moore:1988qv}
G.~W. Moore and N.~Seiberg, ``{Classical and Quantum Conformal Field Theory},''
  \href{http://dx.doi.org/10.1007/BF01238857}{{\em Commun. Math. Phys.}
  {\bfseries 123} (1989) 177}.

\bibitem{Narain:1986qm}
K.~S. Narain, M.~H. Sarmadi, and C.~Vafa, ``{Asymmetric Orbifolds},''
  \href{http://dx.doi.org/10.1016/0550-3213(87)90228-8}{{\em Nucl. Phys. B}
  {\bfseries 288} (1987) 551}.

\bibitem{Senda:1987pf}
I.~Senda and A.~Sugamoto, ``{Orbifold Models and Modular Transformation},''
  \href{http://dx.doi.org/10.1016/0550-3213(88)90245-3}{{\em Nucl. Phys. B}
  {\bfseries 302} (1988) 291}.

\bibitem{MR0440554}
J.~W. Milnor and J.~D. Stasheff, {\em Characteristic classes}.
\newblock Princeton University Press, Princeton, N. J.; University of Tokyo
  Press, Tokyo, 1974.
\newblock Annals of Mathematics Studies, No. 76.

\bibitem{MR0145525}
N.~E. Steenrod, {\em Cohomology operations}.
\newblock Lectures by N. E. Steenrod written and revised by D. B. A. Epstein.
  Annals of Mathematics Studies, No. 50. Princeton University Press, Princeton,
  N.J., 1962.

\bibitem{MR67930}
C.~L. Siegel, ``Indefinite quadratische {F}ormen und {F}unktionentheorie.
  {I},'' \href{https://doi.org/10.1007/BF01343549}{{\em Math. Ann.} {\bfseries
  124} (1951) 17--54}.

\bibitem{Maldacena:1998bw}
J.~M. Maldacena and A.~Strominger, ``{AdS(3) black holes and a stringy
  exclusion principle},''
  \href{http://dx.doi.org/10.1088/1126-6708/1998/12/005}{{\em JHEP} {\bfseries
  12} (1998) 005}, \href{http://arxiv.org/abs/hep-th/9804085}{{\ttfamily
  arXiv:hep-th/9804085}}.

\bibitem{Dijkgraaf:2000fq}
R.~Dijkgraaf, J.~M. Maldacena, G.~W. Moore, and E.~P. Verlinde, ``{A Black hole
  Farey tail},'' \href{http://arxiv.org/abs/hep-th/0005003}{{\ttfamily
  arXiv:hep-th/0005003}}.

\bibitem{Banados:1992wn}
M.~Banados, C.~Teitelboim, and J.~Zanelli, ``{The Black hole in
  three-dimensional space-time},''
  \href{http://dx.doi.org/10.1103/PhysRevLett.69.1849}{{\em Phys. Rev. Lett.}
  {\bfseries 69} (1992) 1849--1851},
  \href{http://arxiv.org/abs/hep-th/9204099}{{\ttfamily arXiv:hep-th/9204099}}.

\bibitem{Narain:1990mw}
K.~S. Narain, M.~H. Sarmadi, and C.~Vafa, ``{Asymmetric orbifolds: Path
  integral and operator formulations},''
  \href{http://dx.doi.org/10.1016/0550-3213(91)90145-N}{{\em Nucl. Phys. B}
  {\bfseries 356} (1991) 163--207}.

\bibitem{Dymarsky:2020qom}
A.~Dymarsky and A.~Shapere, ``{Quantum stabilizer codes, lattices, and CFTs},''
  \href{http://dx.doi.org/10.1007/JHEP03(2021)160}{{\em JHEP} {\bfseries 21}
  (2020) 160}, \href{http://arxiv.org/abs/2009.01244}{{\ttfamily
  arXiv:2009.01244 [hep-th]}}.

\bibitem{Kawabata:2023iss}
K.~Kawabata, T.~Nishioka, and T.~Okuda, ``{Narain CFTs from quantum codes and
  their $\mathbb{Z}_2$ gauging},''
  \href{http://arxiv.org/abs/2308.01579}{{\ttfamily arXiv:2308.01579
  [hep-th]}}.

\bibitem{Aharony:2023zit}
O.~Aharony, A.~Dymarsky, and A.~D. Shapere, ``{Holographic description of
  Narain CFTs and their code-based ensembles},''
  \href{http://arxiv.org/abs/2310.06012}{{\ttfamily arXiv:2310.06012
  [hep-th]}}.

\bibitem{Barbar:2023ncl}
A.~Barbar, A.~Dymarsky, and A.~D. Shapere, ``{Global Symmetries, Code
  Ensembles, and Sums Over Geometries},''
  \href{http://arxiv.org/abs/2310.13044}{{\ttfamily arXiv:2310.13044
  [hep-th]}}.

\bibitem{Furuta:2023xwl}
Y.~Furuta, ``{On the Rationality and the Code Structure of a Narain CFT, and
  the Simple Current Orbifold},''
  \href{http://arxiv.org/abs/2307.04190}{{\ttfamily arXiv:2307.04190
  [hep-th]}}.

\bibitem{MR4309840}
S.~Zemel, ``Jacobi forms of indefinite lattice index,''
  \href{https://doi.org/10.1007/s40993-021-00288-4}{{\em Res. Number Theory}
  {\bfseries 7} no.~4, (2021) Paper No. 58, 22}.

\bibitem{MR1625724}
R.~E. Borcherds, ``Automorphic forms with singularities on {G}rassmannians,''
  \href{https://doi.org/10.1007/s002220050232}{{\em Invent. Math.} {\bfseries
  132} no.~3, (1998) 491--562}.

\bibitem{Shintani1975OnCO}
T.~Shintani, ``On construction of holomorphic cusp forms of half integral
  weight,'' {\em Nagoya Mathematical Journal} {\bfseries 58} (1975) 83 -- 126.

\bibitem{Oda1977}
T.~Oda, ``On modular forms associated with indefinite quadratic forms of
  signature $(2, n-2)$.,'' \href{http://eudml.org/doc/163056}{{\em
  Mathematische Annalen} {\bfseries 231} (1977) 97--144}.

\bibitem{Shimura}
G.~Shimura, ``On modular forms of half integral weight,''
  \href{http://www.jstor.org/stable/1970831}{{\em Annals of Mathematics}
  {\bfseries 97} no.~3, (1973) 440--481}.

\bibitem{rademacher1972dedekind}
H.~Rademacher and E.~Grosswald, {\em Dedekind sums}, vol.~16.
\newblock American Mathematical Soc., 1972.

\end{thebibliography}\endgroup


\end{document}